
\documentclass[journal]{IEEEtran}

\IEEEoverridecommandlockouts                              




\usepackage{mathptmx} 
\usepackage{amsmath} 
\usepackage{amssymb}  
\usepackage[dvipsnames]{xcolor}

\usepackage{tikz} 
\usetikzlibrary{positioning, decorations.markings, arrows, arrows.meta}
\usepackage{forest} 
\usepackage{wrapfig} 
\usepackage{hyperref} 
\usepackage[flushmargin,hang]{footmisc} 
\usepackage{cite} 
\usepackage{subcaption} 

\usepackage{amsthm} 
\usepackage{cancel}
\usepackage{dirtytalk}
\usepackage{amsmath}
\usepackage{flushend}

\newtheorem{theorem}{Theorem}
\newtheorem{definition}{Definition}
\newtheorem{lemma}{Lemma}
\newtheorem{example}{Example}
\newtheorem{corollary}{Corollary}
\newtheorem{remark}{Remark}
\newtheorem{assumption}{Assumption}

\newcommand{\bt}{\mathcal{T}}

\title{\LARGE \bf
An Extended Convergence Result for Behaviour Tree Controllers
}

\author{Christopher Iliffe Sprague \qquad\qquad Petter \"Ogren
\thanks{Christopher Iliffe Sprague and Petter \"Ogren are with the Robotics, Perception and Learning Lab., School of Electrical Engineering and Computer Science, 
        Royal Institute of Technology (KTH), SE-100 44 Stockholm, Sweden, {\tt\small sprague@kth.se}}%
}

\begin{document}

\maketitle
\thispagestyle{empty}
\pagestyle{empty}

\begin{abstract}
    Behavior trees (BTs) are an optimally modular framework to assemble hierarchical hybrid control policies from a set of low-level control policies using a tree structure. Many robotic tasks are naturally decomposed into a hierarchy of control tasks, and modularity is a well-known tool for handling complexity, therefor behavior trees have garnered widespread usage in the robotics community. In this paper, we study the convergence of BTs, in the sense of reaching a desired part of the state space. Earlier results on BT convergence were often tailored to specific families of BTs, created using different design principles. The results of this paper generalize the earlier results and also include new cases of cyclic switching not covered in the literature.
\end{abstract}

\begin{IEEEkeywords}
Hybrid Logical/Dynamical Planning and Verification;
Behavior-Based Systems; Motion Control; Behavior Trees.
\end{IEEEkeywords}
    
    
    
\section{Introduction}

Behavior trees (BTs)
are a way to assemble a hierarchical hybrid control policy (HCP) from a set of low-level control policies using a tree structure.
They can be analysed at any level of hierarchy --- low-level control policies at the lowest level and modules of control policies (forming HCPs) at subsequently higher levels of hierarchy.
At every level of hierarchy, these modules interface with each other in an identical functional way.
These aspects allow individual control policies and modules to be developed and tested independently of others.

The modular development that BTs allow is, in fact, the reason they were conceived in the first place in the video game industry \cite{islahandling2005}.
In the virtual context of video games, where the world is predictable by design, the design of low-level control policies is often trivial, and
therefore, game designers found themselves looking for a modular way to compose a large set of low-level control policies into rich behaviors. 

In contrast, the robotics community faces far more challenges in the design of low-level control policies, as the real world can only be approximated by models, and tasks such as grasping are important research fields of their own. Thus, while focusing on creating control policies for solving fundamental interaction problems, robotics researchers felt less of a need for a hierarchical modular framework for combining many of such control policies.
However, with the surge in collaborative development brought forth by open-source platforms such as the Robotic Operating System (ROS), the increasingly ubiquitous presence of robots in society, and the rise of control-policy synthesis through machine learning \cite{brunke2021safe}, large sets of low-level control policies are becoming increasingly available for many robotic systems.
As a result, in recent years, the robotics community has begun to feel the same need for a modular framework as the video game industry.

A popular choice for representing HCPs has been hybrid automata (HA).
However, the discrete states of HAs are modeled by finite-state machines (FSMs), which rely on state transitions.
As such, the development of individual control policies or modules in an FSM of $N$ states requires taking into account the $N(N-1)$ possible state transitions between the $N$ states.
In contrast, modules in BTs rely on a common functional interface, with return values representing current progress as either Success, Failure or Running (see below), at every level of hierarchy, allowing each module at any level of hierarchy in any subtree to be developed individually --- the explicit transitions of a FSM are thus implicitly encoded in the functional interface of BTs \cite{colledanchise2016behavior}.
For this reason, BTs have been shown to be optimally modular \cite{biggar2020modularity}, while at the same time equally as expressive as FSMs \cite{biggar2021expressiveness}.

\begin{figure}[t]
    \centering
    \includegraphics[width=\linewidth]{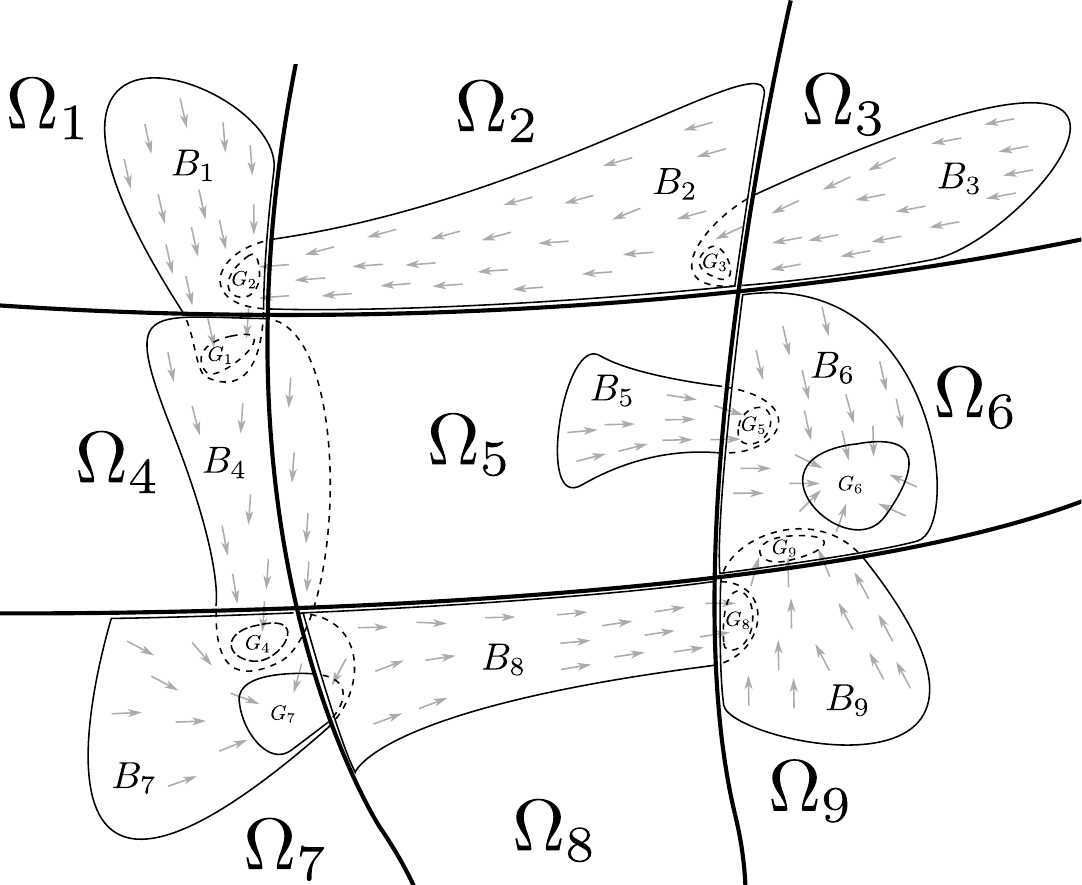}
    \caption{An example of how the operating regions $\Omega_i$ and domains of attraction $B_i$ can be arranged to make the state reach one of the desired goal regions $G_6,G_7$ from a wide variety of starting states.}
    \label{fig:operating_regions}
\end{figure}

    In the past years, BTs have garnered an increasing use in the robotics community, with over 100 papers in recent surveys \cite{iovino2020survey} and software by innovative companies such as
    Boston Dynamics\footnote{\url{https://dev.bostondynamics.com/docs/concepts/autonomy/missions_service}},     Nvidia\footnote{\url{https://docs.nvidia.com/isaac/packages/behavior_tree/doc/behavior_trees.html}}, and Google's Intrinsic.AI\footnote{\url{https://intrinsic.ai/blog/posts/introducing-intrinsic-flowstate/}}.
    
    Several BT design principles have been proposed, e.g. implicit sequences \cite{colledanchise2016behavior} and backchaining \cite{colledanchise2019towards, ogren2020convergence}, to create HCPs that are directed at achieving some top-level goals.
    These designs qualitatively resemble the notion of \say{sequential composition} from \cite{burridge1999sequential} --- the composition of low-level control policies in such a way as to enlarge the system's domain of attraction (DOA) to a goal state, as described in \cite{colledanchise2016behavior}.
    
    Relatively few works have \textit{formally} investigated the convergence properties of BT designs \cite{colledanchise2016behavior, 
    ogren2020convergence, sprague2021continuous},
    and out of these, none cover the case where there are cycles in the execution of control policies.

    \paragraph*{\textbf{Contributions}}The three main contributions of this paper are as follows. 
    \begin{enumerate}
        \item We present the first general convergence theorem that allows cycles in the execution of sub-BTs and the execution of sub-BTs outside their DOA (Theorem \ref{theorem:general_convergence}).
        \item We show how this theorem also generalizes earlier proofs in \cite{colledanchise2016behavior,ogren2020convergence,sprague2021continuous}, in detail: 
        \cite[Lemma 2, p.7]{colledanchise2016behavior}  is covered by Corollary \ref{cor:sequence},
        \cite[Lemma 3, p.8]{colledanchise2016behavior} is covered by Corollary \ref{cor:implicit_sequence},
        \cite[Theorem 4, p.6]{sprague2021continuous} is covered by Corollary \ref{cor:sprague2021continuous},
        \cite[Theorem 1, p.7]{ogren2020convergence} is covered by Corollary \ref{cor:ogren2020convergence}.
        \item Finally, the result also encompasses the recent conference paper \cite{sprague2022adding} on combining black box controllers, created by i.e., reinforcement learning, with model-based designs without losing formal guarantees on safety and goal convergence.
    \end{enumerate}
    
    The organization of this letter is as follows. First, we describe related work in Section~\ref{sec:related-work}, then Section~\ref{sec:preliminaries} provides a fairly detailed background on a formal description of BTs that will be used for the convergence analysis. The main result can be found in Section~\ref{sec:main}, followed by the conclusions in Section~\ref{sec:conclusions}.

    \section{Related Work}\label{sec:related-work}
In this section, we will describe related work in three different categories.
    \paragraph*{\textbf{Hybrid dynamical system representations}}
    A hybrid dynamical system (HDS) is a dynamical system that has both continuous and discrete dynamics.
    Among the most popular ways to represent HDSs is hybrid automata (HA),
    where continuous behavior is given by ordinary differential equations and discrete behavior is given by finite-state machines (FSMs).
    However, in recent years, BTs have become a popular way to represent HDSs due to their modularity \cite{ogren2012increasing, marzinotto2014towards,colledanchise2016behavior}.
    This modularity allows different dynamics to be inserted into the HDS without having to take into account the myriad of transitions entailed by FSMs.
    BTs have also been theoretically proven to be optimally modular reactive control architectures \cite{biggar2020modularity}.
    \emph{The modularity of BTs is the main motivation behind the contributions of this letter.}

    BTs have been informally compared to HDSs several times \cite{ogren2012increasing, marzinotto2014towards,colledanchise2016behavior}.
    The first formal comparison of BTs to HDSs was given in \cite{sprague2021continuous}, where BTs where formulated as discontinuous dynamical systems (DDSs), a subclass of HDSs.
    \emph{In this paper, we use the DDS formalism of \cite{sprague2021continuous} to analyse convergence.}

    \paragraph*{\textbf{Sequential composition}}
    As mentioned before, BTs are a way to compose together multiple different control policies.
    This concept is well-represented in the literature \cite{burridge1999sequential, conner2006integrated,tedrake2010lqr,Konidaris2012RobotLF,colledanchise2016behavior}, and is rooted in so-called sequential composition \cite{burridge1999sequential}.
    The main idea is that the execution of a control policy should lead the system's state into the domain of attraction (DOA) of another control policy.
    If a set of such pairs of control policies can be \say{backchained} from a goal region, then the overall system can be stabilized to the goal starting within the DOA of any control policy in the set.
    The collection of all such chains induces a so-called \say{prepares graph}, which represents all the possible transitions between control policies in discrete space.
    As noted in \cite{conner2006integrated}, this representation functions as a finite-state machine (FSM), which, when factoring in the continuous behavior between discrete states, functions as a hybrid automaton \cite{lygeros2003dynamical}.
    But, as noted above, BTs have advantages over FSMs in terms of modularity \cite{biggar2020modularity}.
    Luckily, BTs were shown to generalize sequential composition in \cite{colledanchise2016behavior}.
    \emph{But, unlike \cite{colledanchise2016behavior}, in this paper, we show how sequential composition can be induced by BTs with the DDS formalism of \cite{sprague2021continuous}, which allows the application of existence and uniqueness results of DDSs \cite{cortes2008discontinuous}.}

    A commonality among many works that deal with sequential composition is that the control policies used to create the composition all have some indication of which other control policy they can \say{switch} to, based on their DOA.
    A notion of composing together control policies that do not necessarily have such an indication is given in \cite{paxton2019representing}, in the context of specific BT designs, where probabilities of \say{uncontrolled switches} were analyzed.
    \emph{Unlike \cite{paxton2019representing}, we analyze uncontrolled switches in general BT designs.}

    \paragraph*{\textbf{Stability}}
    In many cases, BTs are used to model hybrid control policies (HCPs), which when imparted into a given continuous dynamical system, renders a HDS.
    Formal guarantees of stability is a key subject in HDSs \cite{branicky1998multiple,ye1998stability,liberzon2003switching,goebel2012hybrid} because it states what the system will do (i.e. through stability to a goal region) and what it will not do (i.e. through set-invariance within an obstacle-free region).
    Typically, when executing a BT as a HCP, one seeks to stabilize the system to the so-called success region, i.e. some desired region of the statespace.
    The stability (or convergence) of BTs has been studied in several works \cite{colledanchise2016behavior, ogren2020convergence,paxton2019representing,rovida2017extended}.
    However, these works only consider specific BT designs, not general ones.
    In \cite{sprague2021continuous}, stability was analyzed for general BT designs using the DDS formalism, but did not allow for cycles.
    \emph{In this paper, we show how the presented theorem (Theorem \ref{theorem:general_convergence}) generalizes the stability theorems of
    \cite[Lemma 2, p.7]{colledanchise2016behavior},
    \cite[Lemma 3, p.8]{colledanchise2016behavior},
    \cite[Theorem 4, p.6]{sprague2021continuous}, and 
    \cite[Theorem 1, p.7]{ogren2020convergence}.}



    \section{Preliminaries}\label{sec:preliminaries}
    
    In this section, we will summarize the formal description of BTs presented in   \cite{sprague2021continuous}. As BTs can be seen as ordered trees we first review some key notation and concepts from that area, and then move on to the actual BTs.
    
    \subsection{Ordered trees}\label{sec:ordered-trees}
    A BT is structurally represented by an ordered tree \cite{kuboyama2007matching}, a rooted tree in which, in addition to
    an ordering between ancestors (parents, grandparents, and so on), there is a specified ordering between siblings (children of the same parent) \cite{sprague2021continuous}.
    
    One way to specify these orderings is by designating \textit{two} sets of edges in the directed-graph formalism.
    Given a finite set of vertices $V$, we would then specify a set of edges from children to parent $E_P \subset V^2$ 
    and from sibling to sibling $E_S \subset V^2$.
    If $(V, E_S)$ is a directed acyclical graph (DAG), $(V, E_P)$ is a rooted tree (also a DAG), and $E_P \cap E_S = \emptyset$ (parents cannot be siblings with children), then $(V, E_P, E_S)$ is an ordered tree, as illustrated in Fig.~\ref{fig:ordered_tree}. Note that the drawing of a tree $(V, E_P)$ on paper implicitly creates a sibling order (with e.g., siblings ordered from left to right), but this order is not present in $(V, E_P)$ itself.
    
        \begin{figure}[!h]
        \centering
        \begin{forest}
            for tree={
                minimum height=2em,
                minimum width=2em,
                s sep=6mm,
                edge={<-}
            }
            [A, draw
                [B, draw, name=B
                    [E, draw, name=E],
                    [F, draw, name=F]
                ],
                [C, draw, name=C
                    [G, draw, name=G],
                    [H, draw, name=H]
                ],
                [D, draw, name=D]
            ]
            \draw[double, ->] (B) -- (C);
            \draw[double, ->] (C) -- (D);
            \draw[double, ->] (E) -- (F);
            \draw[double, ->] (G) -- (H);
        \end{forest}
               \caption{An ordered tree written as two DAGs, with children-to-parent edges, $E_P$, indicated by single stroke lines and sibling-to-sibling edges, $E_S$, indicated by double-stroked lines.}
        \label{fig:ordered_tree}
    \end{figure}
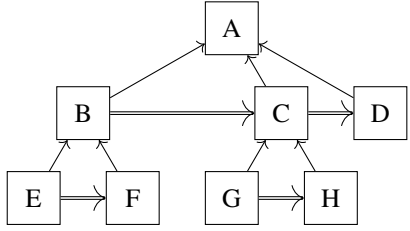

    Another way to specify these orderings can be created using \textit{partial orders}.
    Following \cite{sprague2021continuous}, a partial order can be obtained from the reflexive-transitive closure of a DAG's edges.
    Thus, by doing so on the edges, $E_P$ and $E_S$, we obtain a \textit{parent order} $\leq_P \subset V^2$ and a \textit{sibling order} $\leq_S \subset V^2$, respectively (also known as reachability relations on $E_P$ and $E_S$).
    An ordered tree could then be defined as $(V, \leq_P, \leq_S)$, a set of vertices, and two partial orders.

    Again, looking at the example in Fig.~\ref{fig:ordered_tree} we have that $B\leq_P E$ since there is an edge in $E_P$ from $E$ to $B$. Similarly, we have that $A\leq_P B$ and $A\leq_P C$ and so on. But we also have $A\leq_P E$ due to the transitive closure (see below), even though $(A,E) \not \in E_P$
  and   $A\leq_P A$ due to the reflexive closure (see below),  even though $(A,A) \not \in E_P$.
Similarly, for the sibling order we get $B\leq_S C \leq_S D$, $E\leq_S F$ and so on.
    
    Formally, a partial order $\leq$ on a set $V$ (such that $\leq \subset V^2$) is a homogeneous binary relation (if $(i,j) \in \leq$ we write $i \leq j$) that is 
    \textit{reflexive} ($\forall i \in V : i \leq i$), 
    \textit{antisymmetric} ($\forall i,j \in V : (i \leq j) \land (j \leq i) \implies i = j$), 
    and \textit{transitive} ($\forall i,j,k \in V: (i \leq j) \land (j \leq k) \implies i \leq k$).
    It is ``partial'' because there can exist pairs $(i, j) \in V^2$ for which neither $i \leq j$ nor $j \leq i$, in which case such pairs are \textit{incomparable}.
    If a pair is not incomparable, then it is \textit{comparable}.
    If all pairs in $V$ are comparable, then the partial order $\leq$ is more specifically a \textit{total order}.
    
    A subset of $V$ for which all pairs are comparable (resp. incomparable) with $\leq$ is a \textit{chain} (resp. \textit{antichain}) with respect to $\leq$.
    A chain (resp. antichain) is \textit{maximal} if it is not a proper subset of any other chain (resp. antichain).
    
       Thus, in Fig.~\ref{fig:ordered_tree} we have that $A, B, E$ is a maximal chain with respect to $\leq_P$, and $B, C$ is a chain (but not maximal) with respect to $\leq_S$. Furthermore, $B,G$ are not comparable, using either one of $\leq_P$ and  $\leq_S$.
    
    Where convenient, we will also use the corresponding \textit{converse relation} ($\geq$) and \textit{strict} orders ($<$ and $>$) of the relations above.
    If $i \leq j$, we can write $j \geq i$.
    If $i \leq j$ and $i \neq j$, we can write $i < j$ or $j > i$.
    
    The parent and sibling orders allow us to make the following distinctions.
    For any $i,j \in V$, if $i <_P j$ (resp. $i >_P j$), then $i$ is said to be an \textit{ancestor} of $j$ (resp. \textit{descendant}); if $i <_S j$ (resp. $i >_S j$), then $i$ is said to be a \textit{left-sibling} (resp. \textit{right-sibling}) of $j$.
    Their strict orders are \textit{complementary} to each other, meaning that if $(i,j)$ is comparable with $<_S$ (resp. $<_P$) then it is incomparable with $<_P$ (resp. $<_S$).
    
    Other orders can be created through \textit{compositions}.
    The composition of two relations is defined as $$\leq_A \circ \leq_B := \left\{(i,k)  \in V^2 \middle| \exists j : \left(i \leq_A j\right) \land \left(j \leq_B k\right)\right\}.$$
    
    Some composition orders that will be useful in the following are the
    left-uncle order $<_{LU} := <_S \circ \leq_P$,  the right-uncle order $>_{RU} := >_S \circ \leq_P$, the left-to-right order $<_{LR} := \geq_P \circ <_{LU}$, and the right-to-left order $>_{RL} := \geq_P \circ >_{RU}$, which are illustrated in Fig.~\ref{fig:uncle-orders}, and used
    e.g. Equation (\ref{eq:influence}) below.

    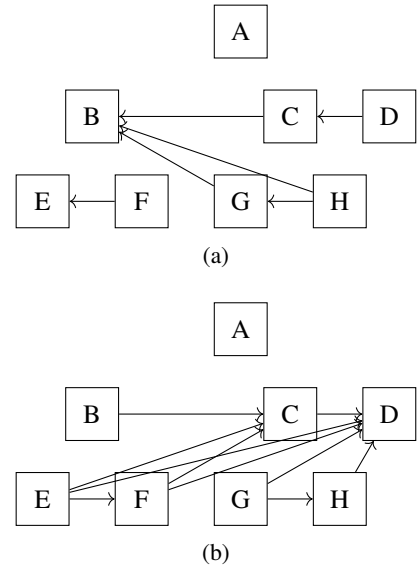
\begin{figure}[ht]
        \centering
        \begin{subfigure}[]{\columnwidth}
            \centering
        \begin{forest}
            for tree={
                minimum height=2em,
                minimum width=2em,
                s sep=6mm,
                edge={<-}
            }
            [A, draw
                [B, draw, name=B, no edge
                    [E, draw, name=E, no edge],
                    [F, draw, name=F, no edge]
                ],
                [C, draw, name=C, no edge
                    [G, draw, name=G, no edge],
                    [H, draw, name=H, no edge]
                ],
                [D, draw, name=D, no edge]
            ]
            \draw[->] (C) -- (B);
            \draw[->] (D) -- (C);
            \draw[->] (F) -- (E);
            \draw[->] (H) -- (G);
            \draw[->] (G) -- (B);
            \draw[->] (H) -- (B);
        \end{forest}
            \caption{}
        \end{subfigure}\\
        \par\bigskip
        \begin{subfigure}{\columnwidth}
            \centering
        \begin{forest}
            for tree={
                minimum height=2em,
                minimum width=2em,
                s sep=6mm,
                edge={<-}
            }
            [A, draw
                [B, draw, name=B, no edge
                    [E, draw, name=E, no edge],
                    [F, draw, name=F, no edge]
                ],
                [C, draw, name=C, no edge
                    [G, draw, name=G, no edge],
                    [H, draw, name=H, no edge]
                ],
                [D, draw, name=D, no edge]
            ]
            \draw[->] (B) -- (C);
            \draw[->] (C) -- (D);
            \draw[->] (E) -- (F);
            \draw[->] (G) -- (H);
            \draw[->] (E) -- (C);
            \draw[->] (E) -- (D);
            \draw[->] (F) -- (C);
            \draw[->] (F) -- (D);
            \draw[->] (G) -- (D);
            \draw[->] (H) -- (D);
        \end{forest}
        \caption{}
        \end{subfigure}
        \caption{Illustrations of the left-uncle order $<_{LU}$ (a) and the right-uncle order $>_{RU}$ (b), where the edges represent their reflexive-transitive reduction.}
        \label{fig:uncle-orders}
    \end{figure}
    
    These compositions yield strict partial orders;
    we define their corresponding partial orders as their reflexive closure, denoted by:
    $\leq_{LU}$, $\leq_{RU}$, $\leq_{LR}$, and $\leq_{RL}$. 
    
    Apart from the orders defined above, the parent map $p(i) := \sup_{\leq_P} \left\{j \middle| j <_P i\right\}$ will also be used.
    
       In Fig.~\ref{fig:uncle-orders} we illustrate the composite orders, given the ordered tree in  Fig.~\ref{fig:ordered_tree}. As can be seen,
       we have that $B <_{LU} H$, but since the non-strict $\leq_P$ is part of the definition of $<_{LU}$ we also have that 
   $G <_{LU} H$. Furthermore, regarding the left-to-right order we have that $E <_{LR} G <_{LR} H <_{LR} D$, but any pair that is comparable using $\leq_P$ are not comparable using $<_{LR}$, such as e.g. $A,C$.

    \subsection{Behavior Trees}\label{sec:bt-dynamical-system}
    
    We will now provide the formal description of BTs, following \cite{sprague2021continuous}.
    Throughout this paper, let $\mathbb{R}^n$ be the state space, $x \in \mathbb{R}^n$ be a state, and $\mathbb{R}^m$ be the control space with $u \in \mathbb{R}^m$.
    
    
    \begin{definition}[Behavior Tree]\label{def:behavior}
        A function 
        $\bt_i: \mathbb{R}^n \to \mathbb{R}^m \times \{\mathcal{R},\mathcal{S},\mathcal{F}\}$
        defined as
        \begin{equation}\label{eq:behavior}
            \bt_i\left(x\right) :=\left(u_i\left(x\right) , r_i\left(x\right) \right)
        \end{equation}
        where $i \in V$ is an index in an ordered tree,
        $u_i: \mathbb{R}^n \rightarrow  \mathbb{R}^m$ is a controller,
        and $r_i: \mathbb{R}^n \rightarrow  \{\mathcal{R},\mathcal{S},\mathcal{F}\}$
        is a metadata function, describing the progress of the controller in terms of the 
        outputs:
        running ($\mathcal{R}$),
        success ($\mathcal{S}$), 
        and
        failure ($\mathcal{F}$).
        Define the metadata regions as the running, success, and failure regions:
        \begin{equation}\label{eq:metadata-regions}
    \begin{gathered}
        R_i :=\left\{x: r_i(x)=\mathcal{R} \right\}, \\
        S_i :=\left\{x: r_i(x)=\mathcal{S} \right\}, \quad F_i :=\left\{x: r_i(x)=\mathcal{F} \right\},
    \end{gathered}
        \end{equation}
        respectively,
        which are pairwise disjoint and cover $\mathbb{R}^n$.
    \end{definition}
    
    
    Note that in an ordered tree, every vertex $i \in V$ can be seen as the root of a subtree, consisting of all its descendants $\{j:i <_P j\}$. Thus, as above, $i \in V$ can also be seen as an index of a (sub)tree.
    As such, let us designate the \textit{root BT} as $\bt_0 = \bt_i$ (has index $0$) such that $\{j : j <_P i\} = \emptyset$,
    and \textit{sub BTs} as all other BTs $\bt_i$ such that $i \neq 0$.
    
    The intuition of the metadata regions (\ref{eq:metadata-regions}) is as follows.
    If $x \in S_i$, then $\bt_i$ has either succeeded in achieving its goal
    (e.g. opening a door) or the goal was already achieved (the door was already open).
    Either way, it might make sense to execute another BT in sequence to achieve another goal (perhaps a goal that is intended to be achieved after opening the door).
    
    If $x \in F_i$, then $\bt_i$ has either failed (the door to be opened turned out to be locked), or has no chance at succeeding (the door is impossible to reach from the current position).
    Either way, it might make sense to execute another BT as a fallback (either to open the door in another way or to achieve a higher-level goal in a way that does not involve opening the door).
    
    If $x \in R_i$, then it is too early to determine if $\bt_i$ will succeed or fail.
    In most cases, it makes sense to continue executing $\bt_i$, but it could also be reasonable to execute another BT if some other goal is more important (e.g. low battery levels indicate the need to recharge).
    
    Above, we use the term \say{execution} to mean the use of a BT's controller $u_i$ in some underlying \textit{autonomous dynamical system}, e.g. a robot.
    Thus, we have the following definition of a \textit{BT execution}.
    
    \begin{definition}[BT execution]\label{def:bt_execution}
        Given an autonomous dynamical system 
        $f : \mathbb{R}^n \times \mathbb{R}^m \to \mathbb{R}^n$ that is to be controlled, 
        the BT execution of $\bt_i$ is given in continuous-time as
        \begin{equation}
            \label{eq:bt_execution}
            \dot x = f\left(x,u_i\left(x\right)\right),
        \end{equation}
        and for discrete-time as
        \begin{equation}
            \label{eq:bt_execution_discrete}
             x_{k + 1} = x_k + f\left(x_k,u_i\left(x_k\right)\right).
        \end{equation}
        In both cases, $u_i$ is given by (\ref{eq:behavior}).
    \end{definition}
    
    As shown in \cite{sprague2021continuous}, the continuous-time BT execution (\ref{eq:bt_execution})
    can be characterized by a \textit{discontinuous dynamical system} (DDS) defined over a finite set of so-called \textit{operating regions} \cite[Theorem 2, p.5]{sprague2021continuous}, with corresponding results regarding existence and uniqueness of its solutions \cite[Theorem 3, p.6]{sprague2021continuous}.
    These operating regions (with identical definitions for both continuous and discrete time dynamics found below) arise from the switching among BTs invoked by \textit{BT compositions}, of which there exist two fundamental types: \textit{Sequences} and \textit{Fallbacks}.
    
    A Sequence is a BT that composes together sub-BTs that are to be executed in a sequence where each one requires the success of the previous.
    It will succeed only if all sub-BTs succeed, whereas, if any one sub-BT fails, it will fail.
    This behavior is formalized in terms of the sub-BT's metadata regions in the following definition.
    
    \begin{definition}[Sequence]
        A functional
        that composes an arbitrarily finite sequence of 
        $M \in \mathbb{N}$ BTs
        into a new BT:
        \begin{equation}\label{eq:sequence}
            Seq\left[\bt_1, \dots, \bt_M\right]\left(x\right) :=
            \begin{cases}
            \bt_M\left(x\right) & \text{if} \quad x \in S_1 \cap \dots S_{M-1} \\
            \vdots & \vdots \\
            \bt_2\left(x\right) & \text{else-if} \quad  x \in S_1 \\
            \bt_1\left(x\right) & \text{else}.
            \end{cases}
        \end{equation}
        If $\bt_i = Seq[\bt_1, \dots, \bt_M]$,
        then $j,k \in \{1, \dots, M\}$
        are the children of $i$,
        such that $p(j) = i$,
        and are related as siblings, by $j \leq_S k$ if $j \leq k$.
    \end{definition}
    
    A Fallback, on the other hand, is a BT that composes together sub-BTs that are to be executed as a fallback to one another, where each one is only executed in the case of a failure of the previous.
    It will fail only if all sub-BTs fail, whereas, if any one sub-BT succeeds, it will succeed.
    This behavior is formalized in terms of the sub-BT's metadata regions in the following definition.
    
    \begin{definition}[Fallback]
        A functional
        that composes an arbitrarily finite sequence of 
        $M \in \mathbb{N}$ BTs
        into a new BT:
        \begin{equation}\label{eq:fallback}
            Fal\left[\bt_1, \dots, \bt_M\right]\left(x\right) :=
            \begin{cases}
            \bt_M\left(x\right) & \text{if} \quad x \in F_1 \cap \dots F_{M-1} \\
            \vdots & \vdots \\
            \bt_2\left(x\right) & \text{else-if} \quad  x \in F_1 \\
            \bt_1\left(x\right) & \text{else}.
            \end{cases}
        \end{equation}
        If $\bt_i = Fal[\bt_1, \dots, \bt_M]$,
        then $j,k \in \{1, \dots, M\}$
        are the children of $i$,
        such that $p(j) = i$,
        and are related as siblings, by $j \leq_S k$ if $j \leq k$.
    \end{definition}
    
    \begin{lemma}\label{lemma:metadata-sequence}
        The metadata regions of a Sequence $\bt_i$ can be computed from the children's metadata regions as follows:
        
        \begin{equation}\label{eq:sequence-metadata}
        \begin{gathered}
            R_i =\bigcup_{p\left(j\right) = i}\left( R_j \bigcap_{k <_S j} S_k\right), \\
            S_i = \bigcap_{p\left(j\right) = i} S_j,\quad F_i = \bigcup_{p\left(j\right) = i}\left( F_j \bigcap_{k <_S j} S_k \right).
        \end{gathered}
        \end{equation}
    \end{lemma}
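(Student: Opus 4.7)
The plan is to partition $\mathbb{R}^n$ by \emph{execution region} --- i.e., for each $j$, identify the set $E_j$ of states on which the cascade in (\ref{eq:sequence}) actually invokes $\bt_j$ --- and then read off $R_i$, $S_i$, $F_i$ by intersecting each $E_j$ with the appropriate metadata region of $\bt_j$. Unwinding the \texttt{if}/\texttt{else-if}/\texttt{else} structure of (\ref{eq:sequence}) gives
\[
E_1 = S_1^c, \quad E_j = S_j^c \cap \bigcap_{k<j} S_k \ (1<j<M), \quad E_M = \bigcap_{k<M} S_k,
\]
and a quick check shows these are pairwise disjoint and cover $\mathbb{R}^n$.

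Since $Seq[\bt_1,\dots,\bt_M](x) = \bt_j(x)$ on $E_j$, we have $r_i(x) = r_j(x)$ there, so each of $R_i$, $S_i$, $F_i$ is a disjoint union $\bigcup_j (E_j \cap M_j)$ with $M_j$ ranging over $R_j$, $S_j$, $F_j$ respectively. The next step is to simplify these intersections using the pairwise disjointness of $R_j$, $S_j$, $F_j$ guaranteed by Definition \ref{def:behavior}. For $j<M$, the factor $S_j^c$ in $E_j$ annihilates the success intersection ($E_j \cap S_j = \emptyset$) but leaves $E_j \cap R_j = R_j \cap \bigcap_{k<j} S_k$ and $E_j \cap F_j = F_j \cap \bigcap_{k<j} S_k$, because $R_j, F_j \subseteq S_j^c$. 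For $j = M$ there is no $S_M^c$ factor, so all three intersections reduce to $M_M \cap \bigcap_{k<M} S_k$.

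Finally, rewriting $k < j$ as the sibling order $k <_S j$ and letting $j$ range over the children of $i$ (as per the sibling convention stated in the definition of a Sequence) produces the three formulas in (\ref{eq:sequence-metadata}); in particular, the success region collapses to the single term $\bigcap_{p(j)=i} S_j$ because only the last child can contribute. There is no real obstacle here; the one thing that needs care is the bookkeeping observation that pairwise disjointness of $R_j, S_j, F_j$ plays a double role --- it kills the spurious success contributions from $j<M$, and it simultaneously lets the factor $S_j^c$ inside $E_j$ be dropped in the running and failure expressions, via $R_j \cap S_j^c = R_j$ and $F_j \cap S_j^c = F_j$.
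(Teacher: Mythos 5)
Your proof is correct. Note that the paper itself does not prove this lemma at all --- it simply defers to an external citation (Lemma~1 of the cited reference on continuous-time BTs) --- so there is no in-paper argument to compare against; your derivation (unfolding the \texttt{if}/\texttt{else-if}/\texttt{else} cascade of (\ref{eq:sequence}) into the disjoint execution regions $E_j$, intersecting with each child's metadata regions, and using pairwise disjointness of $R_j, S_j, F_j$ both to kill the spurious success terms for $j<M$ and to absorb the $S_j^c$ factors) is exactly the standard argument that the cited source carries out, and it is complete as written.
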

    \begin{proof}
    See \cite[Lemma 1, p.4]{sprague2021continuous}.
    \end{proof}
    
    \begin{lemma}\label{lemma:metadata-fallback}
        The metadata regions of a Fallback $\bt_i$ can be computed from the children metadata regions as follows
            \begin{equation}\label{eq:fallback-metadata}
            \begin{gathered}
            R_i = \bigcup_{p\left(j\right) = i}\left( R_j \bigcap_{k <_S j} F_k\right), \\
            S_i = \bigcup_{p\left(j\right) = i}\left( S_j \bigcap_{k <_S j} F_k \right), \quad
            F_i = \bigcap_{p\left(j\right) = i} F_j.
            \end{gathered}
        \end{equation}
    \end{lemma}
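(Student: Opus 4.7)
My plan is to mirror the case-analysis strategy used to prove Lemma~\ref{lemma:metadata-sequence}, exploiting the duality between Sequence and Fallback in which the roles of $\mathcal{S}$ and $\mathcal{F}$ are interchanged. I would fix $x \in \mathbb{R}^n$ and, from the piecewise definition (\ref{eq:fallback}), identify the unique child $\bt_j$ invoked at $x$: unraveling the else-if chain, this is the smallest-indexed $j \in \{1,\dots,M\}$ with $x \notin F_j$, or $\bt_M$ if no such $j$ exists. Because the Fallback simply returns $\bt_j(x)$, the return value satisfies $r_i(x) = r_j(x)$, so it suffices to characterize, for each metadata value, the set of $x$ for which the selected child returns that value, and then take a union over $j$.

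For $R_i$ and $S_i$ I would argue in parallel. If $x \in R_j$ (resp. $x \in S_j$), then the pairwise disjointness of $R_j$, $S_j$, $F_j$ from Definition~\ref{def:behavior} forces $x \notin F_j$, so the selection rule collapses to the single condition $x \in \bigcap_{k <_S j} F_k$ (this also handles $j = M$ automatically, where no '$x \notin F_j$' clause is present). Conversely, if $x \in R_j \cap \bigcap_{k <_S j} F_k$ for some $j$, then $\bt_j$ is the child actually invoked and hence $r_i(x) = \mathcal{R}$; ranging over $j$ yields exactly the union expression for $R_i$ in (\ref{eq:fallback-metadata}), and the same reasoning with $S_j$ in place of $R_j$ handles $S_i$.

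The $F_i$ case is where the asymmetry of the definition is the main point to keep track of, and is the step I expect to be the chief (modest) obstacle. If a child $\bt_j$ with $j < M$ is selected, the selection rule forces $x \notin F_j$, so such $j$ cannot contribute to $F_i$. Hence the only route to $r_i(x) = \mathcal{F}$ is that $\bt_M$ is selected and $x \in F_M$, which happens precisely when $x \in F_k$ for every $k <_S M$ together with $x \in F_M$, i.e., when $x \in \bigcap_{p(j) = i} F_j$, collapsing the apparent union to the intersection claimed in (\ref{eq:fallback-metadata}). Alternatively, one could shortcut the entire argument by invoking the $\mathcal{S} \leftrightarrow \mathcal{F}$ duality between Sequence and Fallback, deriving (\ref{eq:fallback-metadata}) directly from (\ref{eq:sequence-metadata}), but the direct case analysis above is already short and self-contained.
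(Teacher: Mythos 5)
Your argument is correct: the selection rule you extract from (\ref{eq:fallback}) (the child invoked is the smallest-indexed $j$ with $x \notin F_j$, else $\bt_M$) is right, the disjointness of $R_j, S_j, F_j$ does exactly the work you need for the $R_i$ and $S_i$ cases, and your observation that only the last child can propagate failure correctly collapses the $F_i$ expression to the intersection. The paper itself gives no in-text proof here --- it simply cites \cite[Lemma 2, p.4]{sprague2021continuous} --- and your direct case analysis (equivalently, the $\mathcal{S}\leftrightarrow\mathcal{F}$ duality with Lemma~\ref{lemma:metadata-sequence}) is precisely the standard argument that reference supplies, so there is nothing to fix.
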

    \begin{proof}
     See \cite[Lemma 2, p.4]{sprague2021continuous}.
    \end{proof}

       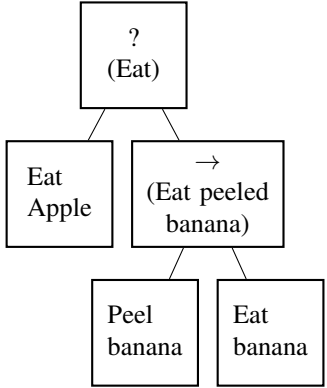
\begin{figure}[!h]
        \centering
        \begin{forest}
            for tree={
                minimum height=4em,
                minimum width=4em,
                inner sep=0.5pt,
            }
            [{$
            \begin{array}{c}
                ?\\
                \text{(Eat)}
            \end{array}
            $}, draw, thick
                [{$
                    \begin{array}{l}
                        \text{Eat}\\
                        \text{Apple}
                    \end{array}
                    $}, draw, thick],
                [{$
                \begin{array}{c}
                    \rightarrow\\
                    \text{(Eat peeled} \\
                    \text{banana)}
                \end{array}
                $}, draw, thick,
                    [{$
                    \begin{array}{l}
                        \text{Peel}\\
                        \text{banana}
                    \end{array}
                    $}, draw, thick],
                    [{$
                    \begin{array}{l}
                        \text{Eat}\\
                        \text{banana}
                    \end{array}
                    $}, draw, thick]
                ]
            ]
        \end{forest}
               \caption{A BT controlling the eating activities of an agent. Fallbacks are indicated by question marks and sequences by arrows. The parental ordering is indicated by lines. The sibling order is not  illustrated by arrows, but implicitly given by the ordering of the drawing. }
        \label{fig:eat_tree}
    \end{figure}
To illustrate the concepts above, we provide an example below, and to avoid getting into details about state space and implementation of individual control policies, we pick a high-level example that is hopefully familiar to most readers.
\begin{example}
Consider the eating policy of an agent as depicted in Fig. \ref{fig:eat_tree}. 
 The root node is a Fallback, indicated by the question mark. It is labeled Eat since it first tries to eat an apple, and if that fails tries to eat a banana. Thus it has two children: Eat apple and Eat banana. The latter in turn is a Sequence, indicated by an arrow. It is labeled Eat peeled banana, since it first tries to peel a banana and then, if that succeeds, tries to eat the banana. Thus it has two children: Peel banana and Eat banana.
 Eat peeled banana is a Sequence, since it only makes sense to progress to Eat banana if Peel banana succeeded. Eat on the other hand, is a fallback, since it (assuming only one fruit is needed) first tries to eat the apple, and only progresses to the banana if Eat apple fails.
 
 Now let's look at what Lemma \ref{lemma:metadata-sequence} and \ref{lemma:metadata-fallback} implies. First, we apply Lemma \ref{lemma:metadata-sequence} to Eat and peel banana (shortened EPB) which is a Sequence.
 From (\ref{eq:sequence-metadata}) we get,
 
\begin{align}
 R_{EPB} &=\bigcup_{p\left(j\right) = i}\left( R_j \bigcap_{k <_S j} S_k\right) = R_{peel} \cup (S_{peel}\cap R_{eat \hspace{0.5mm} banana}) \\
 S_{EPB} &= \bigcap_{p\left(j\right) = i} S_j =S_{peel} \cap S_{eat \hspace{0.5mm} banana},\\ 
 F_{EPB} &= \bigcup_{p\left(j\right) = i}\left( F_j \bigcap_{k <_S j} S_k \right) =F_{peel} \cup (S_{peel}\cap F_{eat \hspace{0.5mm} banana}).
\end{align}
Thus, the running region $R_{EPB}$ is the states where either the agent is currently peeling the banana, or states where the agent has succeeded in peeling and is currently eating. The success region $S_{EPB}$ includes states where the agent has succeeded in both peeling and eating the banana, and finally, the failure region  $F_{EPB}$ includes states where either the peeling failed, or the peeling succeeded, and the eating failed.

Given the above, we can look at the root of the tree, which is a Fallback.
When we apply Lemma \ref{lemma:metadata-fallback} and  Equation (\ref{eq:fallback-metadata})  to Eat 
 we get,
     \begin{align}
            R_{Eat} &= \bigcup_{p\left(j\right) = i}\left( R_j \bigcap_{k <_S j} F_k\right)=R_{eat \hspace{0.5mm} apple} \cup 
            (F_{eat \hspace{0.5mm} apple} \cap R_{EPB}), \\
            &= R_{eat \hspace{0.5mm} apple} \cup (F_{eat \hspace{0.5mm} apple} \cap (R_{peel} \cup (S_{peel}\cap R_{eat \hspace{0.5mm} banana}))), \nonumber\\
            S_{Eat} &= \bigcup_{p\left(j\right) = i}\left( S_j \bigcap_{k <_S j} F_k \right)=S_{eat \hspace{0.5mm} apple} \cup
             (F_{eat \hspace{0.5mm} apple} \cap S_{EPB}) , \\
             &= S_{eat \hspace{0.5mm} apple} \cup
             (F_{eat \hspace{0.5mm} apple} \cap (S_{peel} \cap S_{eat \hspace{0.5mm} banana})) , \nonumber\\
            F_{Eat} &= \bigcap_{p\left(j\right) = i} F_j = F_{eat \hspace{0.5mm} apple} \cap F_{EPB} \\
            &= F_{eat \hspace{0.5mm} apple} \cap (F_{peel} \cup (S_{peel}\cap F_{eat \hspace{0.5mm} banana}) \nonumber.
            \end{align}
            Thus, the running region $R_{Eat}$ includes states where the agent is either eating the apple, or failed in eating the apple and is currently peeling the banana, or failed in eating the apple, succeeded in peeling and is currently eating the banana.
            The success region $S_{Eat}$ includes states where the agent either succeeded in eating the apple, or failed in eating the apple, but succeeded in peeling and eating the banana.
            Finally, the failure region $F_{Eat}$ includes states where the agent failed with eating the apple and either failed to peel the banana, or succeeded in peeling but failed in eating the banana.
            
 \end{example}

    Now, we will define the aforementioned operating regions in terms of the BT compositions above.
    Operation regions $\Omega_i$ are subsets of the statespace for which $x \in \Omega_i$ is a \textit{sufficient condition}
    to conclude that $\bt_0(x) = \bt_i(x)$. That is, if the state is in the operating region of a sub-BT then that sub-BT is being executed in (\ref{eq:bt_execution}) or (\ref{eq:bt_execution_discrete}) for $\bt_0$.
    Within this sufficient condition, there is also a necessary condition characterized by the \textit{influence regions}.
    
    Influence regions $I_i \supseteq \Omega_i$ are subsets of the statespace for which $x \in I_i$ is a \textit{necessary condition}
    to conclude that $\bt_0(x) = \bt_i(x)$.
    Informally, they can be seen as the regions where the design of $\bt_i$ influences the execution, (\ref{eq:bt_execution}) or (\ref{eq:bt_execution_discrete}) for $\bt_0$,
    either by returning success so another BT is executed in sequence, by returning failure so another BT is executed as a fallback, or by itself being executed (returning running).
    Formally, the influence regions are defined in terms of the BT compositions, the strict left-uncle order, and the parent map, as follows.
    
    \begin{definition}[Influence Region] \label{def:influence}
        A subset of the statespace defined for $\bt_i$ as
        \begin{equation}\label{eq:influence}
            I_i := 
            \underset{\substack{j <_{LU} i \\ \bt_{p\left(j\right)} is \mbox{ } Seq }\quad}{\bigcap S_j}
            \cap
            \underset{\substack{j <_{LU} i \\ \bt_{p\left(j\right)} is \mbox{ } Fal.}\quad}{\bigcap F_j}
        \end{equation}
    \end{definition}
    
    As can be seen from (\ref{eq:influence}), if the state is in the influence region of a BT,
    then it is also in the success and failure regions of various other left-uncle BTs, but not their running regions.
    Thus, if the state is in both the influence region and running region of a BT, it is sufficient to conclude that the BT is being executed in (\ref{eq:bt_execution}) and returning running to the root.
    However, additional conditions are needed to conclude that a BT is being executed when it returns success or failure.
    
\begin{example}
 Looking at the same BT as above, depicted in Fig. \ref{fig:eat_tree}.  We have the following influence region for Eat banana.
  \begin{align}
  I_{eat \hspace{0.5mm} banana} &=  \underset{\substack{j <_{LU} i \\ \bt_{p\left(j\right)} is \mbox{ } Seq }\quad}{\bigcap S_j}
            \cap
            \underset{\substack{j <_{LU} i \\ \bt_{p\left(j\right)} is \mbox{ } Fal.}\quad}{\bigcap F_j} 
            = F_{eat \hspace{0.5mm} apple} \cap S_{peel}. 
            \end{align}
 Thus,  eating  the banana is only considered if eating the apple failed, and peeling the banana succeeded.
\end{example}

    A BT can return success (resp. failure) to the root if there 
    does not exist a right-sibling of either itself or any of its ancestors whose parent is a Sequence (resp. Fallback).
    The set of all sub-BTs that satisfy these conditions is given by the \textit{success and failure pathways} in terms of the BT compositions and strict right-uncle order, as follows.
    
    \begin{definition}[Success and failure pathways]
    \label{def:success_failure_pathways}
        Subsets of the vertices defined as
        \begin{align}
            \label{eq:success-pathway}
            \mathfrak{S} &:= \left\{i \in V \mid 
            \not \exists j \in V : \left(j >_{RU} i\right) \land 
            \left(\bt_{p(j)} ~ is ~ Seq\right)\right\} \\
            \label{eq:failure-pathway}
            \mathfrak{F} &:= \left\{i \in V \mid 
            \not \exists j \in V: \left(j >_{RU} i\right) \land 
            \left(\bt_{p(j)} ~ is ~ Fal\right)\right\},
        \end{align}
        respectively.
    \end{definition}
    
    As can be seen by (\ref{eq:success-pathway}) (resp. (\ref{eq:failure-pathway})), if a BT is in the success pathway (resp. failure pathway) then there does not exist another BT that will execute if it succeeds (resp. fails).
    Thus, if the state is in both the influence region and success region (resp. failure region) of a BT, and the BT is in the success pathway (resp. failure pathway), then it is sufficient to conclude that the BT is being executed in (\ref{eq:bt_execution}) or (\ref{eq:bt_execution_discrete}) for $\bt_0$.
    The operating regions are defined in terms of these sufficient conditions, as follows.

    \begin{definition}[Operating Region]
        A subset of the statespace
        defined for $\bt_i$ as
        \begin{equation}\label{eq:operating}
        \Omega_i := 
        \begin{cases}
            I_i \cap (R_i \cup S_i \cup F_i) = I_i & \text{if} \quad i \in \mathfrak{S} \cap \mathfrak{F} \\
            I_i \cap (R_i \cup S_i) & \text{else-if} \quad i \in \mathfrak{S} \\
            I_i \cap (R_i \cup F_i) & \text{else-if} \quad i \in \mathfrak{F} \\
            I_i \cap R_i  & \text{else}.
        \end{cases}
        \end{equation}
    \end{definition}
    
    With (\ref{eq:operating}), we know exactly when a BT will be executed in (\ref{eq:bt_execution}); namely if $x \in \Omega_i$ then 
    $\bt_0(x) = \bt_i(x)$, and hence
    $u_0(x) = u_i(x)$.
    We will use these operating regions to analyze the convergence of (\ref{eq:bt_execution}) or (\ref{eq:bt_execution_discrete}) for $\bt_0$ in the following section.
    
\begin{example}
 Looking once more at the same BT, depicted in Fig. \ref{fig:eat_tree}.  We have the following 
 
  \begin{align}
            \mathfrak{S} &= \left\{i \in V \mid 
            \not \exists j \in V : \left(j >_{RU} i\right) \land 
            \left(\bt_{p(j)} ~ is ~ Seq\right)\right\} \\
            &= \{\mbox{Eat apple}, \mbox{Eat banana},  \mbox{Eat and peel banana},  \mbox{Eat}\} \nonumber \\
            \mathfrak{F} &= \left\{i \in V \mid 
            \not \exists j \in V: \left(j >_{RU} i\right) \land 
            \left(\bt_{p(j)} ~ is ~ Fal\right)\right\}, \\
            &= \{\mbox{Peel banana}, \mbox{Eat banana}, \mbox{Eat and peel banana},  \mbox{Eat}\} \nonumber
        \end{align}

  Thus,  if eating the apple or the banana succeeds, the entire BT will succeed, and the same holds for the non-leaves Eat and peel banana, and just Eat.
  Similarly,  if peeling or eating the banana fails, the entire BT will fail. Note that the last two are only executed when eat apple has already failed. The same holds for the non-leaves Eat and peel banana, and just Eat.
\end{example}

\section{Main result} \label{sec:main}

In this section, we will present a general BT convergence theorem (Section \ref{sec:convergence}) and show how it generalizes previous proofs (Section \ref{sec:generalisations}).
    
    \subsection{Convergence}\label{sec:convergence}

    In this section we will present a general convergence theorem for BTs, inspired by the idea of sequential composition in \cite{burridge1999sequential}.
    The theorem is based on the concept of a Domain of Attraction (DOA), a region of the statespace $B_i \subseteq \mathbb{R}^n$ 
    that is positively invariant with respect to the execution of a sub-BT, 
    (\ref{eq:bt_execution}) or (\ref{eq:bt_execution_discrete}) for $\bt_i$,
    meaning that if the state starts in the region $x(0) \in B_i$, then the execution's maximal solution will stay in the region $x(t) \in B_i$.
    Additionally, we will assume below that if the state starts in the DOA of a sub-BT, the sub-BT's execution will make the state converge to some goal region $G_i \subseteq B_i$ in finite time.
    We will assume that all BTs have a DOA that is contained in the running and success regions and that the goal region is contained in the success region.
    We formalize this in the following definition and assumption.
    
    \begin{definition}[Finite-time successful]\label{def:fts}
        We call a BT $\bt_i$ finite-time successful if
        there exists a positively invariant region $B_i \subseteq R_i \cup S_i$ (called the Domain of Attraction, DOA),
        a positively invariant goal region $G_i \subseteq B_i \cap S_i$,
        and a finite time $\tau_i \in \mathbb{R}_{>0}$ such that
        $x(0) \in B_i$ implies $x(t') \in G_i$ in some smaller finite time $t' \in [0, \tau_i]$ (depending on the starting state $x(0)$)
        for solutions to 
        the BT execution in Def. \ref{def:bt_execution}. 
    \end{definition}

    Note that the definition allows the DOA of a sub-BT to be empty ($B_i = \emptyset$), 
    thereby accommodating sub-BTs that do not inherently have a DOA.
    This could be the case, for example, in a sub-BT that uses a black-box control policy, such as a neural network (as in \cite{sprague2022adding}).
    
    One way to guarantee finite-time success, is via \textit{exponential stability}, following \cite{colledanchise2016behavior}.
    \begin{lemma}[Exponential stability]
        A BT $\bt_i$, for which $x_i \in G_i$ 
        is a locally exponentially stable equilibrium
        on $B_i$
        for the execution (\ref{eq:bt_execution})
        is finite-time successful.
    \end{lemma}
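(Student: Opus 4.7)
The plan is to convert the exponential decay bound directly into a uniform finite-time estimate. By hypothesis, there exist constants $\alpha, \beta > 0$ such that every maximal solution of (\ref{eq:bt_execution}) starting in $B_i$ obeys $\|x(t) - x_i\| \leq \alpha \|x(0) - x_i\| e^{-\beta t}$, and $B_i$ is forward invariant by definition of a DOA, so the condition $B_i \subseteq R_i \cup S_i$ from Definition~\ref{def:fts} is automatically satisfied. Since $x_i$ lies in $G_i$, pick $r > 0$ with the closed ball $\overline{B(x_i, r)} \subseteq G_i$. The estimate then guarantees $x(t) \in G_i$ as soon as $t$ exceeds $\beta^{-1} \ln(\alpha \|x(0) - x_i\| / r)$. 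Writing $M := \sup_{y \in B_i} \|y - x_i\|$, which is finite because $B_i$ is a bounded DOA, the quantity $\tau_i := \max\{0,\, \beta^{-1} \ln(\alpha M / r)\}$ is a single finite time that works for every initial state in $B_i$, giving the quantitative part of Definition~\ref{def:fts}.

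The delicate step is arranging positive invariance of the goal region. An arbitrary ball around an exponentially stable equilibrium need not be invariant, because the overshoot factor $\alpha$ can exceed one before the exponential decay dominates. The standard remedy is to invoke the converse Lyapunov theorem for exponentially stable equilibria to obtain a smooth $V$ on $B_i$ satisfying $c_1 \|x - x_i\|^2 \leq V(x) \leq c_2 \|x - x_i\|^2$ and $\dot V \leq -c_3 V$. Any sublevel set $\{V \leq \varepsilon\}$ is then automatically positively invariant and, for $\varepsilon$ small enough, lies inside $G_i \cap S_i$. Applying Definition~\ref{def:fts} with this sublevel set in place of $G_i$ satisfies the invariance requirement, and the bound $V(x(t)) \leq V(x(0)) e^{-c_3 t}$ produces the finite-time entry estimate exactly as above.

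I expect this invariance step to be the main obstacle; the finite-time bound itself is a one-line consequence of the exponential estimate. The only other housekeeping is verifying $G_i \subseteq B_i \cap S_i$, which is inherited from the standing setup for a BT's DOA and goal region, so no further argument is needed there.
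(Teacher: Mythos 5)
Your core argument is the same as the paper's: the exponential bound $\Vert x(t)-x_i\Vert \leq \alpha_i \Vert x(0)-x_i\Vert e^{-\beta_i t}$ forces the state into a ball of radius $\epsilon_i$ contained in $S_i$ within a time $\tau_i$ that is uniform over $B_i$ (both you and the paper implicitly need $\sup_{y\in B_i}\Vert y - x_i\Vert < \infty$ for this uniformity). Where you genuinely diverge is on the positive invariance of the goal region: the paper's proof simply exhibits the ball $\{x : \Vert x - x_i\Vert \leq \epsilon_i\} \subseteq S_i$ and stops, leaving unaddressed the requirement in Definition~\ref{def:fts} that $G_i$ be positively invariant --- which, as you correctly observe, an arbitrary ball around an exponentially stable equilibrium need not be when the overshoot factor $\alpha_i$ exceeds one. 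Your converse-Lyapunov repair (taking a small sublevel set of a quadratically bounded $V$ as the goal region) closes that gap, at the cost of importing regularity hypotheses on $f(x,u_i(x))$ that the paper never states and that sit uneasily with its discontinuous-dynamics setting; a more elementary fix with the same effect is to take the goal region to be the forward-reachable set of a ball of radius $\rho$, which is positively invariant by construction and contained in the ball of radius $\alpha_i\rho \leq \epsilon_i$, hence in $S_i$, with no smoothness needed. So your proof is correct and in fact more complete than the paper's on the invariance point; the paper's version is shorter precisely because it skips that verification.
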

    \begin{proof}
        The system 
        (\ref{eq:bt_execution})  is locally exponentially stable about $x_i \in G_i$
        on $B_i \subseteq R_i \cup S_i$
        if there exists $\alpha_i, \beta_i \in \mathbb{R}_{>0}$
        such that, if $x(0) \in B_i$ then
        $\Vert x(t) - x_i \Vert \leq  \Vert x(0) - x_i \Vert \alpha_i e^{- \beta_i t}$
        for all $t \in \mathbb{R}_{\geq 0}$.
        The stability of $x_i \in B_i \cap S_i$ implies that there must exist
        a maximal $\epsilon_i \in \mathbb{R}_{>0}$ and minimal $\tau_i \in \mathbb{R}_{>0}$
        such that
        $\Vert x(\tau_i) - x_i \Vert \leq  \Vert x(0) - x_i \Vert \alpha_i e^{- \beta_i \tau_i} < \epsilon_i$
        and $\{x \in \mathbb{R}^n : \Vert x - x_i \Vert \leq \epsilon_i\} \subseteq S_i$.
        A similar argument holds for the discrete-time case of  (\ref{eq:bt_execution_discrete}).
    \end{proof}

\begin{figure}[t]
    \centering
    \includegraphics[width=\linewidth]{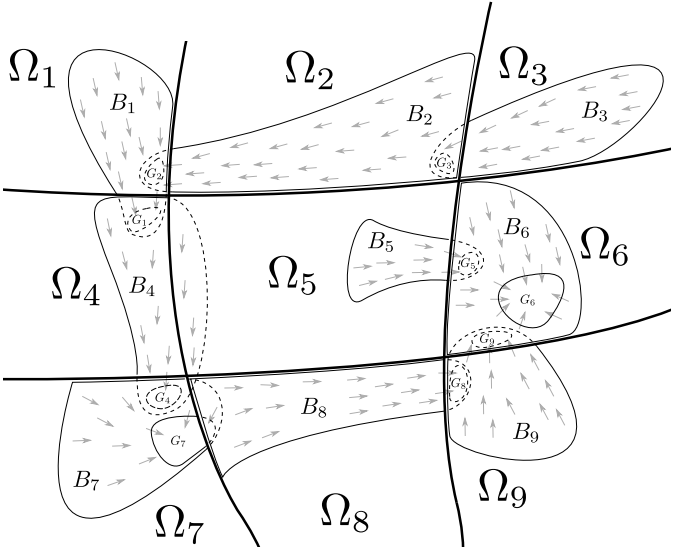}
    \caption{A copy of Fig.~\ref{fig:operating_regions} provided for convenience. An example of how the operating regions $\Omega_i$ and domains of attraction $B_i$ can be arranged to make the state reach one of the two desired goal regions $G_6, G_7$ from a wide variety of starting states. Outside the domains of attraction $B_i$, we have no guarantees on solution behavior.}
    \label{fig:operating_regions_v2}
\end{figure}
    
To provide intuition for the main result (Theorem \ref{theorem:general_convergence}), and its proof, we consider the example of Fig.~\ref{fig:operating_regions} that is repeated in Fig.~\ref{fig:operating_regions_v2} for convenience. Looking at  Fig.~\ref{fig:operating_regions_v2} we see that the vector fields indicate that transitions from $B_3$ to $B_2$ are likely. Similarly, transitions are likely from $B_2$ to $B_1$ to $B_4$ to $B_7$. From $B_7$ it might either go to the goal region $G_7$, or possibly to $B_8$ and then $B_9$ and $B_6$, followed by the goal at $G_6$. Finally, if it starts in $B_5$ it will also move to $B_6$. There is however a potential problem: $B_4$ extends into $\Omega_5$, so the execution of $u_4$ might lead to the execution of $u_5$ outside of $B_5$ and then we do not know what will happen.

To analyze the system we will partition every $\Omega_i$ into tree sets $\Omega_i=v_a(i) \cup v_b(i) \cup v_c(i)$, using the $B_i$ and $G_i$ from Definition~\ref{def:fts}. The first set is all states outside the DOA, $v_a(i)=\Omega_i \setminus B_i$. 
The second is the states in the DOA, but outside the goal, $v_b(i)=\Omega_i  \cap (B_i \setminus G_i)$, and the third set is the states in the goal, which is inside the DOA, $v_c(i)=\Omega_i \cap G_i$.

Given the above we can now create the prepares graph $\Gamma$
(see Section~\ref{sec:related-work}) shown in Fig.~\ref{fig:prepares_graph}. 
Each of the vertices of the graph represents one of the sets mentioned above, as indicated in the figure.
 This graph is formally defined in Definition~\ref{def:prepares_graph} below, but intuitively we note that the edges correspond to possible transitions between the sets. We have no knowledge of the vector field outside $B_i$, so the possible transitions out of the set $v_a(i)$ are given by all neighboring sets, so the vertex $v_a(2)$ has edges to $v_a(1), v_a(3), v_b(1), v_b(2)$.
The potential problem with $B_4$ identified above can be seen in terms of a bidirectional edge between $v_b(4)$ and $v_a(5)$.

    \begin{figure}[t]
        \centering
        \begin{tikzpicture}[node distance={15mm}, main/.style = {draw}]
        \node[main, very thick] (1) [] {$v_b(1)$};
        \node[main, very thick] (2) [right of=1] {$v_b(2)$};
        \node[main, very thick] (3) [right of=2] {$v_b(3)$};
        \node[main, very thick] (4) [below of=1] {$v_b(4)$};
        \node[main, very thick] (18) [right of=4] {$v_a(5)$};
        \node[main, very thick] (5) [right of=18] {$v_b(5)$};
        \node[main, very thick] (6) [right of=5] {$v_b(6)$};
        \node[main, very thick] (7) [below of=4] {$v_b(7)$};
        \node[main, very thick] (8) [right of=7] {$v_b(8)$};
        \node[main, very thick] (9) [right of=8] {$v_b(9)$};
        \node[main] (10) [label=above left:{$\Gamma$}, above left of=1] {$v_a(1)$};
        \node[main] (11) [above of=2] {$v_a(2)$};
        \node[main] (12) [above right of=3] {$v_a(3)$};
        \node[main] (13) [left of=4] {$v_a(4)$};
        \node[main] (17) [right of=6] {$v_a(6)$};
        \node[main, very thick] (19) [below of=6] {$v_c(6)$};
        \node[main] (16) [below right of=19] {$v_a(9)$};
        \node[main, very thick] (20) [below right of=7] {$v_c(7)$};
        \node[main] (14) [below left of=20] {$v_a(7)$};
        \node[main] (15) [below right of=20] {$v_a(8)$};
        
        \draw[->] (2) -- (1);
        \draw[->] (3) -- (2);
        \draw[->] (1) -- (4);
        \draw[->] (4) -- (7);
        \draw[->] (7) -- (8);
        \draw[->] (8) -- (9);
        \draw[->] (9) -- (6);
        \draw[<->] (4) -- (18);
        \draw[->] (18) -- (5);
        \draw[->] (5) -- (6);
        \draw[->] (5) -- (6);
        \draw[->] (18) -- (2);
        \draw[->] (18) -- (8);
        \draw[->] (9) -- (6);
        \draw[->] (17) -- (6);
        \draw[->] (6) -- (19);
        \draw[<->] (10) -- (11);
        \draw[<->] (11) -- (12);
        \draw[<->] (12) -- (17);
        \draw[<->] (10) -- (13);
        \draw[<->] (13) -- (14);
        \draw[<->] (14) -- (15);
        \draw[<->] (15) -- (16);
        \draw[<->] (16) -- (17);
        \draw[->] (7) -- (20);
        \draw[->] (10) -- (1);
        \draw[->] (11) -- (2);
        \draw[->] (12) -- (3);
        \draw[->] (13) -- (4);
        \draw[->] (15) -- (8);
        \draw[->] (16) -- (9);
        \draw[->] (14) -- (7);
        \draw[->] (18) to [out=25, in=155] (6);
       
        \draw[->] (11) -- (1);
        \draw[->] (17) -- (3);
        \draw[->] (16) -- (6);
        \draw[->] (15) -- (9);
        \draw[->] (14) to [out=85, in=205] (8);

        \end{tikzpicture}
        \caption{The prepares graph $\Gamma$ corresponding to the operating regions and DOAs in Fig. \ref{fig:operating_regions}.
        The bolded vertices represent the analysis set $V_\Gamma'$.}
        \label{fig:prepares_graph}
    \end{figure}
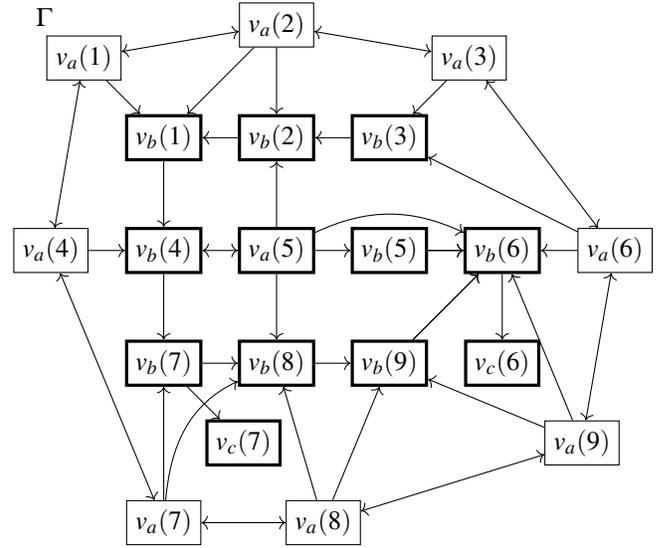    
    
    If there are no cycles in the prepares graph we can draw some conclusions regarding convergence, in a way that is similar to our discussion on the sets $B_i$ above. If there are cycles however, we convert the graph to a so-called condensed prepares graph $\Gamma_\star$ (see Definition \ref{def:condensed} below) where we merge all vertices that are mutually reachable. Starting from Fig. \ref{fig:prepares_graph} we get Fig. \ref{fig:condensation}. We see two merged vertices: $\{v_b(1), v_b(4), v_a(5), v_b(2)\}$ and $\{v_a(1), v_a(2), v_a(3), v_a(4), v_a(6), v_a(7), v_a(8), v_a(9)\}$.
    As can be seen, starting from e.g., $v_a(1)$ all other vertices in the merged vertex can be reached in a series of transitions.
    Any cycle is by definition mutually reachable, so the condensed prepares graph $\Gamma_\star$ is guaranteed to have no cycles.
    The formal definition of $\Gamma_\star$ can be found in Definition~\ref{def:condensed}.

     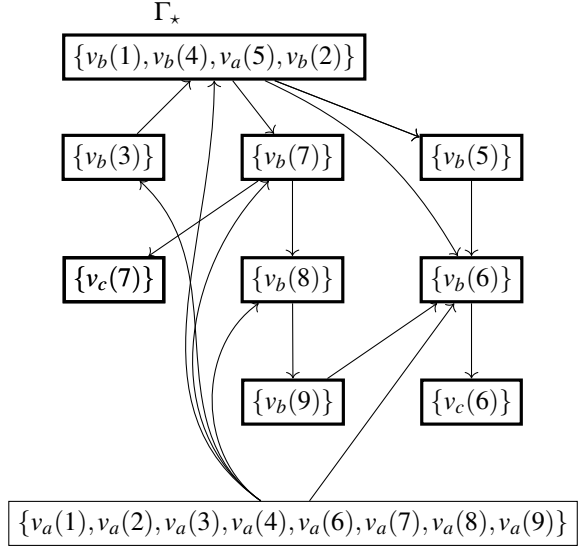
\begin{figure}[t]
        \centering
        \begin{tikzpicture}[main/.style = {draw}]
        \node[main, very thick] (1) [label=above left:{$\Gamma_\star$}] {$\{v_b(1), v_b(4), v_a(5), v_b(2)\}$};
        \node[main, very thick] (2) [below right=1cm of 1] {$\{v_b(5)\}$};
        \node[main, very thick] (3) [below=1cm of 2] {$\{v_b(6)\}$};
        \node[main, very thick] (4) [below=1cm of 3] {$\{v_c(6)\}$};
        \node[main, very thick] (5) [left=1cm of 2] {$\{v_b(7)\}$};
        \node[main, very thick] (6) [left=1cm of 3] {$\{v_b(8)\}$};
        \node[main, very thick] (7) [left=1cm of 4] {$\{v_b(9)\}$};
        \node[main, very thick] (8) [left=1cm of 6] {$\{v_c(7)\}$};
        \node[main, very thick] (9) [left=1cm of 6] {$\{v_c(7)\}$};
        \node[main] (10) [below=1cm of 7] {$\{v_a(1), v_a(2), v_a(3), v_a(4), v_a(6), v_a(7), v_a(8), v_a(9)\}$};
        \node[main, very thick] (11) [left=1cm of 5] {$\{v_b(3)\}$};
        \draw[->] (1) -- (2);
        \draw[->] (2) -- (3);
        \draw[->] (3) -- (4);
        \draw[->] (1) -- (5);
        \draw[->] (5) -- (6);
        \draw[->] (6) -- (7);
        \draw[->] (7) -- (3);
        \draw[->] (1) -- (2);
        \draw[->] (5) -- (8);
        \draw[->] (11) -- (1);
        \draw[->] (1) to [out=-25, in=120] (3);
        \draw[->] (10) to [out=145, in=270] (1);
        \draw[->] (10) -- (3);
        \draw[->] (10) to [out=145, in=318] (11);
        \draw[->] (10) to [out=145, in=225] (5);
        \draw[->] (10) to [out=145, in=215] (6);
        \end{tikzpicture}
        \caption{The condensed prepares graph $\Gamma_\star$ corresponding to the operating regions and DOAs in Fig. \ref{fig:operating_regions}. The bolded vertices represent a possible choice of analysis set $V_\Gamma'$.}
        \label{fig:condensation}
    \end{figure}
    
      Now, looking at $\Gamma_\star=(V_\star, E_\star)$ in Fig.~\ref{fig:condensation}, 
      we can draw conclusions regarding convergence.
      First, we note that the only sinks (vertices without outgoing edges) are the two goal vertices, $\{v_c(6)\}, \{v_c(7)\}$. By definition, we will never leave those. Now, if we pick some subset of the vertices $V'_\star \subseteq V_\star$ such that there are no outgoing edges from $V_\star'$, and all vertices in $V_\star'$, except the goal vertices, are such that we will leave them in finite time, we know that any state starting in $\bigcup \bigcup V_\star'$ (the union of all the sets inside the union of some collections of sets) will end up in a goal state in finite time. 
      
      Again, looking at Fig.~\ref{fig:condensation}, we see that including several of the $v_b(i)$ vertices in $V_\star'$ is probably a good idea, since they correspond to DOAs that are likely to lead to a transition. We have less knowledge of merged vertices containing cycles, and/or $v_a(i)$, but sometimes we know that a cycle will not execute forever, for example when a lawnmower robot  goes to recharge regularly (creating a cycle) but will eventually have finished the whole lawn, or when a data-driven controller is added to an existing design, such as in Section~\ref{sec:NN_MB}. In such cases, a merged vertex can also be included in $V_\star'$.
      Without any knowledge of the cycles of the merged vertices, we can at least conclude that $V_\star'$ is such that any state starting in the set $\bigcup \bigcup V_\star'$ 
      will reach one of the goal sets $v_c(6),v_c(7)$ in finite time.

      We will now present formal results, corresponding to the informal discussion above. But, we must first pick a set of  nodes in the BT such that their operating regions $\Omega_i$ together cover the entire state space, as illustrated in Fig.~\ref{fig:operating_regions_v2}.
    
    \begin{definition}[Abstraction]\label{def:abstraction}
         An abstraction of a BT is a subset of the vertices $P \subseteq V$ such that $\{\Omega_i\}_{i \in P}$ is a partition of the statespace $\mathbb{R}^n$.
         That is $\Omega_i \cap \Omega_j = \emptyset$ and $\bigcup_{i \in P} \Omega_i = \mathbb{R}^n$.
    \end{definition}
    Note that we cannot pick all nodes of the BT, as each child node has an operating region that is a subset of its parent's (see \cite[Lemma 3, p.5]{sprague2021continuous}). However, many options remain. All the leaves of a BT is a valid abstraction, and so is the single root node.
      
     \begin{assumption}[Finite-time successful]\label{ass:fts}
        All  BTs $\bt_i$ rooted in the vertices of $P$ are finite-time successful.
    \end{assumption}  
      
    In the following we need to handle both the cases of continuous- and discrete-time executions, corresponding to (\ref{eq:bt_execution}) and (\ref{eq:bt_execution_discrete}). Therefore we make the following definition of when subsets of the statespace  are close enough   to enable a transition of the state from one set to the other without entering any other sets.
      \begin{definition}[Neighboring sets]
      \label{def:neighboring_sets}
          The definition is a bit different for continuous-time and discrete-time executions.
          In the case of continuous-time execution (\ref{eq:bt_execution}) we say that two sets $\Sigma_0,\Sigma_1 \subset \mathbb{R}^n$ are neighboring, if $\partial \Sigma_0 \cap \partial \Sigma_1 \neq \emptyset$.
          
          In the case of discrete-time execution (\ref{eq:bt_execution_discrete}), we let
          $\delta = \sup_{x,u} \|f(x,u)\|$, i.e. an upper bound on the size of the state transition in a single time step.
          Then we say that two sets $\Sigma_0,\Sigma_1\subset X$ are neighboring, if $\inf_{\sigma_0 \in \Sigma_0,\sigma_1 \in \Sigma_1}  \|\sigma_0 - \sigma_1\|\leq \delta$.
      \end{definition}
      
      We are now ready to define the prepares graph, with vertices $v_a(i),v_b(i),v_c(i)$ and edges where transitions are possible.
      
     \begin{definition}[Prepares graph]\label{def:prepares_graph}
        Let 
        \begin{align}
            v_a(i)&=\Omega_i \setminus B_i \label{eq:va}\\
            v_b(i)&=\Omega_i \cap (B_i \setminus G_i) \label{eq:vb}\\
            v_c(i)&=\Omega_i \cap G_i \label{eq:vc}
        \end{align}
        for $i \in P$.
        By slight abuse of notation, let these sets be vertices of the prepares graph
        \begin{align}
            \Gamma &:= (V_\Gamma, E_\Gamma) \\
            V_\Gamma &:= \bigcup_{i \in P} \left\{v_a(i), v_b(i), v_c(i)\right\},
        \end{align}
        where $E_\Gamma$ is a set of directed edges given by
        \begin{equation} \label{eq:prepares_edges}
            \begin{aligned}
                E_\Gamma :=
                &\left\{
                    \left(v_a(i), v_a(j)\right) \middle|
                    \begin{aligned}
                       v_a(i),v_a(j) \mbox{ are neighboring}& \\
                        \land  i \neq j  \land \left(i,j\right) \in P^2 &
                    \end{aligned}
                \right\} \cup \\
                &\left\{
                    \left(v_a(i), v_b(j)\right) \middle|
                    \begin{aligned}
                        v_a(i),v_b(j) \mbox{ are neighboring}& \\
                        \land \left(i,j\right) \in P^2&
                    \end{aligned}
                \right\} \cup\\
                &\left\{
                    \left(v_a(i), v_c(j)\right) \middle|
                    \begin{aligned}
                        v_a(i),v_c(j) \mbox{ are neighboring}& \\
                        \land \left(i,j\right) \in P^2 &
                    \end{aligned}
                \right\} \cup\\
                &\left\{
                    \left(v_b(i), v_a(j)\right) \middle|
                    \begin{aligned}
                        v_b(i),v_a(j) \mbox{ are neighboring}& \\
                        \land B_i  \cap v_a(j) \neq \emptyset &\\
                        \land i \neq j 
                        \land \left(i,j\right)\in P^2 &
                    \end{aligned}
                \right\} \cup \\
                &\left\{
                    \left(v_b(i), v_b(j)\right) \middle|
                    \begin{aligned}
                        v_b(i),v_b(j) \mbox{ are neighboring}& \\
                        \land B_i  \cap v_b(j) \neq \emptyset &\\
                        \land i \neq j 
                        \land \left(i,j\right) \in P^2 &
                    \end{aligned}
                \right\} \cup \\
                &\left\{
                    \left(v_b(i), v_c(j)\right) \middle|
                    \begin{aligned}
                    v_b(i),v_c(j) \mbox{ are neighboring}& \\
                        \land B_i  \cap v_c(j) \neq \emptyset&\\
                        \land \left(i,j\right) \in P^2 &
                    \end{aligned}
                \right\}.
            \end{aligned}
        \end{equation}
        Furthermore, let $\leq_\Gamma$ be the reachability relation of the prepares graph $\Gamma$ (given by the reflexive-transitive closure of $E_\Gamma$).
    \end{definition}
    
    Note that we have edges going to all neighbors of the $v_a(i)$ vertices since we know nothing of the vector fields there. But for $v_b(i)$ we know that the set $B_i$ is invariant, so for those outgoing edges we also need $B_i$ to overlap with the neighboring set.
    
    As noted above $\Gamma$ may contain cycles, and then we want to create another graph $\Gamma_\star$ where all vertices in a cycle in $\Gamma$ are merged into a single vertex, leaving $\Gamma_\star$ without cycles.
    
    To do this we  use the reachability relation $\leq_\Gamma$ induced by the vertices of $\Gamma$ to define the following equivalence relation
    $\leq_\Gamma \cap \geq_\Gamma=\{(i,j) \in V_\Gamma^2 | (i \leq_\Gamma j) \land (i \geq_\Gamma j)\}$, of vertices that are mutually reachable, and thus parts of cycles.

    \begin{definition}[Condensed prepares graph]
    \label{def:condensed}
        The condensed prepares graph is a DAG  
        \begin{align}
            \Gamma_\star :=& \left(V_\star, E_\star\right) \\
            V_\star :=& V_\Gamma / (\leq_\Gamma \cap \geq_\Gamma),
        \end{align}
        where $/$ is the modulo operator creating the equivalence classes and the set of edges are given by
        \begin{equation}
        \label{eq:gamma_star_edges}
            E_\star := 
            \left\{\left(i,j\right) \in V_\star^2 \middle| 
            E_\Gamma \cap \left(i \times j\right) \neq \emptyset
            \right\},
        \end{equation}
        where $i, j \subseteq V_\Gamma$.
        Thus, there is an edge in $V_\star$ between two equivalence classes, if there is an edge in $V_\Gamma$ between any elements of the two classes.
        
        Again, we define the reachability relation of $\Gamma_\star$ as $\leq_{\Gamma_\star}$ (the reflexive-transitive closure of $E_\star$).
        Define the maximal strongly connected components as
        \begin{equation}
            \overline{V}_\star := \left\{i \in V_\star \middle| \not \exists j \in V_\star : i <_{\Gamma_\star} j\right\},
        \end{equation}
        i.e. the sinks of $\Gamma_\star$.
    \end{definition}

 As noted above,  an example of the $\Gamma_\star$ corresponding to the $\Gamma$ of Fig. \ref{fig:prepares_graph} can be found in Fig. \ref{fig:condensation}.
 
     The prepares graph $\Gamma$ and the condensed prepares graph $\Gamma_\star$ both yield information about what kind of transitions can happen when starting anywhere in the statespace.
    However, we are often only interested in what transitions can happen when starting from a particular region of the state space such as the DOA of a sub-BT.
    For example, as indicated in Fig.~\ref{fig:operating_regions}, if the state starts in $\Omega_3 \cap B_3$ (corresponding to $v_b(3)$ in $\Gamma$ and $\{v_b(3)\}$ in $\Gamma_\star$), then the state can transition to any of the regions corresponding to the bolded vertices in Figs. \ref{fig:prepares_graph} and \ref{fig:condensation}.
    We call these vertices the \say{analysis sets},
    $V_\Gamma'$ for $\Gamma$ and $V_\star'$ for $\Gamma_\star$, i.e. subsets of $\Gamma$'s and $\Gamma_\star$'s vertices, respectively, such they do not have any outgoing vertices.
    We formally define these sets below.
    
    \begin{definition}[Analysis sets]
        A subset of 
        $\Gamma$'s vertices
        \begin{equation}
            V_\Gamma' :\subseteq V_\Gamma \quad \text{s.t.} \quad E_\Gamma \cap \left(V_\Gamma' \times \left(V_\Gamma \setminus V_\Gamma'\right)\right) = \emptyset
        \end{equation}
        and a subset of $\Gamma_\star$'s vertices
        \begin{equation}
             V_\star' := V_\Gamma' / \left(\leq_\Gamma \cap \geq_\Gamma\right).
        \end{equation}
        Note that $V_\Gamma' = \bigcup V_\star'$.
    \end{definition}

    So far, we have divided the operation regions $\Omega_i$ into subsets $v_a(i),v_b(i),v_c(i)$ and seen what transitions between these are possible, and we will pay extra attention to a chosen analysis set $V_\Gamma'$. Now we map these transitions back to the larger sets $\Omega_i$ to see what transitions are possible on that level.
    
    To do this we define a directed graph
    $\Gamma_\Omega=(V_\Omega, E_\Omega)$,
    where $V_\Omega$ is a subset of operating regions in $P$ corresponding to the analysis set $V_\Gamma'$.
    Further, the edges $E_\Omega$ are the ones where an edge exists in $V_\Gamma'$.
    This is illustrated in Fig.~\ref{fig:behaviour_graph} and formalized below.
    
    
          \begin{figure}
        \centering
        \begin{tikzpicture}[main/.style = {draw}]
            \node[main, very thick] (1) [] {$1$};
            \node[main, very thick] (2) [right=1cm of 1] {$2$};
            \node[main, very thick] (3) [right=1cm of 2] {$3$};
            \node[main, very thick] (4) [below=1cm of 1] {$4$};
            \node[main, very thick] (5) [right=1cm of 4] {$5$};
            \node[main, very thick] (6) [right=1cm of 5] {$6$};
            \node[main, very thick] (7) [below=1cm of 4] {$7$};
            \node[main, very thick] (8) [right=1cm of 7] {$8$};
            \node[main, very thick] (9) [right=1cm of 8] {$9$};
            \draw[->] (3) -- (2);
            \draw[->] (2) -- (1);
            \draw[->] (1) -- (4);
            \draw[->] (4) -- (7);
            \draw[->] (7) -- (8);
            \draw[->] (8) -- (9);
            \draw[->] (9) -- (6);
            \draw[->] (5) -- (2);
            \draw[->] (5) -- (8);
            \draw[->] (5) -- (6);
            \draw[<->] (4) -- (5);
        \end{tikzpicture}
        \caption{The behavior graph $\Gamma_\Omega$ corresponding to the analysis sets, $V_\Gamma'$ and $V_\star'$, bolded in Figs. \ref{fig:prepares_graph} and \ref{fig:condensation}.
        Note how edges correspond to possible transitions, given that you start in $V_\Gamma'$. As can be seen from Figs~\ref{fig:operating_regions_v2} and \ref{fig:prepares_graph}, transitions from $\Omega_1$ to $\Omega_2$ are possible from the set $v_a(1)$. However, $v_a(1)$ is not in $V_\Gamma'$. The only part of $\Omega_1$ in $V_\Gamma'$ is $v_b(1)$, leading to a transition to $\Omega_4$. Thus, above there is only one edge out of $\Omega_1$, and it leads to $\Omega_4$.
        }
        \label{fig:behaviour_graph}
    \end{figure}
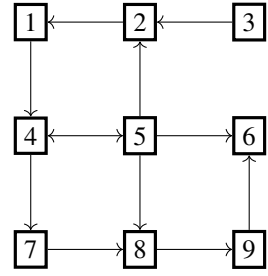

    \begin{definition}[Behavior graph]\label{def:behavior_graph}
        Let $v_P : V_\Gamma \to P$ be a surjective mapping (meaning that there exists at least one $i \in V_\Gamma$ for every $j \in P$) from the prepares graph vertices $V_\Gamma$ to the abstraction $P$, defined by
        \begin{equation}
            v_P := v_a^{-1} \cup v_b^{-1} \cup v_c^{-1},
        \end{equation}
        which is the union of the inverse mappings of (\ref{eq:va}), (\ref{eq:vb}), and (\ref{eq:vc}), which exist because they are bijective.
        The behavior graph is then defined by
        \begin{align}
            \Gamma_\Omega :=& \left(V_\Omega, E_\Omega\right) \\
            V_\Omega :=& v_p \left[V_\Gamma'\right] \\
            E_\Omega :=& \left(v_P \times v_P\right)\left[E \cap (V_\Gamma' \times V_\Gamma')\right],
        \end{align}
        where $[\cdot]$ denotes the image of a subset, 
        $(v_P \times v_P)(i,j) = (v_P(i), v_P(j))$ for $(i,j) \in E$ is the Cartesian product of functions,
        and $E \cap (V_\Gamma' \times V_\Gamma')$ is the restriction of the prepares graph's edges to the analysis set $V_\Gamma'$.
        Define the reachability relation of $\Gamma_\Omega$ as $\leq_\Omega$ (the reflexive-transitive closure of $E_\Omega$).
    \end{definition}


 
 We will now show that the edges in $V_\star$ do indeed capture all possible transitions between the sets making up the vertices of $V_\star$.
 
 \begin{lemma}
 \label{all_possible_transitions}

     Given two sets $\Sigma_0$, $\Sigma_1 \in V_\star$, if an execution, (\ref{eq:bt_execution}) or (\ref{eq:bt_execution_discrete}), starts in $\bigcup \Sigma_0$ and then enters $\bigcup \Sigma_1$ without  leaving $\bigcup \Sigma_0 \cup \bigcup \Sigma_1$, then there is an edge in $E_\star$ between the vertices of the two sets, 
     $(\Sigma_0, \Sigma_1) \in E_\star$.
 \end{lemma}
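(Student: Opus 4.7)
The plan is to reduce the claim to exhibiting a single edge $(v, v^*) \in E_\Gamma$ with $v \in \Sigma_0$ and $v^* \in \Sigma_1$; by Definition~\ref{def:condensed}, any such edge immediately gives $(\Sigma_0, \Sigma_1) \in E_\star$, so the whole lemma collapses to a local ``boundary'' argument inside the prepares graph.

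To locate this edge, I would track the sequence of vertices of $V_\Gamma$ visited by the execution. Because $\{v_a(i), v_b(i), v_c(i)\}_{i \in P}$ is a partition of the statespace (Definition~\ref{def:abstraction} together with (\ref{eq:va})--(\ref{eq:vc})), at every (continuous or discrete) time the state lies in exactly one vertex of $V_\Gamma$. Let $v^* \in \Sigma_1$ be the \emph{first} vertex in $\Sigma_1$ that the trajectory reaches, and let $v$ be the vertex visited immediately before $v^*$. Since the trajectory starts in $\bigcup \Sigma_0$ and by hypothesis never leaves $\bigcup \Sigma_0 \cup \bigcup \Sigma_1$ before reaching $\Sigma_1$, minimality of $v^*$ forces $v \in \Sigma_0$.

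Next I would verify $(v,v^*) \in E_\Gamma$ by checking the conditions of Definition~\ref{def:prepares_graph}. The neighboring condition (Definition~\ref{def:neighboring_sets}) follows from continuity of the trajectory in the continuous-time case, which forces $\partial v \cap \partial v^* \neq \emptyset$, and from the $\delta$-bound on single-step motion in the discrete-time case. If $v = v_b(i)$, I additionally need $B_i \cap v^* \neq \emptyset$: throughout the visit to $v_b(i) \subseteq \Omega_i$ the controller $u_i$ is executed, by the consequence of Definition~\ref{def:operating} that $x \in \Omega_i$ implies $u_0(x) = u_i(x)$, and $B_i$ is positively invariant under $u_i$ by Definition~\ref{def:fts}; hence the state at the transition step lies in $B_i$, and since it also lies in $v^*$ we get the required nonempty intersection. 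The $i \neq j$ side-conditions in the $v_a \to v_a$ and $v_b \to v_a, v_b$ cases follow automatically from $v \neq v^*$ together with disjointness of $v_a(i), v_b(i), v_c(i)$ for each fixed $i$.

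The main obstacle I expect is the case $v = v_c(i)$, since $E_\Gamma$ contains no outgoing edges from $v_c$-vertices. Here I would argue that this case cannot arise under the hypotheses: $v_c(i) \subseteq \Omega_i \cap G_i$ where $u_i$ is executed and $G_i$ is positively invariant (Definition~\ref{def:fts}), so the trajectory cannot leave $G_i$ while it remains in $\Omega_i$, and the preceding vertex in the sequence must therefore be of $v_a$ or $v_b$ type; the verification step above then completes the proof. Everything else is routine bookkeeping.
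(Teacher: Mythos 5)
Your proof is correct and follows essentially the same route as the paper's: reduce the claim to exhibiting one edge of $E_\Gamma$ between an element of $\Sigma_0$ and an element of $\Sigma_1$ (which lifts to $E_\star$ by (\ref{eq:gamma_star_edges})), then do a case analysis on whether the source vertex is of type $v_a$ (all neighbors get edges), $v_b$ (neighbors meeting the invariant $B_i$), or $v_c$ (impossible to leave, by positive invariance of $G_i$). Your version is somewhat more explicit about identifying the first-entry vertex $v^*$ and its predecessor $v$, but the substance is identical.
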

 \begin{proof}
    We will first show that the statement holds if the two sets are taken from a prepares graph $\Gamma$.
    
    If $\Sigma_0=\{v_a(i)\}$ for some $i \in P$, then by the first three lines of (\ref{eq:prepares_edges}) we see that $v_a(i)$ has edges to the vertices of all neighboring sets.
    
    If $\Sigma_0=\{v_b(i)\}$ for some $i\in P$, we know that $B_i$ is invariant while $u_i$ executes. Thus the next set needs to be a neighbor of $v_b(i)$ and also have an overlap with $B_i$. Rows 4 through 6 of (\ref{eq:prepares_edges}) covers these cases.
    
    If $\Sigma_0=\{v_c(i)\}$ for some $i \in P$, we have that $\bigcup \Sigma_0$ is positively invariant since $G_i$ is positively invariant under $u_i$ and $v_c(i)=\Omega_i \cap G_i$. This violates the assumption that the state leaves $\Sigma_0$.
    
    Thus the statement holds for an ordinary prepares graph~$\Gamma$.
    For a condensed prepares graph $\Gamma_\star$, some vertices 
    correspond to multiple
    $\Gamma$ vertices, and edges exist in $\Gamma_\star$ whenever there is an edge between any elements of the two sets. Any transition between $\Sigma_0, \Sigma_1 \in V_\star$ implies a transition between some sets
    $\Sigma_0' \in \Sigma_0$ and $\Sigma_1' \in \Sigma_1$,
    which is exactly the edges created in (\ref{eq:gamma_star_edges}).
    Thus the statement holds also for $\Gamma_\star$.
\end{proof}
 
 A key prerequisite of the general convergence result below is the following assumption, stating that we will leave all vertices  $V_\star$, except the sinks $\overline{V}_\star$, in finite time.
 
 \begin{assumption}[Non-infinite cycles]
 \label{ass:non_infinite_cycles}
     The execution of (\ref{eq:bt_execution}) and (\ref{eq:bt_execution_discrete}) is such that there exists an upper time bound $T$ such that for every starting state $x(0) \in \bigcup \Sigma$ for  $\Sigma \in V_\star' \setminus \overline{V}_\star$ there exists a time $T'$ smaller than the bound, $T'<T$, such that $x(T') \not \in \bigcup \Sigma$.
 \end{assumption}
 
 We are now ready to state the general convergence theorem.
 
 \begin{theorem}[General convergence]
 \label{theorem:general_convergence}
Let Assumptions \ref{ass:fts} and \ref{ass:non_infinite_cycles}
 hold for some $T$, 
then there is a $T'' \leq |V_\star' \setminus \overline{V_\star}|T$ such that $x(T'') \in \bigcup \bigcup \overline{V}_\star$, that is the state will reach the sinks $\overline{V}_\star$ (goal regions) in finite time upper bounded by $|V_\star' \setminus \overline{V_\star}|T$, where $|V_\star' \setminus \overline{V_\star}|$ is the cardinality (size) of $V_\star' \setminus \overline{V_\star}$.
 \end{theorem}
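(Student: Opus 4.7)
The plan is to argue by a bounded iteration on the condensed prepares graph $\Gamma_\star$, using its acyclicity to prevent re-visits and Assumption~\ref{ass:non_infinite_cycles} to bound the time spent in each non-sink vertex. Implicitly we take a starting state $x(0)\in \bigcup \bigcup V_\star'$, which is the regime in which the statement is meaningful. The argument then constructs, step by step, a finite sequence $\Sigma_0, \Sigma_1, \dots, \Sigma_N$ of vertices of $V_\star'$ visited in turn by the trajectory, shows that $N\leq |V_\star'\setminus \overline{V}_\star|$, and concludes that after at most $NT$ time units the state lies in $\bigcup \bigcup \overline{V}_\star$.

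The first step is the base case: pick the unique $\Sigma_0 \in V_\star'$ with $x(0)\in \bigcup\Sigma_0$ (existence from the starting-state assumption, uniqueness from the partition property of the operating regions in Definition~\ref{def:abstraction}, lifted to $V_\star$). If $\Sigma_0 \in \overline{V}_\star$ we are done with $T''=0$. Otherwise, Assumption~\ref{ass:non_infinite_cycles} yields a first exit time $T_0' < T$ at which $x(T_0')\notin \bigcup\Sigma_0$; by Lemma~\ref{all_possible_transitions} the state enters some $\Sigma_1$ with $(\Sigma_0,\Sigma_1)\in E_\star$. The defining property of the analysis set $V_\star'$, namely that it has no outgoing edges in $E_\star$ to $V_\star\setminus V_\star'$, then forces $\Sigma_1 \in V_\star'$, so the same argument applies again at $\Sigma_1$.

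The inductive step repeats this reasoning, producing vertices $\Sigma_0,\Sigma_1,\Sigma_2,\dots$ in $V_\star'$ with each consecutive pair an edge of $\Gamma_\star$ and each dwell time bounded by $T$. Because $\Gamma_\star$ is a DAG by construction (Definition~\ref{def:condensed}), once the trajectory leaves a vertex it can never return to it; hence the $\Sigma_k$ are pairwise distinct. Since there are only $|V_\star' \setminus \overline{V}_\star|$ non-sink vertices in $V_\star'$, the sequence of non-sink vertices must terminate after at most that many steps, at which point $\Sigma_N \in \overline{V}_\star$. Accumulating the dwell times gives $T'' \leq |V_\star' \setminus \overline{V}_\star|\,T$, with $x(T'')\in \bigcup\Sigma_N \subseteq \bigcup \bigcup \overline{V}_\star$, as claimed.

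The main obstacle I anticipate is the exit/entry bookkeeping at boundaries: one must verify that when the trajectory leaves $\bigcup\Sigma_k$, it genuinely \emph{enters} some adjacent $\bigcup\Sigma_{k+1}$ rather than skipping over several vertices or landing in a measure-zero interface that defeats the definition of $\leq_\Gamma$. In continuous time this follows from continuity of the trajectory and the partition property together with the boundary condition in Definition~\ref{def:neighboring_sets}; in discrete time it follows from the $\delta$-step bound in that same definition, which is exactly why Lemma~\ref{all_possible_transitions} is phrased in terms of neighboring sets. A secondary point is making precise that the starting vertex $\Sigma_0$ is well-defined when $x(0)$ lies on the boundary of two equivalence classes, which we can dispense with by choosing any such $\Sigma_0$ containing $x(0)$; the finite-count argument is insensitive to the choice.
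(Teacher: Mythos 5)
Your proof is correct and takes essentially the same route as the paper's: the paper's own argument is a three-sentence compression of exactly this reasoning, invoking Lemma~\ref{all_possible_transitions} for the edge structure, Assumption~\ref{ass:non_infinite_cycles} for the per-vertex dwell bound $T$, and the acyclicity of $\Gamma_\star$ to cap the number of distinct non-sink vertices visited at $|V_\star' \setminus \overline{V_\star}|$. Your expansion into an explicit induction, with the observation that the no-outgoing-edges property of $V_\star'$ keeps the trajectory inside the analysis set, is a faithful (and somewhat more careful) elaboration of the same argument rather than a different one.
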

\begin{proof}
    We know by Lemma  \ref{all_possible_transitions} above that all possible transitions are represented as edges in $\Gamma_\star$. We also know that for vertices in $V_\star'$, all these transitions will happen within time $T$, since we know that the state will leave each set $\Sigma \in V_\star \setminus \overline{V}_\star$ corresponding to each non-sink vertex
    in finite time.
    The fact that there are no cycles in $\Gamma_\star$ thus implies that after time $|V_\star' \setminus \overline{V_\star}|T$ the state must have reached one of the sinks in $\overline{V}_\star$.
\end{proof}

\begin{remark}
Note that the bound above can be made less conservative by finding the longest path in $V_\star'$, which might be shorter than $|V_\star' \setminus \overline{V_\star}|$, and using this length for the upper bound. Furthermore, one could also find different time bounds for the different sets, instead of one bound $T$ for all, as in Assumption \ref{ass:non_infinite_cycles}.
\end{remark}

\begin{corollary}
\label{cor:fts}
    If Theorem  \ref{theorem:general_convergence} holds, then the whole BT $\bt_0$ is finite time successful (Definition \ref{def:fts}) with $\tau_0=|V_\star'|T$, $B_0=\bigcup \bigcup V_\star'$ and $G_0= \bigcup \bigcup (\overline{V}_\star \cap V_\star')$.
\end{corollary}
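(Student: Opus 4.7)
The plan is to verify the three conditions of Definition \ref{def:fts} for $\bt_0$ with the proposed $B_0 = \bigcup \bigcup V_\star'$, $G_0 = \bigcup \bigcup (\overline{V}_\star \cap V_\star')$, and $\tau_0 = |V_\star'|T$. The finite-time convergence requirement is essentially a restatement of Theorem \ref{theorem:general_convergence}: starting from any $x(0) \in B_0$, the state reaches $\bigcup \bigcup \overline{V}_\star$ within time $|V_\star' \setminus \overline{V}_\star|T \leq \tau_0$; combined with the invariance of $B_0$ established below, the reached state must lie in $\bigcup \bigcup (\overline{V}_\star \cap V_\star') = G_0$.

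Next I would establish positive invariance of both $B_0$ and $G_0$ by appealing to Lemma \ref{all_possible_transitions}. For $B_0$: any trajectory exiting $B_0$ would realize a transition from some vertex in $V_\star'$ to one outside, which the lemma certifies must be an edge in $\Gamma_\star$ --- contradicting the very definition of an analysis set. For $G_0$: the sinks $\overline{V}_\star$ have no outgoing edges in $\Gamma_\star$ by definition, and restricting to $\overline{V}_\star \cap V_\star'$ does not introduce any, so the same lemma rules out exits. The inclusion $G_0 \subseteq B_0$ is immediate from $\overline{V}_\star \cap V_\star' \subseteq V_\star'$.

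The main work, and the likely obstacle, is verifying the metadata inclusions $B_0 \subseteq R_0 \cup S_0$ and $G_0 \subseteq S_0$. The key identity here is that for $x \in \Omega_i$ with $i \in P$, the operating-region property yields $\bt_0(x) = \bt_i(x)$, so $r_0(x) = r_i(x)$: the root's metadata regions inherit those of $\bt_i$ on $\Omega_i$. The $v_b$- and $v_c$-type components behave cleanly, since $v_b(i) \subseteq B_i \subseteq R_i \cup S_i$ and $v_c(i) \subseteq G_i \subseteq S_i$ by Definition \ref{def:fts}. For $G_0 \subseteq S_0$, I would further argue that a sink of $\Gamma_\star$ can only contain $v_c$-type vertices: by (\ref{eq:prepares_edges}), every $v_a(i)$ and $v_b(i)$ admits at least one outgoing edge (to any neighboring set, respectively to neighbors overlapping $B_i$) under mild nondegeneracy, while $v_c(i)$ is itself positively invariant and hence forms a singleton equivalence class. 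The genuinely delicate case is $v_a(i) = \Omega_i \setminus B_i$ appearing inside $V_\star'$, which a priori could intersect $F_i$ when $i \in \mathfrak{S} \cap \mathfrak{F}$; here I would close the argument by noting the standing choice of abstraction $P$ (implicit in any useful application of the theorem) selects $i$ so that $\Omega_i \subseteq R_i \cup S_i$ --- e.g.\ nodes with $i \notin \mathfrak{F}$, where by (\ref{eq:operating}) the failure region is excluded from $\Omega_i$ --- forcing $v_a(i) \subseteq R_i \cup S_i$ and therefore $B_0 \subseteq R_0 \cup S_0$ at the root.
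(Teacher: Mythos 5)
Your overall route is sound and, in fact, considerably more detailed than the paper's own proof, which consists of the single sentence ``a straightforward application of Theorem~\ref{theorem:general_convergence}.'' The parts of your argument that handle the timing bound, the positive invariance of $B_0$ and $G_0$ via Lemma~\ref{all_possible_transitions} and the no-outgoing-edges property of the analysis set, the inclusion $G_0 \subseteq B_0$, and the key identity $r_0(x) = r_i(x)$ on $\Omega_i$ are all correct and are exactly the checks that Definition~\ref{def:fts} demands; the paper simply does not spell them out.

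The gap is in your final step. To close $B_0 \subseteq R_0 \cup S_0$ and $G_0 \subseteq S_0$ you invoke a ``standing choice of abstraction'' forcing $\Omega_i \subseteq R_i \cup S_i$, and a ``mild nondegeneracy'' forcing sinks to consist only of $v_c$-type vertices. Neither hypothesis appears anywhere in the paper: Definition~\ref{def:abstraction} permits any $P$ whose operating regions partition $\mathbb{R}^n$, and by (\ref{eq:operating}) an index $i \in \mathfrak{F}$ has $\Omega_i \supseteq I_i \cap F_i$, so $v_a(i) = \Omega_i \setminus B_i$ can meet $F_i$ and hence $F_0$; likewise an isolated $v_a(i)$ or a $v_b(i)$ whose $B_i$ overlaps no neighbouring set in $V_\Gamma$ is a legitimate sink that need not lie in $S_0$. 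So as written your proof does not establish the two metadata inclusions without importing assumptions the theorem does not provide. This is a genuine defect of the argument, but it is worth saying that it mirrors an imprecision in the corollary itself: the conclusion ``finite-time successful with $B_0 = \bigcup\bigcup V_\star'$'' is only literally compatible with Definition~\ref{def:fts} when every set in $V_\star'$ avoids $F_0$ and every sink in $\overline{V}_\star \cap V_\star'$ lies in $S_0$. If you add those two conditions as explicit hypotheses (they do hold in all of the paper's downstream uses, e.g.\ Corollaries~\ref{cor:sequence} and~\ref{cor:implicit_sequence}, where the analysis sets are built from $v_b$ and $v_c$ vertices only), your proof becomes complete and is strictly more informative than the paper's.
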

\begin{proof}
    A straightforward application of Theorem  \ref{theorem:general_convergence}.
\end{proof}

\begin{remark}
Note that Corollary \ref{cor:fts} enables the iterative application of Theorem \ref{theorem:general_convergence} on larger and larger BTs.
\end{remark}

Looking at the example of Fig. \ref{fig:condensation} we see that if we pick $V_\star'$ to be all  bold vertices (all except the bottom one) there are no edges leaving $V_\star'$. Furthermore we have $\overline{V}_\star= \{\{v_c(6)\}, \{v_c(7)\}\}$ as the sinks. The theorem now says that if the state leaves all non-sink vertices within some time that is smaller than $T$, after starting somewhere in $V_\star'$, it will reach the sinks within some time that is smaller than $|V_\star'|T=9 T$.
Similarly, if we pick $V_\star'= \{\{v_b(5)\}, \{v_b(6)\}, \{v_c(6)\}\}$, the theorem holds for this smaller set, and we start somewhere in $V_\star'$, we will reach the sinks within $|V_\star'|T=3 T$ (since we are closer to the goal).


    \begin{figure}[ht]
        \centering
        \begin{forest}
            for tree={
                minimum size=1.5em,
                inner sep=1pt,
                l=1.2cm,
                font=\footnotesize
            }
            [$\rightarrow$, draw, tikz={\node[left=4pt of .west]  {$0$};}
                [$?$, draw, tikz={\node[left=4pt of .west]  {$1$};}
                    [{$\begin{gathered} \text{Battery} \geq 20\% \\ \land \neg \text{Charging} 
                    \end{gathered}$}, draw, ellipse, tikz={\node[above left=4pt of .north west]  {$2$};}]
                    [$\rightarrow$, draw, tikz={\node[above=4pt of .north]  {$3$};}
                        [{$?$}, draw, tikz={\node[left=4pt of .west]  {$4$};}
                            [{$\begin{gathered} \text{At Home} \end{gathered}$}, draw, ellipse, tikz={\node[left=4pt of .west]  {$5$};}]
                            [$\rightarrow$, draw, tikz={\node[right=4pt of .east]  {$6$};}
                                [{$\begin{gathered} \text{Battery} > 0 \% \end{gathered}$}, draw, ellipse, tikz={\node[left=4pt of .west]  {$7$};}]
                                [{$\begin{gathered} \text{Go Home} \end{gathered}$}, draw, tikz={\node[above=4pt of .north]  {$8$};}]
                            ]
                        ]
                        [{$\begin{gathered} \text{Charge} \end{gathered}$}, draw, tikz={\node[right=4pt of .east]  {$9$};}],
                    ]
                ]
                [$?$, draw, tikz={\node[left=4pt of .west]  {$10$};}
                    [{$\begin{gathered} \text{Path} \\ \text{Surveyed} \end{gathered}$}, draw, ellipse, tikz={\node[left=4pt of .west]  {$11$};}]
                    [$\rightarrow$, draw, tikz={\node[right=4pt of .east]  {$12$};}
                        [{$?$}, draw, tikz={\node[left=4pt of .west]  {$13$};}
                            [{$\begin{gathered} \text{Near path} \end{gathered}$}, draw, ellipse, tikz={\node[above left=4pt of .north west]  {$14$};}]
                            [$\rightarrow$, draw, tikz={\node[right=4pt of .east]  {$15$};}
                                [{$\begin{gathered} \text{Battery} > 0 \% \end{gathered}$}, draw, ellipse, tikz={\node[above=4pt of .north]  {$16$};}]
                                [{$\begin{gathered} \text{Go to} \\ \text{Path} \end{gathered}$}, draw, tikz={\node[above=4pt of .north]  {$17$};}]
                            ]
                        ]
                        [{$\begin{gathered} \text{Follow Path} \end{gathered}$}, draw, tikz={\node[above=4pt of .north]  {$18$};}],
                    ]
                ]  
                [Idle, draw, tikz={\node[above=4pt of .north]  {$19$};}]
            ]
        \end{forest}
        \caption{A BC-BT for a mobile surveying robot. In this BT there exist cycles.}
        \label{fig:cycle_tree}
    \end{figure}
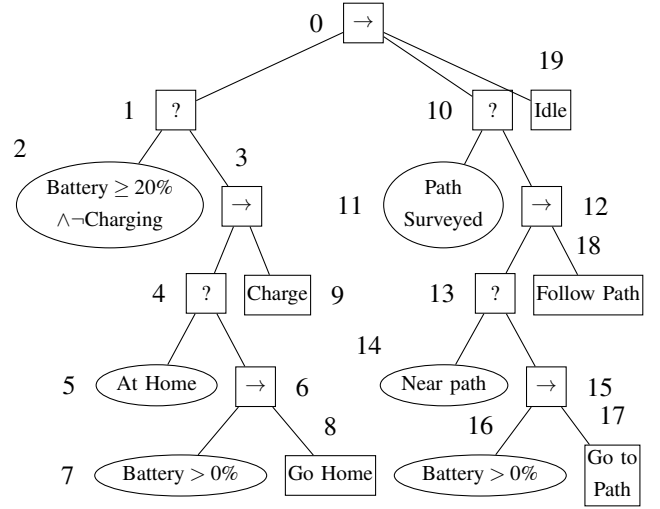
    
    \begin{figure}
        \centering
        \begin{tikzpicture}[main/.style = {draw}]
            \node[main, very thick] (8) [label=left:{$\Gamma$}] {$v_b(8)$};
            \node[main, very thick] (9) [right=1cm of 8] {$v_b(9)$};
            \node[main, very thick] (17) [right=1cm of 9] {$v_b(17)$};
            \node[main, very thick] (18) [right=1cm of 17] {$v_b(18)$};
            \node[main, very thick] (19) [below=1cm of 18] {$v_b(19)$};
            \node[main, very thick] (2) [below=1cm of 19] {$\{v_b(19)\}$};
            \node[main, very thick] (3) [left=1cm of 19] {$v_c(19)$};
            \node[main, very thick] (4) [below=1cm of 2] {$\{v_c(19)\}$};
            \node[main, very thick] (1) [label=left:{$\Gamma_\star$}, left=1cm of 2] {$\{v_b(8), v_b(9), v_b(17), v_b(18)\}$};
            \draw[->] (8) -- (9);
            \draw[->] (9) -- (17);
            \draw[->] (17) -- (18);
            \draw[->] (18) -- (19);
            \draw[->] (18) to [out=200, in=310] (8);
            \draw[->] (1) -- (2);
            \draw[->] (2) -- (4);
            \draw[->] (19) -- (3);
        \end{tikzpicture}
        \caption{The prepares graph $\Gamma$ (top) and the condensed prepares graph $\Gamma_\star$ (bottom) corresponding to the BT in Fig. \ref{fig:cycle_tree} and Example \ref{ex:cycle_tree}. Note how \emph{Follow Path} might lead to either the battery being below 20\%, or the path being surveyed.}
        \label{fig:cycle_convergence}
    \end{figure}
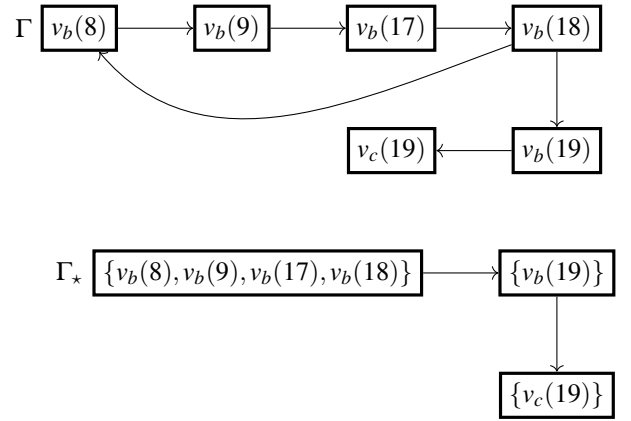
    
    \begin{example}\label{ex:cycle_tree}
        The BT in this example is following the so-called backward chained design principle, see \cite{colledanchise2019towards} for details.
        In the BT in Fig. \ref{fig:cycle_tree}, a robot is tasked with surveying a given path. However, the path is long, and the robot will need to go and recharge a number of times before completing the survey. Thus, there will be a cycle as shown in Fig. \ref{fig:cycle_convergence}.
        However, we know that eventually the path will be surveyed and, for each charging session, eventually the battery will be full.
        First, we pick the abstraction to be all the leaves of the tree that are actions (not the oval conditions that have $\Omega_i=\emptyset$), i.e.,  $P= \{8, 9, 17, 18, 19\}$.
        Then, if Assumption \ref{ass:fts} holds for this set, i.e., \emph{Go Home} will achieve \emph{At Home} in finite time and so on,
        we get the $\Gamma$ depicted in Fig. \ref{fig:cycle_convergence}. Note that in this example we also assume that $v_a(i)=\emptyset$, i.e., the actions always achieve their objectives, \emph{Go Home} will never fail to achieve \emph{At Home}.
        
        Let the set corresponding to the cycle in $\Gamma$ be $\Sigma_0 = \{v_b(8), v_b(9), v_b(17), v_b(18)\}$.
        Then, if the path is of finite length, and sufficiently close to the charging station, the robot will at some point manage to survey all of it. Thus, there exists $T \in \mathbb{R}_{>0}$ such that 
        $x(0) \in \bigcup \Sigma_0 = (\Omega_8 \cap B_8) \cup (\Omega_9 \cap B_9) \cup (\Omega_{17} \cap B_{17}) \cup (\Omega_{18} \cap B_{18})$ implies that there exists some $T'<T$ such that $x(T') \not \in \Sigma_0$.
        Theorem \ref{theorem:general_convergence} can now be applied, and states that there is a $T''\leq |V_\star'|T=3T$ such that 
        $x(T'') \in \bigcup \bigcup \overline{V}_\star=v_c(19)$, i.e. the path is surveyed, the battery is above $20 \%$ and the robot is successfully idling.
    \end{example}

    If there are no cycles in $\Gamma$, the two graphs $\Gamma$ and $\Gamma_\star$ are isomorphic and contain the same information. Furthermore, we get the following special case of the theorem above.

    \begin{corollary}[Acyclical convergence]\label{theorem:acyclical_convergence}
    If $| V_\star(i) | = 1$ for all $V_\star(i) \in V_\Gamma'$ there are no cycles in $\Gamma$.
    Then Assumption \ref{ass:non_infinite_cycles} holds by Assumption \ref{ass:fts}.
    Furthermore, if Theorem \ref{theorem:general_convergence} holds there will be no more than $|V_\star'|$ transitions between operating regions.
    \end{corollary}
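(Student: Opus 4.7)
The corollary bundles three claims, and I would address them in the order stated. For the first claim, that singleton equivalence classes in $V_\star'$ preclude cycles in $\Gamma$, I would argue by contrapositive. Suppose $\Gamma$ contains a nontrivial cycle $w_1 \to w_2 \to \dots \to w_k \to w_1$ with $k \ge 2$ among vertices of $V_\Gamma'$. Then for any pair on the cycle we have both $w_\ell \leq_\Gamma w_m$ and $w_m \leq_\Gamma w_\ell$, so they all lie in the same class of $\leq_\Gamma \cap \geq_\Gamma$. That class is then a vertex $V_\star(i)$ with $|V_\star(i)| \ge 2$, contradicting the hypothesis. Hence $\Gamma$ restricted to $V_\Gamma'$ is acyclic, and by Definition~\ref{def:condensed} the quotient map identifies $\Gamma$ and $\Gamma_\star$ on $V_\Gamma'$.

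For the second claim, that Assumption~\ref{ass:non_infinite_cycles} follows from Assumption~\ref{ass:fts}, I would proceed by cases on the type of singleton vertex. Since each $\Sigma \in V_\star'$ is a singleton, it has the form $\{v_s(j)\}$ for some $s \in \{a,b,c\}$ and $j \in P$. If $\Sigma = \{v_c(j)\}$, then $v_c(j) = \Omega_j \cap G_j$ is positively invariant under $u_j$ by Definition~\ref{def:fts}, so this $\Sigma$ is a sink in $\Gamma_\star$ and lies in $\overline{V}_\star$, hence need not be bounded. If $\Sigma = \{v_b(j)\}$, Assumption~\ref{ass:fts} guarantees that any trajectory starting in $B_j \supseteq v_b(j)$ enters $G_j$ within time $\tau_j$, so the state exits $v_b(j) = \Omega_j \cap (B_j \setminus G_j)$ within $\tau_j$. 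For $\Sigma = \{v_a(j)\}$ in $V_\star' \setminus \overline{V}_\star$, I would invoke the acyclic structure: a non-sink $v_a(j)$ has an outgoing edge, and since no cycle can route the state back, any admissible trajectory must leave in finite time (an alternative, consistent with Example~\ref{ex:cycle_tree}, is that the natural instances of this corollary have $v_a(j) = \emptyset$ whenever $v_a(j) \in V_\Gamma'$). Taking $T = \max_j \tau_j$ across the finitely many relevant $\Sigma$ yields the uniform bound of Assumption~\ref{ass:non_infinite_cycles}, and Theorem~\ref{theorem:general_convergence} then applies.

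For the third claim, bounding the number of transitions, I would use that any trajectory's sequence of visited vertices in $V_\star'$ is a path in the DAG $\Gamma_\star$ restricted to $V_\star'$. Since a path in a DAG does not repeat vertices, the state visits at most $|V_\star'|$ distinct vertices and therefore executes at most $|V_\star'|-1 < |V_\star'|$ transitions between operating regions.

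The hardest step is the $v_a(j)$ sub-case of the second claim: Assumption~\ref{ass:fts} says nothing about dynamics outside the DOA, so bounding the dwell time there is not a routine invocation of previous results. My plan is to handle it by noting that the acyclic hypothesis leaves only two possibilities, namely that such $v_a(j)$ is itself a sink (and so excluded from the set whose exit time must be bounded) or that it is empty by construction in the analysis set, as already occurs in Example~\ref{ex:cycle_tree}. If neither interpretation is desired, one must strengthen Assumption~\ref{ass:fts} to include a dwell-time bound for $v_a(j)$, and I would flag this explicitly.
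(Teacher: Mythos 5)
Your overall route is the same as the paper's: cycles in $\Gamma$ force a non-singleton equivalence class, so singletons imply acyclicity; the finite-time-successful property of Assumption \ref{ass:fts} bounds the dwell time in each non-sink vertex; and acyclicity of $\Gamma_\star$ restricted to $V_\star'$ bounds the number of transitions by $|V_\star'|$. Your treatments of the first and third claims, and of the $v_b(j)$ and $v_c(j)$ cases, match the paper (the paper is slightly more explicit that the trajectory either leaves $\Omega_j$ before reaching $G_j$, or reaches $G_j$ within $\tau_j$; either way it exits $v_b(j)$).

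The one place your argument goes wrong is the $v_a(j)$ sub-case: the claim that ``a non-sink $v_a(j)$ has an outgoing edge, and since no cycle can route the state back, any admissible trajectory must leave in finite time'' is a non sequitur. Acyclicity only constrains which transitions are \emph{possible}; it does not force any transition to occur, and Definition \ref{def:fts} gives no information about the dynamics on $\Omega_j \setminus B_j$, so the state could dwell there indefinitely. You correctly flag this yourself and offer the repair that such vertices must be empty, be sinks, or require an extra dwell-time hypothesis. For what it is worth, the paper's own proof has the same blind spot: it derives Assumption \ref{ass:non_infinite_cycles} solely from the finite-time-successful property, which only applies to executions starting inside $B_i$, and never mentions $v_a(j)$ at all. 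So your proposal is essentially the paper's argument, executed more carefully, with an honest acknowledgment of a gap the paper silently skips rather than a valid closing of it.
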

    \begin{proof}
        It is clear that $| V_\star(i) | = 1$ implies the absence of cycles, as each cycle leads to an equivalence class with more than one element, i.e. $| V_\star(i) | > 1$. The absence of cycles further more implies the upper bound of $|V_\star'|$ on the number of transitions.
        Finally, Assumption \ref{ass:fts} says that all vertices in $P$ are finite time successful, with an execution that reaches some state $x' \in G_i$ in a time bounded by $\tau_i$. If this state $x' \in G_i\cup \Omega_i$ we have either transitioned, or reached a sink before $\tau_i$, and if $x' \in G_i \setminus \Omega_i$ a transition will be made before reaching $x'$, within time $\tau_i$. Either way, Assumption \ref{ass:non_infinite_cycles} is satisfied.   
    \end{proof}

    \subsection{Generalizations}\label{sec:generalisations}

    In this section, we will show how the theorems presented in Section \ref{sec:convergence} generalize previous proofs in the literature:
    \cite[Lemma 2, p.7]{colledanchise2016behavior} (Corollary \ref{cor:sequence} below),
        \cite[Lemma 3, p.8]{colledanchise2016behavior} (Corollary \ref{cor:implicit_sequence} below),
        \cite[Theorem 4, p.6]{sprague2021continuous} (Corollary \ref{cor:sprague2021continuous} below),
        \cite[Theorem 1, p.7]{ogren2020convergence} (Corollary \ref{cor:ogren2020convergence} below).

    The following two Corollaries will cover the Lemmas of Colledanchise et al. for sequences (\ref{eq:sequence}) and fallbacks (\ref{eq:fallback}) (also called \say{implicit sequences} in the context of convergence).
    \begin{corollary}[{\cite[Lemma 2, p.7]{colledanchise2016behavior}}]\label{cor:sequence}
        If $\bt_1, \bt_2$ are finite-time successful,
        $\bt_0 = Seq[\bt_1, \bt_2]$ and 
        $S_1 = B_2$,
        then $\bt_0$ is finite-time successful with $B_0 = B_1 \cup B_2$.
    \end{corollary}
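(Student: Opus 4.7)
The plan is to apply Theorem \ref{theorem:general_convergence} (or its acyclic specialization Corollary \ref{theorem:acyclical_convergence}) to the abstraction $P = \{1, 2\}$, i.e.\ the two children of the root Sequence. First I would compute the operating regions from Definition \ref{def:influence} and Equation (\ref{eq:operating}). Since node $1$ has no left uncles, $I_1 = \mathbb{R}^n$; since node $2$'s only left uncle is $1$ (whose parent is a Sequence), $I_2 = S_1$. Node $1$ has a right sibling under a Sequence parent so $1 \in \mathfrak{F}\setminus\mathfrak{S}$, while node $2$ is in $\mathfrak{S}\cap\mathfrak{F}$. Therefore $\Omega_1 = R_1 \cup F_1$ and $\Omega_2 = S_1$, which are disjoint and together cover $\mathbb{R}^n$, so $P$ is a valid abstraction.

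Next I would compute the vertex subsets of Definition \ref{def:prepares_graph}. Using $B_1 \subseteq R_1 \cup S_1$, $G_1 \subseteq S_1$, $G_2 \subseteq B_2$, and the hypothesis $S_1 = B_2$, one obtains
\begin{align*}
v_b(1) &= (R_1 \cup F_1) \cap (B_1 \setminus G_1) = B_1 \cap R_1,\\
v_c(1) &= (R_1 \cup F_1) \cap G_1 = \emptyset,\\
v_b(2) &= S_1 \cap (B_2 \setminus G_2) = B_2 \setminus G_2,\\
v_c(2) &= S_1 \cap G_2 = G_2.
\end{align*}
I would then take the analysis set $V_\Gamma' = \{v_b(1), v_b(2), v_c(2)\}$. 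Because $B_1$ is positively invariant under $\bt_1$, the only way to leave $v_b(1)$ is into $S_1 = B_2 \subseteq \Omega_2$, and by finite-time success of $\bt_1$ the trajectory reaches $G_1 \subseteq S_1 = B_2$ in bounded time; hence edges out of $v_b(1)$ land only in $v_b(2) \cup v_c(2)$. Similarly, since $B_2$ is invariant under $\bt_2$ and $\Omega_2 = B_2$, $v_b(2)$ only has outgoing edges to $v_c(2)$, which is a sink by invariance of $G_2$. So $V_\Gamma'$ has no outgoing edges to its complement, no cycles, and $\overline{V}_\star \cap V_\star' = \{\{v_c(2)\}\}$.

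Finally I would verify Assumptions \ref{ass:fts} and \ref{ass:non_infinite_cycles}: the former is given by hypothesis; the latter follows immediately from finite-time success of $\bt_1$ and $\bt_2$ and the absence of cycles, as spelled out in Corollary \ref{theorem:acyclical_convergence}. Theorem \ref{theorem:general_convergence} together with Corollary \ref{cor:fts} then yield that $\bt_0$ is finite-time successful with
\[
B_0 \;=\; \bigcup\bigcup V_\star' \;=\; (B_1 \cap R_1) \cup (B_2 \setminus G_2) \cup G_2 \;=\; (B_1 \cap R_1) \cup B_2,
\]
and since $B_1 \subseteq R_1 \cup S_1 = R_1 \cup B_2$, this simplifies to $B_0 = B_1 \cup B_2$, with $G_0 = G_2$ and $\tau_0 \le 2\,\max(\tau_1,\tau_2)$.

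I expect the only delicate step to be the bookkeeping of $\Omega_i$ and the vertex sets $v_a, v_b, v_c$: one has to notice that $v_c(1) = \emptyset$ because $G_1$ lies in $S_1$, which is outside $\Omega_1$, so the ``goal'' of $\bt_1$ appears inside $v_b(2) \cup v_c(2)$ rather than inside any set attached to node $1$. Once this is in place, the hypothesis $S_1 = B_2$ exactly supplies the invariance needed to confine the single edge $v_b(1) \to v_b(2)\cup v_c(2)$, and the rest is a direct invocation of the general theorem.
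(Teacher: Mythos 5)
Your proposal is correct and follows essentially the same route as the paper's proof: the same abstraction $P=\{1,2\}$, the same operating regions $\Omega_1 = R_1\cup F_1$ and $\Omega_2 = S_1 = B_2$, the same analysis set $V_\Gamma' = \{v_b(1), v_b(2), v_c(2)\}$, and the same final appeal to the acyclic corollary and Corollary \ref{cor:fts}. You are in fact somewhat more explicit than the paper in computing the sets $v_a, v_b, v_c$ and in verifying that $\bigcup\bigcup V_\star'$ really equals $B_1\cup B_2$, which the paper asserts without the intermediate step $(B_1\cap R_1)\cup B_2 = B_1\cup B_2$.
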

    \begin{proof}
        According to (\ref{eq:operating}), the leaf operating regions are given by
        $\Omega_1 = R_1 \cup F_1$ and
        $\Omega_2 = S_1 \cap (R_2 \cup S_2 \cup F_2) = S_1$,
        but since $S_1 = B_2$, we must have that 
        the second sub-BT's operating region is fully composed of its DOA, i.e. $\Omega_2 = B_2$. 
        Thus, according to (\ref{eq:prepares_edges}), we must have 
        $(v_b(2), v_b(1)) \not \in E_\Gamma$
        and
        $(v_b(2), v_c(2)) \in E_\Gamma$.
        Additionally, since the first sub-BT's goal region is in the success region of its DOA, i.e. $G_1 \subseteq B_1 \cap S_1$ (which is contained in $\Omega_2$), we must have
        $(v_b(1), v_c(1)) \not \in E_\Gamma$ 
        and 
        $(v_b(1), v_b(2)) \in E_\Gamma$.


        
        Thus, $V_\Gamma' = \{v_b(1), v_b(2), v_c(2)\}$ is a valid analysis set. 
        Furthermore, if Corollary \ref{theorem:acyclical_convergence} holds, then by Corollary \ref{cor:fts}, Assumption \ref{ass:fts} holds for the whole BT $\bt_0$ with $B_0 = B_1 \cup B_2$ and $\tau_0 = \tau_0 + \tau_1$.
    \end{proof}

    \begin{corollary}[{\cite[Lemma 3, p.8]{colledanchise2016behavior}}]\label{cor:implicit_sequence}
        If $\bt_1, \bt_2$ are finite-time successful,
        $\bt_0 = Fal[\bt_1, \bt_2]$,
        $B_1 = R_1 \cup S_1$,
        and 
        $S_2 \subseteq B_1$
        then $\bt_0$ is finite-time successful with $B_0 = B_1 \cup B_2$.
    \end{corollary}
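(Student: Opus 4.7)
The plan is to mirror the proof strategy of Corollary \ref{cor:sequence}, adapted for the Fallback structure. First I would compute the operating regions via (\ref{eq:operating}): since $\bt_1$ is the left child of a Fallback we have $\bt_1 \in \mathfrak{S}$ but $\bt_1 \notin \mathfrak{F}$, yielding $\Omega_1 = I_1 \cap (R_1 \cup S_1) = R_1 \cup S_1$ (as $I_1 = \mathbb{R}^n$). Similarly $\bt_2 \in \mathfrak{S} \cap \mathfrak{F}$ gives $\Omega_2 = I_2 = F_1$. Using the hypothesis $B_1 = R_1 \cup S_1$ we immediately get $\Omega_1 = B_1$ and therefore $v_a(1) = \Omega_1 \setminus B_1 = \emptyset$.

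Next I would exploit the hypothesis $S_2 \subseteq B_1 = R_1 \cup S_1$. Since the metadata regions $R_1, S_1, F_1$ are pairwise disjoint, $S_2 \cap F_1 = \emptyset$. Because $G_2 \subseteq S_2$, this forces $v_c(2) = \Omega_2 \cap G_2 = F_1 \cap G_2 = \emptyset$. So both $v_a(1)$ and $v_c(2)$ drop out of the prepares graph.

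Then I would enumerate the edges out of the remaining vertices via (\ref{eq:prepares_edges}). From $v_b(1)$, positive invariance of $B_1$ under $\bt_1$'s execution requires the next set to overlap $B_1 = \Omega_1$; this rules out all $\Omega_2$-based vertices, leaving only $v_c(1) = G_1$ as a possible target. From $v_b(2)$, positive invariance of $B_2$ under $\bt_2$ requires overlap with $B_2$; the only candidates are $v_b(1)$ and $v_c(1)$ (since $v_c(2) = \emptyset$, $v_a(1) = \emptyset$, and $v_a(2) = F_1 \setminus B_2$ lies outside $B_2$). From $v_c(1)$ there are no outgoing edges by goal invariance. Hence $V_\Gamma' = \{v_b(1), v_c(1), v_b(2)\}$ is a valid analysis set, and a direct set computation gives $\bigcup V_\Gamma' = B_1 \cup (F_1 \cap B_2) = B_1 \cup B_2$, where the last equality uses $B_2 \setminus B_1 \subseteq F_1$. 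The induced subgraph is acyclic, so Corollary \ref{theorem:acyclical_convergence} supplies Assumption \ref{ass:non_infinite_cycles} from Assumption \ref{ass:fts}, and Theorem \ref{theorem:general_convergence} combined with Corollary \ref{cor:fts} yields finite-time successfulness of $\bt_0$ with $B_0 = B_1 \cup B_2$ (and $G_0 = G_1$, with $\tau_0 \leq \tau_1 + \tau_2$).

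The main obstacle is the careful bookkeeping that collapses $v_c(2)$ and $v_a(1)$ to the empty set; those collapses are what prevent edges from leaking out of $V_\Gamma'$ into $v_a(2)$ (the ``unsafe'' region where $\bt_2$ executes outside its DOA). A minor conceptual subtlety worth highlighting is that, unlike the Sequence case of Corollary \ref{cor:sequence}, the overall goal $G_0$ here coincides with $G_1$ rather than spanning both children, because every successful $\bt_2$ run routes the state back into $B_1$ and the ultimate success is produced by $\bt_1$.
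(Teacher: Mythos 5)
Your proof is correct and follows essentially the same route as the paper's: compute $\Omega_1 = B_1$ and $\Omega_2 = F_1 \cap (R_2\cup F_2)$, observe that the prepares-graph edges force $v_b(2)\to v_b(1)\to v_c(1)$ with no escape to $v_a(2)$, take $V_\Gamma'=\{v_b(1),v_b(2),v_c(1)\}$, and invoke the acyclic corollary. Your version is in fact slightly more careful than the paper's, since you explicitly verify $v_a(1)=v_c(2)=\emptyset$ and that $\bigcup V_\Gamma' = B_1\cup(F_1\cap B_2)=B_1\cup B_2$, and you correct the paper's typos ($v_c(2)$ for $v_c(1)$ in $V_\star'$, and $\tau_0=\tau_0+\tau_1$ for $\tau_0\le\tau_1+\tau_2$).
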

    \begin{proof}
        According to (\ref{eq:operating}), the leaf operating regions are given by
        $\Omega_1 = R_1 \cup S_1$ and
        $\Omega_2 = F_1 \cap (R_2 \cup S_2 \cup F_2) = F_1$,
        but $S_2 \subseteq B_1$,
        thus $\Omega_2 = F_1 \cap (R_2 \cup F_2)$.
        Also, we have that $\Omega_1 = B_1$,
        thus $(v_b(1), v_b(2)) \not \in E_\Gamma$
        and $(v_b(1), v_c(1)) \in E_\Gamma$.
        Since the goal region of the second sub-BT is contained in the success part of its DOA, i.e. $G_2 \subseteq B_2 \cap S_2$ (which is contained in $\Omega_1$),
        we must have $(v_b(2), v_c(2)) \not \in E_\Gamma$
        and $(v_b(2), v_b(1)) \in E_\Gamma$.
        

        Thus, $V_\Gamma' = \{v_b(2), v_b(1), v_c(1)\}$ is a valid analysis set with a corresponding condensed analysis set $V_\star' = \{\{v_b(1)\}, \{v_b(2)\}, \{v_c(2)\}\}$.
        Thus, if Corollary \ref{theorem:acyclical_convergence} holds, then by Corollary \ref{cor:fts},  Assumption \ref{ass:fts} holds for the whole BT $\bt_0$ with $B_0 = B_1 \cup B_2$ and $\tau_0 = \tau_0 + \tau_1$.
    \end{proof}

    For the results of Sprague et al. \cite[Theorem 4, p.6]{sprague2021continuous} to be covered by Theorem \ref{theorem:general_convergence}, we need to add  Assumption \ref{ass:fts}, since this aspect was dealt with slightly differently in \cite{sprague2021continuous}.

     \begin{corollary}[{\cite[Theorem 4, p.6]{sprague2021continuous}}]\label{cor:sprague2021continuous}
     If Assumption \ref{ass:fts} holds, there exists a subset $L \subseteq P$ and a partial order $\leq_f \subset L^2$
    such that the region
    
    \begin{equation}\label{eq:constraint}
        \Lambda_i := \bigcup_{j \geq_f i} \Omega_j \setminus F_0
    \end{equation}
    is invariant under $f(x, u_i(x))$ for all $i \in L$,
    and there exists a finite time $\tau_i > 0$,
    such that if 
    $x(t) \in \Omega_i \setminus S_0$
    then 
    $x(t + \tau_i) \not \in \Omega_i \setminus S_0$
    for all $i \in L$,
    then there exists a maximum number of transitions
    $N \in \mathbb{N}$ and a maximum duration $t' > 0$,
    such that if $x(0) \in \Lambda_i$ for any $i \in L$,
    then 
    $x(t) \in S_0$ 
    in bounded time $t \leq t'$ within $N$ transitions.
    \end{corollary}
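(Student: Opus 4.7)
The plan is to specialize Theorem~\ref{theorem:general_convergence} (via its acyclic version, Corollary~\ref{theorem:acyclical_convergence}) by using $L$ as the abstraction of Definition~\ref{def:abstraction}. Assumption~\ref{ass:fts} holds by hypothesis. For Assumption~\ref{ass:non_infinite_cycles}, I would take $T := \max_{i \in L}\tau_i$: the finite-time exit condition guarantees that from any $x(t) \in \Omega_i \setminus S_0$ with $i \in L$, the state leaves $\Omega_i \setminus S_0$ within time $\tau_i \leq T$, which supplies the required exit time on every non-sink vertex of the analysis set chosen below.

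The key structural step is to use the partial order $\leq_f$ to enforce acyclicity. Positive invariance of $\Lambda_i = \bigcup_{j \geq_f i}\Omega_j \setminus F_0$ under $u_i$ means that any edge of $E_\Gamma$ originating at $v_b(i)$ must end at some $v_\bullet(j)$ with $j \geq_f i$, and no edge leads into $F_0$. Antisymmetry of $\leq_f$ then collapses any mutually reachable pair $v_b(i), v_b(j)$ in this restricted subgraph to $i = j$, so each equivalence class of $(\leq_\Gamma \cap \geq_\Gamma)$ is a singleton and $\Gamma_\star$ is acyclic on it.

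Given a starting state $x(0) \in \Lambda_{i_0}$ with $i_0 \in L$, I would choose the analysis set
\[
V_\Gamma' := \bigcup_{j \geq_f i_0}\bigl\{v_b(j), v_c(j)\bigr\},
\]
which is closed under outgoing edges by the invariance of $\Lambda_{i_0}$ combined with $B_j \subseteq R_j \cup S_j$ from Definition~\ref{def:fts}. Corollary~\ref{theorem:acyclical_convergence} then bounds the number of transitions by $N \leq |V_\Gamma'|$ and the total elapsed time by $t' \leq N T$.

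The main obstacle is to match the sinks of $\Gamma_\star$ restricted to $V_\Gamma'$ with the global success region $S_0$. A sink corresponds to a $\leq_f$-maximal index $j$: from such a $j$ the finite-time exit condition forces the state to leave $\Omega_j \setminus S_0$ within $\tau_j$, while invariance of $\Lambda_j = \Omega_j \setminus F_0$ (which collapses to $\Omega_j \setminus F_0$ at a maximal element) prevents it from leaving $\Omega_j$ altogether since no strictly larger successor exists. The only remaining option is $x \in \Omega_j \cap S_0 \subseteq S_0$, and iterating this observation along any chain ending at a $\leq_f$-maximal element closes the argument with the desired bounds $N$ and $t'$.
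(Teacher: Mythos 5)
Your proposal is correct and follows essentially the same route as the paper: reduce to Theorem~\ref{theorem:general_convergence} (via the acyclic case) by deriving the absence of cycles from the invariance of the nested regions $\Lambda_i$ together with the antisymmetry of $\leq_f$, obtaining Assumption~\ref{ass:non_infinite_cycles} with $T=\max_i \tau_i$ from the finite exit times, and identifying the $\leq_f$-maximal elements (the sinks of $\Gamma_\star$) with $S_0$ --- your closing argument for that last identification is in fact more explicit than the paper's. The only slip is calling $L$ the abstraction: since $\{\Omega_i\}_{i\in L}$ need not cover $\mathbb{R}^n$, the abstraction must remain $P$ and $L$ plays the role of the analysis set $V_\Gamma'$, exactly as the paper states, which is also how your construction effectively uses it.
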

    \begin{proof}
    The abstraction $P$ is the same in the two papers (obeying Definition \ref{def:abstraction}), and $L$ corresponds to the analysis set $V_\Gamma'$.
    We will first show that the statements above imply that $\Gamma$ has no cycles.
    
    The definition of $\Lambda_i$ above implies that 
    $\Lambda_i \subseteq \Lambda_j$ if $i \geq_f j$. The fact that all $\Lambda_i$ are invariant implies that if $i \geq_f j$ there can only be a transition from $\Omega_j$ to $\Omega_i$ and not the other way around. Therefore $\Gamma$ cannot have any cycles.
    
    Furthermore, $S_0$ is the global success region, corresponding to the sinks $\overline{V}_\star$.
    The fact that the state always leaves $\Omega_i \setminus S_0$ in finite time implies that Assumption~\ref{ass:non_infinite_cycles} holds.

    Now, Theorem \ref{theorem:general_convergence} and Corollary \ref{cor:fts} imply that there are no more than 
    $|V_\star'|$ transitions and $x(T'') \in \bigcup \bigcup \overline{V}_\star$ for some bounded time $T''$.
    Furthermore, it is clear that $\leq_\Omega$ (Definition \ref{def:behavior_graph}) would be a partial order and isomorphic to $\leq_f$.
    \end{proof}

    \subsection{Backchaining}
    
    In this section, we will show how the presented theorem generalizes the theorem on so-called backchained BTs \cite[Theorem 1, p.7]{ogren2020convergence}.
    A \textit{backchained BT} (BC-BT) \cite{colledanchise2019towards, ogren2020convergence} is a BT  that is automatically generated from a set of \textit{action} and \textit{condition} BTs, where each action BT has a
    set of \textit{preconditions} that need to be satisfied in order for the action to work, see Table \ref{tab:action}, and each condition has a set of actions that achieves it, see Table \ref{tab:condition}.
    With this, we formalize the concept of \textit{action and condition libraries}, as follows.

    \begin{definition}[Action and condition libraries]\label{def:libraries}
        \label{def:post-pre-conditions}
        Mutually disjoint finite index sets, $A$ and $C$, respectively, that satisfy the following.
        
        Action library $A$: For all $i \in A$, there exists a FTS BT $\bt_i$ (Def. \ref{ass:fts}) and 
        a minimal totally ordered subset $(C_i, \leq_{C_i})$,
        such that $\bigcap_{j \in C_i} S_j = B_i$ and $\{C_i\}_{i \in A}$ is pairwise disjoint.
        The subset $C_i \subseteq C$ is the preconditions of action $i \in A$; it is minimal because we do not want to have to satisfy extra unneeded conditions.
        
        Condition library $C$: For all $i \in C$, there exists a BT $\bt_i$
        with
        $R_i = \emptyset$ (a condition) 
        and a maximal totally ordered subset $(A_i, \leq_{A_i})$,
        such that,
        $\bigcup_{j \in A_i} S_j \subseteq S_i$, and $\{A_i\}_{i \in C}$ is pairwise disjoint.
        The subset $A_i \subseteq A$ is the actions whose postcondition is $i \in C$; it is maximal because we want as many actions as possible to fallback to in order to achieve the postcondition.
        
        
        
    \end{definition}

    \begin{table}[!h]
        \centering
                \caption{An example action library}
        \begin{tabular}{l|l}
           Actions $i \in A$ & Pre-condition(s) $C_i$ \\
           \hline
           Grasp  & Object reachable \\
           Place at goal  & Object in gripper, Near goal \\
           Goto safe area & Safe area reachable \\
           Idle & In safe area, Object at goal \\
           \vdots  & \vdots 
        \end{tabular}
        \label{tab:action}
    \end{table}
    
    \begin{table}[!h]
        \centering
                \caption{An example condition library}
        \begin{tabular}{l|l}
           Conditions $i \in C$ & Action(s) $A_i$ that achieves $i$  \\
           \hline
           Object at goal  & Place at goal \\
           In safe area & Goto safe area \\
           Near object  & Goto object \\
           \vdots  & \vdots 
        \end{tabular}
        \label{tab:condition}
    \end{table}
    
     Note that the $C_i$ are indeed preconditions in the sense that we require $\bigcap_{j \in C_i} S_j = B_i$ i.e. that the intersection of the success regions $S_i$ of the conditions is the DOA of the action, and $R_i=\emptyset$ i.e., they are conditions (that never return running), not actions. 
    
    Similarly, $A_i$ do indeed achieve condition $i$ in the sense that $\bigcup_{j \in A_i} S_j \subseteq S_i$, i.e., the success region of the actions are inside the success region of the condition.
    
    Now, given an action library and a condition library, we can automatically construct a BC-BT, as shown in the following definition and Fig. \ref{fig:bc-bt}, following \cite[Algorithm 1, p.4]{ogren2020convergence}
    An example of a BC-BT is illustrated in Fig. \ref{fig:bc-bt}.
    
    \begin{definition}[Backchained BT]\label{def:bcbt}
        A BT defined recursively for an action $i \in A$ as
        \begin{equation}\label{eq:bcbt}
            \Pi\left[i\right] := 
                Seq \left[ 
                    \left(
                        Fal \left[
                            \bt_j, 
                            \left(
                                \Pi[k]
                            \right)_{k \in A_j}
                        \right]
                    \right)_{j \in C_i},
                    \bt_i
                \right]
        \end{equation}
        where
        $(\cdot)_{j \in A_i}$
        and
        $(\cdot)_{k \in C_j}$
        are enumerations of
        $A_i$
        and $C_j$
        that are totally ordered $\leq_{A_i}$ and $\leq_{C_j}$, respectively.
        Note that the recursion (\ref{eq:bcbt}) will halt when either
        $C_i = \emptyset$ or $A_j = \emptyset$.
    \end{definition}

    \begin{figure}[ht]
        \centering
        \begin{forest}
            for tree={
                minimum size=1.5em,
                inner sep=1pt,
                l=1.2cm,
                font=\footnotesize
            }
            [$\rightarrow$, draw
                [$?$, draw
                    [{$\begin{gathered}\text{In safe} \\ \text{area} \end{gathered}$}, draw, ellipse, fill=SeaGreen,  tikz={\node[below=1pt of .south]  {$1$};}]
                    [$\rightarrow$, draw
                        [$?$, draw
                            [{$\begin{gathered}\text{Safe area} \\ \text{reachable}\end{gathered}$}, draw, ellipse, tikz={\node[below=1pt of .south]  {$2$};}]
                        ]
                        [{$\begin{gathered}\text{Go to} \\ \text{safe area}\end{gathered}$}, draw, tikz={\node[below=1pt of .south] {$9$};}]
                    ]
                ]
                [$?$, draw
                    [{$\begin{gathered}\text{Object at} \\ \text{goal}\end{gathered}$}, draw, ellipse, fill=BurntOrange, tikz={\node[above=1pt of .north]  {$3$};}]
                    [$\rightarrow$, draw
                        [$?$, draw
                            [{$\begin{gathered}\text{Object in} \\ \text{gripper}\end{gathered}$}, draw, ellipse, fill=SeaGreen, tikz={\node[below=1pt of .south]  {$4$};}]
                        [{$\rightarrow$}, draw
                            [$?$, draw,
                                [{$\begin{gathered}\text{Near} \\ \text{object}\end{gathered}$}, draw, ellipse, tikz={\node[left=1pt of .west]  {$5$};}]
                                [$\rightarrow$, draw
                                    [$?$, draw
                                    [{$\begin{gathered} \text{Object} \\ \text{reachable}\end{gathered}$}, draw, ellipse, tikz={\node[left=1pt of .west]  {$6$};}]
                                    ]
                                    [{$\begin{gathered} \text{Go to} \\ \text{object}\end{gathered}$}, draw, tikz={\node[below=1pt of .south]  {$10$};}]
                                ]
                            ]
                            [{$\begin{gathered} \text{Grasp} \\ \text{object}\end{gathered}$}, draw, tikz={\node[below=1pt of .south] {$11$};}]
                        ]
                        ]
                        [$?$, draw
                            [{$\begin{gathered}\text{Near} \\ \text{goal} \end{gathered}$}, draw, ellipse, fill=BurntOrange, tikz={\node[above=1pt of .north]  {$7$};}]
                            [$\rightarrow$, draw
                                [$?$, draw 
                                [{$\begin{gathered}\text{Goal} \\ \text{reachable} \end{gathered}$}, draw, fill=Cerulean, ellipse, tikz={\node[below=1pt of .south]  {$8$};}]
                                ]
                                [{$\begin{gathered}\text{Go to} \\ \text{goal} \end{gathered}$}, draw, very thick, tikz={\node[below=1pt of .south]  {$12$};}]
                            ]
                        ]
                        [{$\begin{gathered} \text{Place} \\ \text{object}\end{gathered}$}, draw, tikz={\node[below=1pt of .south]  {$13$};}]
                    ]
                ]
                [{$\begin{gathered} \text{Idle} \\ \text{(all objectives achieved)}\end{gathered}$}, draw, tikz={\node[above=1pt of .north]  {$14$};}]
            ]
        \end{forest}
        \caption{A BC-BT for adapted from the mobile manipulator example in \cite{ogren2020convergence}.
        The success regions of $\bt_j : j \in C_{12}^\text{ACC}$ (green) and $\bt_j : j \in C_{12}$ (blue) are positively invariant with respect to the execution of $\bt_{12}$ (bolded) if $B_i \subseteq \bigcap_{j \in C_{12}^\text{ACC}} S_j$.
        However, the failure regions of $\bt_j : j \in C_{12}^\text{Post}$ (orange) are not positively invariant (see Corollary \ref{cor:ogren2020convergence}).
        }
        \label{fig:bc-bt}
    \end{figure}
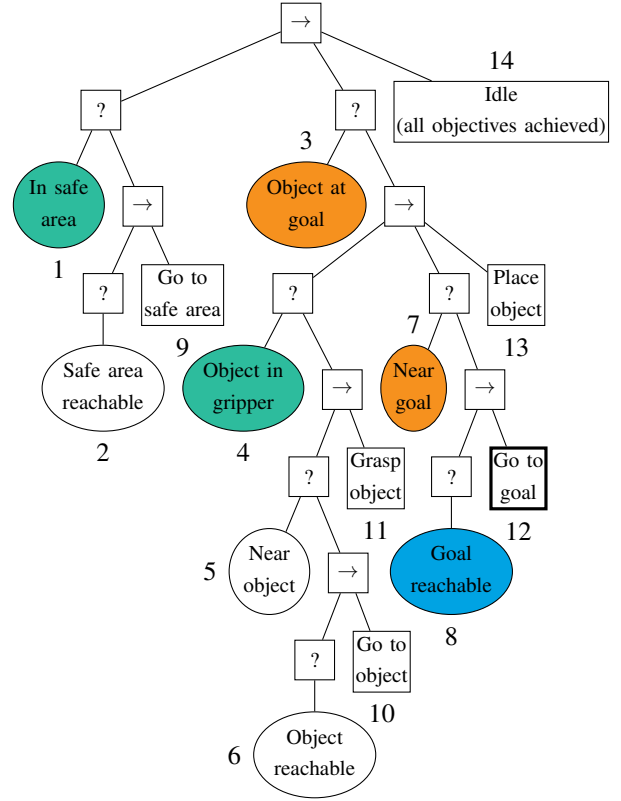

\begin{example}
    To illustrate how the BT in Fig.~\ref{fig:bc-bt} was created we imagine starting with the action Idle, in the last row of Table~\ref{tab:action}. Here the pre-conditions are not given by what is physically possible, but rather by the will of the user. The user wants the robot to idle only when it is in a safe area and the object is at the goal. Now imagine that Table~\ref{tab:condition} was empty. Then the recursion of Equation (\ref{eq:bcbt}) would end immediately, with $A_j=\emptyset$ and we would get the BT in Fig.~\ref{fig:top_level_bc-bt}, which is similar to the top level in Fig.~\ref{fig:bc-bt}.
    If Tables ~\ref{tab:action},\ref{tab:condition} were as given above, the recursion would continue a few more steps and we would get the upper half of Fig.~\ref{fig:bc-bt}. Similarly, one can imagine that with a few extra lines in each table, we would get the full BT of Fig.~\ref{fig:bc-bt}.
\end{example}

 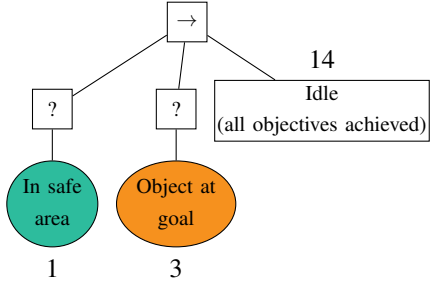
\begin{figure}[ht]
        \centering
        \begin{forest}
            for tree={
                minimum size=1.5em,
                inner sep=1pt,
                l=1.2cm,
                font=\footnotesize
            }
            [$\rightarrow$, draw
                [$?$, draw
                    [{$\begin{gathered}\text{In safe} \\ \text{area} \end{gathered}$}, draw, ellipse, fill=SeaGreen,  tikz={\node[below=1pt of .south]  {$1$};}]
                ]
                [$?$, draw
                    [{$\begin{gathered}\text{Object at} \\ \text{goal}\end{gathered}$}, draw, ellipse, fill=BurntOrange, tikz={\node[below=1pt of .south]  {$3$};}]
                ]
                [{$\begin{gathered} \text{Idle} \\ \text{(all objectives achieved)}\end{gathered}$}, draw, tikz={\node[above=1pt of .north]  {$14$};}]
            ]
        \end{forest}
        \caption{The BT resulting from applying Equation (\ref{eq:bcbt}) to the action Idle of Table~\ref{tab:action}, if Table~\ref{tab:condition} was empty (resulting in no recursive calls in (\ref{eq:bcbt})). Note the similarity to the top part of Fig.~\ref{fig:bc-bt}. If the top-level goals of being in the safe area and having the object at the goal are satisfied, the robot can idle. However, this small BT has no actions for achieving those goals. To add those, Table~\ref{tab:condition} needs to be non-empty, leading to more steps in the recursion being executed.
        }
        \label{fig:top_level_bc-bt}
    \end{figure}

    The metadata regions of the sub-BTs in a BC-BT (\ref{eq:bcbt}) can be computed similarly to (\ref{eq:sequence-metadata}) and (\ref{eq:fallback-metadata}).
    But, instead of analyzing sequences and fallbacks as before, here we will analyze \textit{action sub-BTs} and \textit{condition sub-BTs}.

    An action sub-BT ($\bt_{p(i)} : i \in A$) in a BC-BT will either be a Sequence with an action as its only child
    or a Sequence with an action and its set of preconditions ($j \in C_i$).
    Each of such preconditions will be a condition sub-BT ($\bt_{p(j)} : j \in C$), which will either be a Fallback with a condition as its only child or
    a Fallback with a condition and its set of actions ($k \in A_j$) with the condition as the actions' postcondition.
    The metadata regions of these sub-BTs can be computed explicitly by \textit{mutual recursion} in terms of action and condition libraries (Def. \ref{def:libraries}), as shown  in the following lemma.

    \begin{lemma}
        An action sub-BT's ($\bt_{p(i)} : i \in A$) metadata regions are
        \begin{equation}\label{eq:action-metadata}
            \begin{aligned}
                R_{p(i)} &= \left(R_i \bigcap_{j \in C_i} S_j\right) \cup \left(\bigcup_{j \in C_i}\left(R_{p(j)} \bigcap_{k <_{C_i} j} S_k\right)\right) \\
                S_{p(i)} &= S_i \bigcap_{j \in C_i} S_j \\
                F_{p(i)} &= \bigcup_{j \in C_i}\left(F_{p(j)} \bigcap_{k <_{C_i} j} S_k\right),
            \end{aligned}
        \end{equation}
        where $R_{p(j)},F_{p(j)}$ refers to condition sub-BTs. 
        A condition sub-BT's ($\bt_{p(i)} : i \in C$) metadata regions are
        \begin{equation}\label{eq:condition-metadata}
            \begin{aligned}
                R_{p(i)} &= \bigcup_{j \in A_i}\left(R_{p(j)} \bigcap_{k <_{A_i} j} \left(F_i \cap F_{p(k)}\right)\right) \\
                S_{p(i)} &= S_i  \\
                F_{p(i)} &= F_i \bigcap_{j \in A_i} F_{p(j)},
            \end{aligned}
        \end{equation}
        where $R_{p(j)},F_{p(j)}, F_{p(k)}$ refers to action sub-BTs.
        A mutual recursion is formed by (\ref{eq:action-metadata}) and (\ref{eq:condition-metadata}).
    \end{lemma}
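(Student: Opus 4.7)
The plan is to apply Lemma~\ref{lemma:metadata-sequence} (sequence metadata) to the action sub-BT and Lemma~\ref{lemma:metadata-fallback} (fallback metadata) to the condition sub-BT, since these are the only two structural patterns that the recursion in Definition~\ref{def:bcbt} generates. Because (\ref{eq:bcbt}) mutually interleaves these patterns and terminates only when $C_i = \emptyset$ or $A_j = \emptyset$, I would organize the argument as a joint induction on the depth of the recursion, with the two cases feeding each other the identities needed to simplify the raw output of the general metadata formulas.

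For the action sub-BT $\bt_{p(i)} = Seq[(\bt_{p(j)})_{j \in C_i}, \bt_i]$, the children are the condition sub-BTs $\bt_{p(j)}$ for $j \in C_i$ ordered by $\leq_{C_i}$, followed by $\bt_i$ itself as the last child. A direct substitution into (\ref{eq:sequence-metadata}) yields expressions in $R_{p(j)}, S_{p(j)}, F_{p(j)}$ and $R_i, S_i, F_i$. Substituting the inductive identity $S_{p(j)} = S_j$ from the condition case produces the first two lines of (\ref{eq:action-metadata}) directly. The failure line would normally carry a trailing term $F_i \cap \bigcap_{j \in C_i} S_{p(j)} = F_i \cap B_i$ from the action child; this vanishes because Definition~\ref{def:libraries} gives $B_i = \bigcap_{j \in C_i} S_j$ and Definition~\ref{def:fts} gives $B_i \subseteq R_i \cup S_i$, so $B_i \cap F_i = \emptyset$.

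For the condition sub-BT $\bt_{p(i)} = Fal[\bt_i, (\Pi[k])_{k \in A_i}]$, the children are the condition $\bt_i$ first, followed by the action sub-BTs $\bt_{p(k)}$ for $k \in A_i$ in $\leq_{A_i}$ order. Substitution into (\ref{eq:fallback-metadata}) gives the failure formula of (\ref{eq:condition-metadata}) immediately. The running formula collapses because $R_i = \emptyset$ (a condition never returns running by Definition~\ref{def:libraries}), removing the term contributed by the condition child and leaving only the union over $k \in A_i$. For the success formula, the condition contributes $S_i$; each remaining term is of the form $S_{p(k)} \cap F_i \cap \cdots$, and the inductive action identity gives $S_{p(k)} = S_k \cap \bigcap_{\ell \in C_k} S_\ell \subseteq S_k$, while Definition~\ref{def:libraries} gives $S_k \subseteq S_i$ for $k \in A_i$ (since $\bigcup_{k \in A_i} S_k \subseteq S_i$). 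Hence $S_{p(k)} \cap F_i \subseteq S_i \cap F_i = \emptyset$ and only $S_i$ survives.

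The main obstacle is less any single computation and more the bookkeeping: carefully identifying the children and their sibling order in each of the two composite patterns, aligning the indexing in $\leq_{C_i}$ and $\leq_{A_i}$ with the generic $\leq_S$ of Lemmas~\ref{lemma:metadata-sequence}--\ref{lemma:metadata-fallback}, and ensuring the mutual recursion is well-founded. The latter is guaranteed by the termination condition in Definition~\ref{def:bcbt}: when $C_i = \emptyset$ (respectively $A_j = \emptyset$) all the indexed unions vanish and the indexed intersections become $\mathbb{R}^n$, so the formulas reduce to the metadata of the single leaf node, giving the induction base.
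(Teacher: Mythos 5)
Your proposal is correct and follows essentially the same route as the paper's proof: apply the Sequence metadata formula (Lemma~\ref{lemma:metadata-sequence}) to the action sub-BT and the Fallback metadata formula (Lemma~\ref{lemma:metadata-fallback}) to the condition sub-BT, then simplify using $R_i = \emptyset$ for conditions, $\bigcup_{j \in A_i} S_j \subseteq S_i$, and $\bigcap_{j \in C_i} S_j = B_i \subseteq R_i \cup S_i$ from Definitions~\ref{def:libraries} and~\ref{def:fts}. Your explicit treatment of the mutual induction's well-foundedness and the $S_{p(k)} \subseteq S_k \subseteq S_i$ chain justifying $S_{p(i)} = S_i$ is actually slightly more careful than the paper's, which asserts these simplifications directly.
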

    \begin{proof}
        For an action ($i \in A$) we have that $\bt_{p(i)}$ (action sub-BT) is a Sequence with $\{p(j) : j \in C_i\} \cup \{i\}$ as its set of children,
        where $\bt_{p(j)}$ are condition sub-BTs.
        The metadata regions of a Sequence is given in terms of its children's' metadata regions in (\ref{eq:sequence-metadata}).
        This gives 
        the running, success, and failure regions for an action sub-BT as
        \begin{equation}
            \begin{aligned}
                R_{p(i)} &= \left(R_i \bigcap_{j \in C_i} S_{p(j)}\right) \cup \left(\bigcup_{j \in C_i}\left(R_{p(j)} \bigcap_{k <_{C_i} j} S_{p(k)}\right)\right) \\
                S_{p(i)} &= S_i \bigcap_{j \in C_i} S_{p(j)} \\
                F_{p(i)} &= \left(F_i \bigcap_{j \in C_i} S_{p(j)}\right) \cup \left(\bigcup_{j \in C_i}\left(F_{p(j)} \bigcap_{k <_{C_i} j} S_{p(k)}\right)\right).
            \end{aligned}
        \end{equation}
        respectively.

        For a condition ($i \in C$) we have that $\bt_{p(i)}$ (condition sub-BT) is a Fallback with $\{i\} \cup \{p(j) : j \in A_i\}$ as its set of children,
        where $\bt_{p(j)}$ are action sub-BTs.
        The metadata regions of a Fallback are given in terms of its children's' metadata regions in (\ref{eq:fallback-metadata}).
        This gives
        the running, success, and failure regions for a condition sub-BT as
        \begin{equation}
            \begin{aligned}
                R_{p(i)} &= R_i \bigcup_{j \in A_i}\left(R_{p(j)} \bigcap_{k <_{A_i} j} \left(F_i \cap F_{p(k)}\right)\right) \\
                S_{p(i)} &= S_i \bigcup_{j \in A_i}\left(S_{p(j)} \bigcap_{k <_{A_i} j} \left(F_i \cap F_{p(k)}\right)\right) \\
                F_{p(i)} &= F_i \bigcap_{j \in A_i} F_{p(j)},
            \end{aligned}
        \end{equation}
        respectively.

        By Def. \ref{def:libraries}, we have the following.
        For all $i \in C$, we have that $R_i = \emptyset$ implies 
        $R_{p(i)} = \bigcup_{j \in A_i}\left(R_{p(j)} \bigcap_{k <_{A_i} j} F_{p(k)}\right) $ and
        $\bigcup_{j \in A_i} S_j \subseteq S_i$ implies $S_{p(i)} = S_i$.
        For all $i \in A$, we have that $\bigcap_{j\in C_i} S_j = B_i$ and $B_i \subseteq R_i \cup S_i$ imply $F_{p(i)} = \bigcup_{j \in C_i}\left(F_{p(j)} \bigcap_{k <_{C_i} j} S_{p(k)}\right)$.
        Thus, we have (\ref{eq:action-metadata}) and (\ref{eq:condition-metadata}).
    \end{proof}
    
    The BC-BT defined in (\ref{eq:bcbt}) is such that the success region of its actions is within the success region of their postconditions, i.e. $\bigcup_{j \in A_i} S_j \subseteq S_i$.
    Similarly, the success regions of
    preconditions for a particular action equals its DOA, i.e. $\bigcap_{j \in C_i} S_j = B_i$.
    Thus, the expected behavior in a BC-BT is that actions will execute one by one, where the satisfaction of one action's postcondition also satisfies one of another action's preconditions. 
    If the postcondition of action $i$ satisfies a precondition of an action $k$ we say that they are \say{linked}. The pair of actions, together with the condition are called a \say{link}. Examples of such links include (Goto object, Near object, Grasp object) and (Grasp object, Object in gripper, Place object).

    In detail, if there are two actions $(i,k) \in A^2$ such that there exists a condition $j \in C$ such that $i \in A_j$ ($\bt_j$ is the postcondition of $\bt_i$) and $j \in C_k$ ($\bt_j$ is a precondition of $\bt_k$), then we say that $i$ and $k$ are linked and that $(i,j,k) \in \lambda$ is a \say{link}, where $\lambda \subseteq A \times C \times A$ is the set of all such links.
    
    These links induce a partial order of actions $\leq_\lambda \subseteq A^2$ called a \say{link order}. These concepts are formalized in the following definition.
    

    

    \begin{definition}[Links]\label{def:links}
       The set of all links is 
       \begin{equation}
            \lambda := \left\{ \left(a,b,c\right) \in A \times C \times A \mid \left(a \in A_b\right) \land \left(b \in C_c \right) \right\}
        \end{equation}
         and
        \begin{equation}
            \leq_\lambda := \left\{\left(a,c\right) \in A^2 \mid \exists b \in C : \left(a,b,c \right) \in \lambda\right\}^{=,+}
        \end{equation}
        is a partial order of actions,
        where ${}^{=,+}$ is the reflexive-transitive closure operation.
        Furthermore, the set of action-condition-action triplets defined for an action ($i \in A$) is
        \begin{equation}
            \lambda_i := \left\{\left(a,b,c\right) \in \lambda \middle| i \leq_\lambda a \right\}.
        \end{equation}
    \end{definition}
    Thus we have that e.g.
    $\lambda_{\mbox{\footnotesize Goto object}}$ includes 
    (Grasp object, Object in gripper, Place object).

    In the remainder of this section, we will use the ideas above to compute the operating regions within a BC-BT.
    Before doing so, we will make the following assumption to simplify the analysis and align more closely to \cite[Theorem 1, p.7]{ogren2020convergence}.


    \begin{assumption}\label{as:bcbt}
        Given a BC-BT $\Pi[i]$ for $i \in A$,
        assume $\lambda_i = \emptyset$.
        In other words,
        there are no postconditions for the top-level action to satisfy, and there are no other actions that have such postconditions as their preconditions.
        Additionally, for all $i \in C$, assume $|A_i| = 1$ (there is one action per postcondition), and
        for all $i \in C$ such that $A_i = \emptyset$, assume $F_i = \emptyset$ (conditions that are not the postcondition for any action cannot fail).
    \end{assumption}
    Looking at Fig. \ref{fig:bc-bt} the above assumption is satisfied as Idle does not satisfy any postconditions and conditions such as Safe area reachable and Object reachable never fail (if they do, there is nothing that can be done about it).
    
    We will now give a more detailed example, illustrating the concepts described so far in relation to Fig. \ref{fig:bc-bt}.
    
    \begin{example}
    The set of actions and conditions in the BC-BT $\Pi[14]$ (shown in Fig. \ref{fig:bc-bt}) are given by
    \begin{equation}
        \begin{aligned}
            C = \left\{1,2,3,4,5,6,7,8\right\}, \quad
            A = \left\{9,10,11,12,13,14\right\},
        \end{aligned}
    \end{equation}
    respectively.
    The preconditions of actions are given by
    \begin{equation}
        \begin{gathered}
            C_{9} = \left\{2\right\}, \quad
            C_{10} = \left\{6\right\}, \quad
            C_{11} = \left\{5\right\} \\
            C_{12} = \left\{8\right\}, \quad
            C_{13} = \left\{4, 7\right\}, \quad
            C_{14} = \left\{1, 3\right\}
        \end{gathered} 
    \end{equation}
    The actions for postconditions is given by
    \begin{equation}
        \begin{gathered}
            A_{1} = \left\{9\right\}, \quad
            A_{2} = \emptyset, \quad
            A_{3} = \left\{13\right\}, \quad
            A_{4} = \left\{11\right\}, \\
            A_{5} = \left\{10\right\}, \quad
            A_{6} = \emptyset, \quad 
            A_{7} = \left\{12\right\}, \quad
            A_{8} = \emptyset.
        \end{gathered}
    \end{equation}
    The links are given by
    \begin{equation}
    \begin{aligned}
        \lambda =
            \{
                &\left(9, 1, 14\right),
                \left(10, 5, 11\right),
                \left(11, 4, 13\right),\\
                &\left(12, 7, 13\right),
                \left(13, 3, 14\right)
            \}.
    \end{aligned}
    \end{equation}
    The link order is given by
    \begin{equation}
        \begin{aligned}
            \leq_\lambda =
                \{
                    &\left(9,9\right),
                    \left(9,14\right), \\
                    &\left(10,10\right),
                    \left(10,11\right),
                    \left(10,13\right),
                    \left(10,14\right), \\
                    &\left(11,11\right),
                    \left(11,13\right),
                    \left(11,14\right), \\
                    &\left(12,12\right),
                    \left(12,13\right),
                    \left(12,14\right), \\
                    &\left(13,13\right),
                    \left(13,14\right),
                    \left(14,14\right)
            \}.
        \end{aligned}
    \end{equation}
    For $i = 12$ (bolded in Fig. \ref{fig:bc-bt}), we have
    \begin{equation}
        \lambda_{12} = \left\{
            \left(12, 7, 13\right),
            \left(13, 3, 14\right)
        \right\}.
    \end{equation}
    \end{example}

    In the following lemma, we will show that 
    if an action executes, it will not make the entire BT return Failure, $\Omega_i \cap F_i = \emptyset$ and it will not make the entire BT return success if it has a link, $\lambda_i \neq \emptyset \implies \Omega_i \cap S_i = \emptyset$. The latter implies that in the example of Fig. \ref{fig:bc-bt}, the BT returns success only when Idle (all objectives achieved) returns success. Finally, the conditions never return running, as expected, i.e. for $i \in C$, we have that $\Omega_i = \emptyset$.

    \begin{lemma}[Backchained operating regions]\label{lemma:backchained_operating}
        For all $i \in A$ we have that $\Omega_i \cap F_i = \emptyset$, and for all $i \in A$ such that $\lambda_i \neq \emptyset$ we have that $\Omega_i \cap S_i = \emptyset$.
        For all $i \in C$, we have that $\Omega_i = \emptyset$.
    \end{lemma}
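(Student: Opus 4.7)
The proof plan splits along the recursive structure (\ref{eq:bcbt}) into three cases: conditions $j \in C$, non-top actions, and the unique top-level action. The key observations to extract are (i) a pathway argument showing every condition has an empty operating region, and (ii) an inclusion $I_i \subseteq B_i$ for every action, which immediately kills its failure-region intersection.

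For a condition $\bt_j$, its parent is always a Fallback sitting inside a Sequence whose last child is an action; this action is a right-uncle of $\bt_j$ with parent Seq, so $j \not\in \mathfrak{S}$. If $A_j \neq \emptyset$, the sibling $\Pi[k]$ branches inside the Fallback supply a right-uncle with parent Fal, so $j \not\in \mathfrak{F}$ and $\Omega_j = I_j \cap R_j = \emptyset$, using $R_j = \emptyset$ from Definition \ref{def:libraries}. If instead $A_j = \emptyset$, Assumption \ref{as:bcbt} supplies $F_j = \emptyset$, so whatever the pathway status, $\Omega_j \subseteq I_j \cap (R_j \cup F_j) = \emptyset$.

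For the action claims, the central computation is of the influence region $I_i$. Among the left-uncles of $\bt_i$ are the Fallback sub-BTs $\bt_{p(c)}$ for $c \in C_i$, sitting as left-siblings of $\bt_i$ inside its enclosing Seq, all with parent Seq; these contribute $S_{p(c)}$ to $I_i$. Applying (\ref{eq:condition-metadata}) collapses $S_{p(c)}$ to $S_c$, so $I_i \subseteq \bigcap_{c \in C_i} S_c = B_i$ by Definition \ref{def:libraries}. Combined with $B_i \subseteq R_i \cup S_i$ from Definition \ref{def:fts}, this yields $\Omega_i \cap F_i \subseteq I_i \cap F_i = \emptyset$.

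The second action claim $\Omega_i \cap S_i = \emptyset$ under $\lambda_i \neq \emptyset$ is then handled by separating top from non-top actions. For the top action, Assumption \ref{as:bcbt} forces $\lambda_i = \emptyset$ and the claim is vacuous. For every non-top action, the end-of-Sequence action of the next enclosing Seq is a right-uncle with parent Seq, so $i \not\in \mathfrak{S}$; then $\Omega_i \subseteq R_i \cup F_i$ and the pairwise disjointness of the metadata regions finishes the job. The main obstacle I anticipate is being fully precise about the pathway enumeration and the left-uncle computation: one must check that any additional intersection factors in $I_i$ coming from ancestors higher up the tree can only shrink $I_i$, so the inclusion $I_i \subseteq B_i$ really follows from the sibling factors alone.
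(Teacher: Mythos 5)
Your proposal is correct and follows essentially the same route as the paper's proof: you determine membership in the success and failure pathways via right-uncle arguments, use $R_j=\emptyset$ (and Assumption~\ref{as:bcbt} when $A_j=\emptyset$) to empty the condition operating regions, and establish $\Omega_i\cap F_i=\emptyset$ through the inclusion $I_i\subseteq\bigcap_{c\in C_i}S_c=B_i\subseteq R_i\cup S_i$. Your version is, if anything, slightly more explicit than the paper's in justifying $S_{p(c)}=S_c$ via (\ref{eq:condition-metadata}) and in noting that ancestor factors only shrink $I_i$.
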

    \begin{proof}
        First, observe that the only node that will be in the success pathway $\mathfrak{S}$ is the action $i \in A$ that defines the BC-BT $\Pi[i]$ (where $\lambda_i = \emptyset$ by Assumption \ref{as:bcbt}).
        This is because all other nodes in the BC-BT have right uncles whose parents are sequences, see (\ref{eq:success-pathway}) and Fig. \ref{fig:bc-bt}.
        Second, observe that all actions $i \in A$ will be in the failure pathway $i \in \mathfrak{F}$.
        This is because all actions do not have any right uncles whose parents are fallbacks, see (\ref{eq:failure-pathway}) and Fig. \ref{fig:bc-bt}.
        Thirdly, observe that all conditions $i \in C$ such that they are not a postcondition of any actions, $A_i = \emptyset$, will be in the failure pathway $i \in \mathfrak{F}$.
        This is because all conditions that are not a postcondition for any action do not have any right uncles whose parents are fallbacks, see (\ref{eq:failure-pathway}) and Fig. \ref{fig:bc-bt}.
        Thus, the success and failure pathways for a BC-BT are given by $\mathfrak{S} = \left\{i \in A \middle| \lambda_i =\emptyset\right\}$ and $\mathfrak{F} = A \cup \left\{j \in C \middle | A_j = \emptyset\right\}$.


        According to (\ref{eq:operating}), we then have that $\Omega_i \cap S_i = \emptyset$ for all $i \in A$ such that $\lambda_i \neq \emptyset$.
        Additionally, since $\Omega_i \subseteq \bigcap_{j \in C_i} S_j$ and $\bigcap_{j \in C_i} S_j = B_i$, we have that $\Omega_i \cap F_i = \emptyset$ because $B_i \subseteq R_i \cup S_i$ (despite $i \in \mathfrak{F}$).
        According to (\ref{eq:operating}), we have that $\Omega_i \cap S_i = \emptyset$ for all $i \in C$.
        We also know that $R_i = \emptyset$ for all $i \in C$ by Def. \ref{def:libraries}, thus $\Omega_i \cap R_i = \emptyset$ for all $i \in C$.
        Lastly, Assumption \ref{as:bcbt} implies $\Omega_i \cap F_i = \emptyset$ for all $i \in C$.
        Thus, $\Omega_i = \emptyset$ for all $i \in C$.
    \end{proof}

    Since we have just shown that operating regions of conditions are empty, we can now compute the operating regions of actions explicitly (ignoring the conditions).
    To do so, we first need to compute the influence regions $I_i$ of such actions, following (\ref{eq:influence}).
    The set of nodes that determine an action's influence region can be partitioned by type: postconditions of the actions $C_i^\text{Post}$ (when they return success, the action is done), conditions that come before such postconditions $C_i^\text{ACC}$ (known as action-condition constraints (ACCs) in \cite{ogren2020convergence}), and preconditions of the action itself $C_i$.
    In the following definition,  we compute $C_i^\text{Post}$ and $C_i^\text{ACC}$.

    \begin{definition}[Related conditions]
        The indices of conditions ($j \in C_i^\text{Post}$) that need to fail for an action $\bt_i : i \in A$ to execute are
        \begin{equation}
            C_i^\text{Post} = \bigcup_{(a,b,c) \in \lambda_i} \left\{j \in C \mid j = b\right\}.
        \end{equation}
        The indices of conditions ($j \in C_i^\text{ACC}$) other than preconditions that need to succeed for an action ($\bt_i : i \in A$) to execute are
        \begin{equation}
            C_i^\text{ACC} = \bigcup_{(a,b,c) \in \lambda_i} \left\{ j \in C \mid j <_{C_c} b \right\}.
        \end{equation}
    \end{definition}

    With the related actions, $C_i^\text{Post}$ and $C_i^\text{ACC}$, we are now ready to compute the influence regions, following (\ref{eq:influence}).

    \begin{lemma}
        An action's ($i \in A$) influence region is
        \begin{equation}\label{eq:action_influence}
            I_i = 
            \bigcap_{j \in C_i^\text{ACC}} S_j
            \bigcap_{j \in C_i} S_j
            \bigcap_{j \in C_i^\text{Post}} F_j.
        \end{equation}
    \end{lemma}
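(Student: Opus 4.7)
The plan is to unwind the recursive structure of the BC-BT (Definition~\ref{def:bcbt}) around the action $\bt_i$ and evaluate the influence-region formula (\ref{eq:influence}) by directly enumerating the left-uncle set $\{j : j <_{LU} i\}$. Because $<_{LU}\,=\,<_S \circ \leq_P$, every left uncle is a left sibling either of $\bt_i$ itself or of some strict ancestor of $\bt_i$; I will classify each one by whether its parent is a Sequence or a Fallback, so that the two halves of (\ref{eq:influence}) can be computed separately.

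The first step is to make the chain of ancestors of $\bt_i$ explicit. Writing $\lambda_i = \{(a_k, b_k, a_{k+1})\}_{k=1}^{n-1}$ with $a_1 = i$ and $a_n$ the top-level action (so $\lambda_{a_n} = \emptyset$ by Assumption~\ref{as:bcbt}), the recursion (\ref{eq:bcbt}) combined with $|A_{b_k}| = 1$ (also Assumption~\ref{as:bcbt}) gives that $\bt_i$ sits nested as $\bt_i \in \Pi[i] \subset Fal[\bt_{b_1},\Pi[i]] \subset \Pi[a_2] \subset Fal[\bt_{b_2},\Pi[a_2]] \subset \cdots \subset \Pi[a_n]$, so the ancestors of $\bt_i$ alternate strictly between Sequence roots (of each $\Pi[a_k]$) and Fallback nodes of the form $Fal[\bt_{b_k},\Pi[a_k]]$. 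This alternating pattern is the key structural observation driving the enumeration.

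Next I would read off the left siblings at each level and sort them by parent type. At $\bt_i$ itself, the left siblings inside $\Pi[i]$'s Sequence are exactly the Fallbacks $Fal[\bt_j,\cdot]$ for $j \in C_i$; their parent is the Sequence, so each contributes that Fallback's success region, which by (\ref{eq:condition-metadata}) equals $S_j$, producing the factor $\bigcap_{j \in C_i} S_j$. At the Sequence root of each $\Pi[a_k]$ for $k \geq 2$, the lone left sibling inside the enclosing Fallback $Fal[\bt_{b_{k-1}},\Pi[a_{k-1}]]$ is the condition $\bt_{b_{k-1}}$ (because $|A_{b_{k-1}}|=1$), whose parent is that Fallback; collecting these over $k$ produces $\bigcap_{j \in C_i^\text{Post}} F_j$. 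At each Fallback ancestor $Fal[\bt_{b_k},\Pi[a_k]]$, the left siblings inside $\Pi[a_{k+1}]$'s Sequence are the Fallbacks for those $j \in C_{a_{k+1}}$ with $j <_{C_{a_{k+1}}} b_k$; parent Sequence, each contributing $S_j$, and by the definition of $C_i^\text{ACC}$ the union of these over $k$ is exactly $\bigcap_{j \in C_i^\text{ACC}} S_j$.

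The principal obstacle is showing that this enumeration is both exhaustive and non-overlapping, i.e.\ that every left uncle of $\bt_i$ falls into exactly one of the three families above. This rests on the strictly alternating Seq/Fal layering of a BC-BT together with $|A_{b_k}|=1$, which jointly preclude any additional subtree-siblings at the Fallback ancestors. With that check in place, collecting the three families and substituting into (\ref{eq:influence}) yields (\ref{eq:action_influence}).
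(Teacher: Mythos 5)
Your proposal is correct and follows essentially the same route as the paper's proof: classify the left uncles of $\bt_i$ by whether their parent is a Sequence or a Fallback, identify them with $C_i \cup C_i^\text{ACC}$ and $C_i^\text{Post}$ respectively, and use $S_{p(j)} = S_j$ from (\ref{eq:condition-metadata}) to replace the condition sub-BTs' success regions by those of the conditions themselves. You simply make explicit the alternating Seq/Fal ancestor chain that the paper leaves implicit (modulo a harmless off-by-one in which Sequence root you attach each $\bt_{b_k}$ to); the collected sets and the resulting formula are the same.
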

    \begin{proof}

        The left uncles of actions $i \in A$ that have a fallback as their parent are given by postconditions $C_i^\text{Post}$.
        Thus, $\bt_j$ for all $j \in C_i^\text{Post}$ need to return failure for $\bt_i$ to execute.

        The left uncles of actions $i \in A$ that have a sequence as their parents are given by the preconditions $C_i$ and the parents of ACCs $C_i^\text{ACC}$.
        According to (\ref{eq:condition-metadata}),
        the success region of condition sub-BTs is equal to the success region of the conditions themselves.
        Thus, we have that $S_{p(j)} = S_j$ for all $j \in C$.
        Thus, $\bt_j$ for all $j \in C_i^\text{ACC} \cup C_i$ need to return success for $\bt_i$ to execute.

        The above completes the computation of (\ref{eq:influence}), thus we have (\ref{eq:action_influence}).
    \end{proof}

    We will now give an example of computing the influence regions based on the mobile manipulator BC-BT in Fig. \ref{fig:bc-bt}.

    \begin{example}
        For the BC-BT in Fig \ref{fig:bc-bt}, the influence regions of the actions are given by
        \begin{equation}
            \begin{array}{lllllll}
                &&\bigcap_{j \in C_i^\text{ACC}} S_j
                &&\bigcap_{j \in C_i} S_j
                &&\bigcap_{j \in C_i^\text{Post}} F_j \\
                \hline \\
                I_{9} & = & \mathbb{R}^n & \cap & S_2 & \cap & F_1\\
                I_{10} & = & S_1 & \cap & S_6 & \cap & F_3 \cap F_4 \cap F_5\\
                I_{11} & = & S_1 & \cap & S_5 & \cap & F_3 \cap F_4\\
                I_{12} & = & S_1 \cap S_4 & \cap & S_8 & \cap & F_3 \cap F_7\\
                I_{13} & = & S_1 & \cap & S_4 \cap S_7 & \cap & F_3\\
                I_{14} & = & \mathbb{R}^n & \cap & S_1 \cap S_3 & \cap & \mathbb{R}^n.
            \end{array}
        \end{equation}
        Based on Lemma \ref{lemma:backchained_operating}, we have that $\Omega_i = I_i \cap R_i$ for all $i \in \{9, 10, 11, 12, 13\}$ and $\Omega_i = I_i \cap (R_i \cup S_i)$ for $i = 14$.
        Note that $C_i^\text{Post} \cap (C_j^\text{ACC} \cup C_j)$ is non-empty when $i < j$ and empty when $i \geq j$.
    \end{example}
    
    In the above example, we noticed that there is a pattern in the condition indices.
    Namely, when the actions in a BC-BT are labeled with depth-first preorder traversal, we have that $C_i^\text{Post} \cap (C_j^\text{ACC} \cup C_j) \neq \emptyset \iff i < j$.
    This is because the job of each action is to bring the state to the success region of the action's postcondition, which serves as either a precondition or ACC for some further action.
    By design, this behavior is acyclical (see Corollary \ref{theorem:acyclical_convergence}) and fits Corollary \ref{cor:sprague2021continuous}, as we show in the following corollary.

    \begin{corollary}[Backchained BTs{\cite[Theorem 1, p.7]{ogren2020convergence}}]\label{cor:ogren2020convergence}
        A BC-BT satisfies Corollary \ref{cor:sprague2021continuous} with
        \begin{equation}\label{eq:backchained_convergence}
            \leq_{\Omega} \subseteq
            \left\{\left(i,j\right) \in A^2 \middle| C_i^\text{Post} \cap \left(C_j^\text{ACC} \cup C_j \right) \neq \emptyset \right\},
        \end{equation}
        if $B_i \subseteq \bigcap_{j \in C_i^\text{ACC}} S_j$ for all $i \in A$.
    \end{corollary}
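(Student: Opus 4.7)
The plan is to apply Corollary~\ref{cor:sprague2021continuous} with the abstraction $P = A$ (actions only) and identify $\leq_f$ with the link order $\leq_\lambda$ of Definition~\ref{def:links}. This is justified because Lemma~\ref{lemma:backchained_operating} established $\Omega_i = \emptyset$ for every condition $i \in C$, so only the actions carry nonempty operating regions and thereby satisfy Definition~\ref{def:abstraction}. First I would check that $\leq_\lambda$ is a valid partial order on $A$ and that the global success and failure regions of the whole BC-BT are $S_0 = S_{p(i_\text{top})}$ with $i_\text{top} \in A$ the top action, so that $F_0 = \emptyset$ up to conditions that can never fail (using Assumption~\ref{as:bcbt}).

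The central step is to verify the invariance of $\Lambda_i = \bigcup_{j \geq_\lambda i}\Omega_j \setminus F_0$ under $f(x, u_i(x))$. While $u_i$ executes, $B_i$ is positively invariant by Assumption~\ref{ass:fts}. The key is then to show $B_i \subseteq \Lambda_i$. By the hypothesis $B_i \subseteq \bigcap_{j \in C_i^\text{ACC}} S_j$, together with $B_i = \bigcap_{j \in C_i} S_j$ from Definition~\ref{def:libraries}, we have that during execution the ACCs $C_i^\text{ACC}$ and preconditions $C_i$ of $i$ remain in their success regions. The only conditions whose status can flip are the postconditions $C_i^\text{Post}$, which start in failure (since $x \in \Omega_i \subseteq I_i$ requires $F_j$ for $j \in C_i^\text{Post}$) and may transition to success. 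Thus any point reached while $u_i$ runs satisfies the influence region either of $i$ itself, or of some later action $k$ in the backchain whose precondition/ACC set contains the now-succeeded postcondition $j \in C_i^\text{Post}$. By construction of the link order, such $k$ satisfies $i \leq_\lambda k$, so the new state lies in some $\Omega_k \subseteq \Lambda_i$. This gives invariance.

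Next I would establish the finite-time-leaving condition: because $\bt_i$ is finite-time successful on $B_i$ with finite horizon $\tau_i$, in time at most $\tau_i$ the state enters $G_i \subseteq S_i$. By (\ref{eq:action-metadata}), $S_{p(i)} = S_i \cap \bigcap_{j \in C_i} S_j$, and by definition of $C_i^\text{Post}$ together with the action/condition library structure, reaching $S_i$ (together with the preserved ACCs and preconditions) causes the relevant postcondition in $C_i^\text{Post}$ to enter success; hence $x \notin I_i$ and so $x \notin \Omega_i$, meaning the state has left $\Omega_i \setminus S_0$ within $\tau_i$. Thus the hypotheses of Corollary~\ref{cor:sprague2021continuous} are met, yielding finite-time convergence to $S_0$ in a bounded number of transitions.

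Finally, for the containment (\ref{eq:backchained_convergence}), I would argue that every edge $(i,j) \in E_\Omega$ in the behavior graph arises from a transition that turns some postcondition $c \in C_i^\text{Post}$ from failure to success, and since the resulting state must lie in $I_j$ and $c$ is in failure outside $I_j$'s success-requirement set, $c$ must belong to $C_j^\text{ACC} \cup C_j$. Taking reflexive-transitive closure then gives the inclusion of $\leq_\Omega$ in the set on the right-hand side of (\ref{eq:backchained_convergence}). The main obstacle will be the second step — cleanly tracking, within a single invariance argument, exactly which of $C_i$, $C_i^\text{ACC}$, and $C_i^\text{Post}$ can change during execution of $u_i$, and showing that any exit from $\Omega_i$ lands in an operating region of an action strictly above $i$ in $\leq_\lambda$ rather than a sibling or earlier action.
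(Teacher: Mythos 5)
Your proposal is correct and follows essentially the same route as the paper: both hinge on $\Omega_i \subseteq B_i$, the positive invariance of $B_i$ together with $B_i \subseteq \bigcap_{j \in C_i^\text{ACC}} S_j$ and $B_i = \bigcap_{j \in C_i} S_j$ forcing the ACCs and preconditions to stay in success, so that only the postconditions $C_i^\text{Post}$ can flip, which yields the containment (\ref{eq:backchained_convergence}). You are somewhat more thorough than the paper in explicitly verifying the remaining hypotheses of Corollary~\ref{cor:sprague2021continuous} (the choice of partial order and the finite-time exit from $\Omega_i \setminus S_0$), which the published proof leaves implicit.
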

    \begin{proof}
        For all actions $i \in A$, we have that $\Omega_i \subseteq B_i$ because $\bigcap_{j \in C} S_j = B_i$.
        Thus, if $x \in \Omega_i$ then $x \in B_i$.
        We know that $B_i$ is positively invariant under (\ref{eq:bt_execution}), (\ref{eq:bt_execution_discrete}). 
        Thus, since $B_i \subseteq \bigcap_{j \in C_i^\text{ACC}} S_j$, the state will never leave $\bigcap_{j \in C_i^\text{ACC}} S_j$ (and $\bigcap_{j \in C_i} S_i$) as long as it is in $\Omega_i$.

        However, $B_i$ can intersect $S_j$ for any $j \in C_i^\text{Post}$.
        Thus, from a given operating region $\Omega_i$, the state can only transition to another operating region $\Omega_j$ such that $C_i^\text{Post} \cap (C_j^\text{ACC} \cup C_j) \neq \emptyset$.
        Thus, we must have (\ref{eq:backchained_convergence}) in the context of Corollary \ref{cor:sprague2021continuous}.
    \end{proof}

 \subsection{Including data-driven controllers without harming convergence guarantees}
 \label{sec:NN_MB}
In this section, we will show how Theorem~\ref{theorem:general_convergence} captures the main result in \cite{sprague2022adding}.
The problem addressed in \cite{sprague2022adding} is how we can add a controller with unknown performance (e.g. \cite{sprague2020dynamic, sprague2020bath}) to an existing BT with convergence guarantees, without destroying those guarantees.

\begin{figure}[t]
    \resizebox{0.65\linewidth}{!}{%
    \begin{subfigure}[c]{0.48\linewidth}
        {\small
        \begin{forest}
            for tree={
                minimum height=2em,
                minimum width=2em,
                inner sep=0.1pt
            }
            [[$?$, draw
            [$\begin{array}{c}
                \text{Task} \\ \text{done}
            \end{array}$, draw, ellipse]
            [$\begin{array}{c}
                \text{MB} \\
                \text{controller}
            \end{array}$, draw]
            ]]
        \end{forest}
        }
    \end{subfigure}
    \begin{subfigure}[c]{0.48\linewidth}
        {\small
        \begin{forest}
            for tree={
                minimum height=2em,
                minimum width=2em,
                inner sep=0.1pt
            }
            [,
                [$?$, draw, 
                    [$\begin{array}{c}\text{Task}\\\text{done}\end{array}$, draw, ellipse]
                    [$\rightarrow$, draw
                        [$\begin{array}{c}\text{Time}\\\text{OK}\end{array}$, draw, ellipse]
                        [$\begin{array}{c}\text{Risk}\\\text{OK}\end{array}$, draw, ellipse]
                        [$\begin{array}{c}\text{DD}\\\text{Controller}\end{array}$, draw]
                    ]
                    [$\rightarrow$, draw
                        [$\begin{array}{c}\text{Time}\\\text{OK}\end{array}$, draw, ellipse]
                        [$\begin{array}{c}\text{RR}\\\text{Controller}\end{array}$, draw]
                    ]
                    [$\begin{array}{c}\text{MB}\\\text{controller}\end{array}$, draw]
                ]
            ]
        \end{forest}
        }
    \end{subfigure}
    }%
    \caption{To include a data-driven controller in a BT where an already existing model-based controller provides performance guarantees you can replace the left subtree with the right one.}
    \label{fig:NN_MB}
\end{figure}
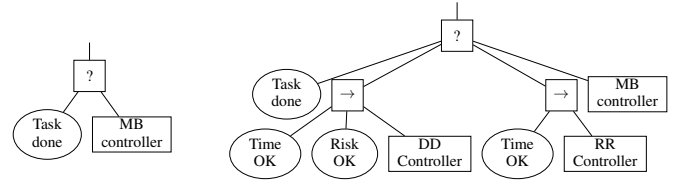


The key idea is illustrated in Fig.~\ref{fig:NN_MB}. Imagine that the data-driven controller is designed to perform the same task as an existing model-based controller, in an existing BT, satisfying Theorem~\ref{theorem:general_convergence}.
Now, we think that the data-driven controller is more efficient in most cases, but perhaps 
not all. Thus, we replace the subtree on the left of Fig.\ref{fig:NN_MB} with the larger subtree on the right of Fig.\ref{fig:NN_MB}. There are two new conditions, \emph{Time OK} and \emph{Risk OK}. \emph{Time OK} returns success if the amount of time spent executing the new controller is still smaller than some designated upper bound (how long we are willing to let the new controller execute before reverting to the old one we know will do the job). \emph{Risk OK} is a bit more complex and connected to the additional controller \emph{Reduce risk}. These two are making sure that the \emph{Data-Driven Controller} does not mess anything up by creating a transition to an operating region that was not possible from the operating region of the model-based controller in the old design. 

Imagine the original subtree had operating region $\Omega_{\text{old}}$ and the new subtree has operating region $\Omega_{\text{new}}$. If we can show that: $\Omega_{\text{new}}=\Omega_{\text{old}}$, any transitions out of $\Omega_{\text{new}}$ for the new BT was also possible for the old BT, and a transition out of $\Omega_{\text{new}}$ will happen in finite-time; then we are done.

\begin{assumption}
    \label{ass:NN_MB}
    Given a BT that satisfies the conditions stated in Theorem~\ref{theorem:general_convergence},
     let $\Omega_i$ be the operating region of some subtree of the BT of the form of the left of Fig.\ref{fig:NN_MB}, and $\Omega_{i,\text{next}} := \bigcup \{\Omega_j \mid (i,j) \in E\}$, i.e., the union of  the nodes in the prepares graph $\Gamma$ (Def. \ref{def:prepares_graph}) that have an incoming edge from $i$.
     
      A new BT is created by replacing the left subtree by the one on the right of Fig.\ref{fig:NN_MB}, with three controllers, the original 
    \emph{Model-Based controller} (MB), the
    \emph{Data-Driven controller} (DD), and the 
    \emph{Reduce Risk controller} (RR),
    with conditions \emph{Time OK}, \emph{Risk OK}, and \emph{Task Done} abbreviated as \emph{TOK}, \emph{ROK}, and \emph{TD}, respectively. 
\end{assumption}

\begin{assumption}
    \label{ass:NN_MB_2}
If Assumption \ref{ass:NN_MB} holds, 
    assume that the resulting prepares graph $\Gamma_{\text{new}}$ is identical to the previous one $\Gamma_{\text{old}}$, except that the subgraph on the left of Fig.~\ref{fig:NN_MB_convergence} is replaced by the subgraph on the right of Fig.~\ref{fig:NN_MB_convergence}.
    
    Furthermore assume that the state is guaranteed not to be in either $v_a(DD)$ or $v_b(RR)$ after some finite time and that $R_i,S_i,F_i$ are the same before and after changing the subtrees.
\end{assumption}

\begin{lemma}
\label{lemma:NN_MB}
    If Assumptions~\ref{ass:NN_MB} and \ref{ass:NN_MB_2} hold, the new BT is also convergent.
\end{lemma}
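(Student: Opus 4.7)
The plan is to apply Theorem~\ref{theorem:general_convergence} to the new BT, using a carefully chosen analysis set. Since the original BT is convergent by the same theorem, we already have an analysis set $V_{\star,\text{old}}'$ for $\Gamma_{\text{old}}$ that satisfies both Assumptions~\ref{ass:fts} and~\ref{ass:non_infinite_cycles}. I would construct a new analysis set $V_{\star,\text{new}}'$ for $\Gamma_{\text{new}}$ by keeping the vertices outside the replaced subgraph unchanged, and splicing in the new vertices $v_a(DD), v_b(DD), v_a(RR), v_b(RR)$, together with the $v_\cdot(MB)$ vertices, that arise from the right-hand subtree of Fig.~\ref{fig:NN_MB}. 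The condition nodes $TOK, ROK, TD$ have $R_i=\emptyset$ and, by Lemma~\ref{lemma:backchained_operating}-style reasoning, empty operating regions, so they contribute no vertices at all.

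First I would verify that $V_{\star,\text{new}}'$ has no outgoing edges in $E_{\Gamma,\text{new}}$. By Assumption~\ref{ass:NN_MB_2}, $\Gamma_{\text{new}}$ agrees with $\Gamma_{\text{old}}$ outside the replaced subgraph, and by the hypothesis $\Omega_{\text{new}}=\Omega_{\text{old}}$ preceding the assumption, the edges that leave the new subgraph into the rest of the prepares graph are in bijection with those that left the old subgraph. Since $V_{\star,\text{old}}'$ already had no outgoing edges in $E_{\Gamma,\text{old}}$, this part reduces to a direct inspection of the local structure prescribed in Fig.~\ref{fig:NN_MB_convergence}.

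Next I would check Assumption~\ref{ass:non_infinite_cycles} for $V_{\star,\text{new}}'$. For vertices outside the replaced subgraph, the finite-time-leaving property transfers directly from the old analysis, since $R_i, S_i, F_i$ are unchanged by Assumption~\ref{ass:NN_MB_2}. For the new vertices, the two key inputs are that the state is not in $v_a(DD)$ or $v_b(RR)$ after some finite time. Any candidate infinite cycle created by inserting the DD and RR controllers must, by the shape of Fig.~\ref{fig:NN_MB_convergence}, pass through at least one of these two vertices, so no infinite cycle can arise in the new subgraph. With both assumptions in place, Theorem~\ref{theorem:general_convergence} delivers convergence of the new BT to the same sink set as before.

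The hardest step will be the cycle analysis: one must trace every possible transition sequence among $v_a(DD), v_b(DD), v_a(RR), v_b(RR)$, and the $MB$ vertices, and confirm that each closed loop is forced through either $v_a(DD)$ or $v_b(RR)$ rather than being routed around them via $MB$. This is where the precise topology of Fig.~\ref{fig:NN_MB_convergence} is essential, and where the hypotheses on $TOK$ and $ROK$ do the real work by blocking the DD branch once its time budget is exhausted or the risk is too high, thereby guaranteeing a forced exit through the bounded vertices into the $MB$ branch.
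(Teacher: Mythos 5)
Your proposal is correct and follows essentially the same route as the paper: both arguments reduce to noting that the operating regions and the prepares graph are unchanged outside the replaced subgraph (by Assumption~\ref{ass:NN_MB_2}), that the only new cycle is the $v_a(DD)$--$v_b(RR)$ loop which is exited in finite time by the same assumption, and then re-applying Theorem~\ref{theorem:general_convergence}. The only cosmetic difference is that you mention vertices $v_a(RR)$ and $v_b(DD)$ that do not actually occur in the subgraph fixed by Assumption~\ref{ass:NN_MB_2} (the DD controller has empty DOA and RR satisfies $v_b(RR)=\Omega_{RR}$), which makes your cycle check look harder than it is but does not affect correctness.
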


\begin{proof}
    We start by noting that $\Omega_i$ is the same before and after the change, since $I_i$ only depends on the Left-uncles of the subtree, see (\ref{eq:influence}), and $R_i, S_i, F_i$ are the same by Assumption \ref{ass:NN_MB_2}.
  
Thus, since the operating region of a node is the union of the operating regions of its children \cite[Lemma 3]{sprague2021continuous} we have
 that 
    $\Omega_{new} = \Omega_{DD} \cup \Omega_{RR} \cup \Omega_{MB}$.

By Assumption~\ref{ass:NN_MB_2} the prepares graph $\Gamma_{\text{new}}$ is identical, with the same transitions, except for the new subgraph.
The new subgraph has a loop, but again by Assumption~\ref{ass:NN_MB_2}, this loop will be left in finite time.

Thus, the argument that was made to prove that the original BT was convergent can be applied to the new BT.
\end{proof}

 \begin{figure}
        \centering
        \begin{tikzpicture}[main/.style = {draw}]
            \node[main, very thick] (DD) {$v_a(DD)$};
            \node[main, very thick] (RR) [right=1cm of DD] {$v_b(RR)$};
            \node[main, very thick] (MB) [right=1cm of RR] {$v_b(MB)$};
            \node[main, very thick] (NEXT) [below=1cm of RR] {$v_b(i, \text{next})$};
            \node[main, draw=none] (DDup) [above=1cm of DD] {};
            \node[main, draw=none] (RRup) [above=1cm of RR] {};
            \node[main, draw=none] (MBup) [above=1cm of MB] {};
            \node[main, draw=none] (NEXTdown) [below=1cm of NEXT] {};
            \draw[->] (DD) -- (RR);
            \draw[->] (RR) -- (MB);
            \draw[->] (DD) -- (NEXT);
            \draw[->] (RR) -- (NEXT);
            \draw[->] (MB) -- (NEXT);
            \draw[->] (RR) to [out=200, in=-20] (DD);
            \draw[->] (DD) to [out=30, in=150] (MB);
            \draw[->, dotted] (DDup) to (DD);
            \draw[->, dotted] (RRup) to (RR);
            \draw[->, dotted] (MBup) to (MB);
            \draw[->, dotted] (NEXT) to (NEXTdown);
            \node[main, very thick] (MB2) [left=1cm of DD] {$v_b(MB)$};
            \node[main, very thick] (NEXT2) [below=1cm of MB2] {$v_b(i, \text{next})$};
            \node[main, draw=none] (MBup2) [above=1cm of MB2] {};
            \node[main, draw=none] (NEXTdown2) [below=1cm of NEXT2] {};
            \draw[->, dotted] (MBup2) to (MB2);
            \draw[->] (MB2) -- (NEXT2);
            \draw[->, dotted] (NEXT2) to (NEXTdown2);
        \end{tikzpicture}
        \caption{Parts of the two prepares graphs $\Gamma$ 
        corresponding to the two BTs in Fig. \ref{fig:NN_MB}. Note that when Time OK returns Failure a transition away from  $v_a(DD)$ and $v_b(RR)$ has to happen.}
        \label{fig:NN_MB_convergence}
 \end{figure}
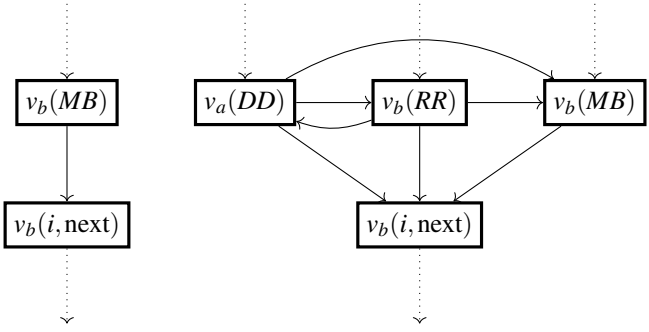

\begin{lemma}
    If Assumption~\ref{ass:NN_MB} holds, then Assumption~\ref{ass:NN_MB_2} also holds if the following is true.


The success region of MB is inside TD, 
the running region of DD is the state space, 
the success region of RR is inside ROK,  and 
the model-based controller does not fail inside $F_{TD}$
i.e.,
\begin{align}
&S_{MB} \subset S_{TD} \\ 
&R_{DD} = \mathbb{R}^n \\ 
&S_{RR} \subset S_{ROK} \\ 
&F_{TD} \cap F_{MB} = \emptyset.
\end{align}

The condition Time OK stops returning success after some time $t'$. Note that this condition can be created as a function of state by e.g., letting some component of the state $x \in \mathbb{R}^n$ have the dynamics $\dot x_k = 1,~x_k(0)=0$ and checking $x_k(t) < t'$.

The ROK condition is designed such that $v_b(RR)$ surrounds most of $v_a(DD)$, and the only neighbors, in the sense of Definition \ref{def:neighboring_sets}, to $v_a(DD)$ are 
$v_b(RR), v_b(MB)$ and $v_b(i,\text{next})$.

Finally, the MB and RR controllers are well-behaved, in the sense that 
$v_b(RR)=\Omega_{RR}$ and
$v_b(MB)=\Omega_{MB}$.

\end{lemma}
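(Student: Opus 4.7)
The plan is to verify the three claims embedded in Assumption~\ref{ass:NN_MB_2}: (i) equality of the metadata regions $R_i, S_i, F_i$ at the subtree root $i$ before and after the replacement; (ii) coincidence of $\Gamma_{\text{new}}$ with $\Gamma_{\text{old}}$ outside the replaced subgraph shown in Fig.~\ref{fig:NN_MB_convergence}; and (iii) finite-time exit of the state from $v_a(DD) \cup v_b(RR)$.

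For (i), I expand $R_i, S_i, F_i$ for the new subtree $Fal[TD, Seq[TOK, ROK, DD], Seq[TOK, RR], MB]$ by applying Lemmas~\ref{lemma:metadata-sequence} and~\ref{lemma:metadata-fallback} recursively. The hypothesis $R_{DD} = \mathbb{R}^n$ gives $S_{DD} = F_{DD} = \emptyset$, which annihilates the $Seq[TOK, ROK, DD]$ contribution to $S_i$ and simplifies its failure region to $F_{TOK} \cup (S_{TOK} \cap F_{ROK})$. Combined with $S_{RR} \subset S_{ROK}$, the cross-term involving $F_{ROK} \cap S_{RR}$ vanishes, and $S_{MB} \subset S_{TD}$ eliminates the $F_{TD} \cap S_{MB}$ contribution, so $S_{\text{new}} = S_{TD} = S_{\text{old}}$. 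Likewise, the $Fal$ failure formula reduces $F_{\text{new}}$ to a subset of $F_{TD} \cap F_{MB}$, which is empty by hypothesis; the same holds trivially for $F_{\text{old}}$. Since $R, S, F$ partition $\mathbb{R}^n$, equality of the success and failure regions forces $R_{\text{new}} = R_{\text{old}}$.

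For (ii), the influence region $I_i$ depends only on the left-uncles of $i$, which are not altered by the replacement, and by (\ref{eq:operating}) $\Omega_i$ depends only on $I_i$ and on the metadata at $i$, which is preserved by part (i); hence every node outside the replaced subtree retains its operating region and its edges in $\Gamma$. Inside the subtree, the operating regions $\Omega_{DD}, \Omega_{RR}, \Omega_{MB}$ can be computed directly from (\ref{eq:operating}) and (\ref{eq:influence}). The well-behavedness hypotheses $v_b(RR) = \Omega_{RR}$ and $v_b(MB) = \Omega_{MB}$ make $v_a(RR), v_c(RR), v_a(MB), v_c(MB)$ empty, so the only non-trivial internal vertices are $v_a(DD), v_b(RR), v_b(MB)$. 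The neighbor hypothesis on $v_a(DD)$ then pins down its outgoing edges to exactly those in Fig.~\ref{fig:NN_MB_convergence}, and the edges emanating from $v_b(RR)$ and $v_b(MB)$ follow from the invariance of $B_{RR}, B_{MB}$ via the relevant lines of (\ref{eq:prepares_edges}).

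For (iii), both $\Omega_{DD}$ and $\Omega_{RR}$ lie inside $S_{TOK}$ by construction, so the hypothesis that Time OK stops returning success after time $t'$ drives the state out of $v_a(DD) \cup v_b(RR)$ by $t'$ and prevents re-entry thereafter. The main obstacle is the combinatorial bookkeeping in (i) and (ii): checking that every nested Sequence/Fallback term collapses as expected and, more subtly, that the local edge structure around $v_a(DD)$ is completely captured by the neighbor and well-behavedness hypotheses with no spurious edges escaping into the rest of $\Gamma$. Once these are nailed down, Lemma~\ref{lemma:NN_MB} applies immediately and delivers convergence of the modified BT.
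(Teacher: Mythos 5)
Your proposal is correct and follows essentially the same route as the paper's proof: computing $S$, $F$ (and hence $R$) of the old and new subtrees via the Sequence/Fallback metadata lemmas and collapsing the terms with $R_{DD}=\mathbb{R}^n$, $S_{MB}\subset S_{TD}$, $S_{RR}\subset S_{ROK}$, and $F_{TD}\cap F_{MB}=\emptyset$; using the neighbor and well-behavedness hypotheses to pin down the local prepares-graph edges; and using the Time OK precondition to force finite-time exit from $v_a(DD)\cup v_b(RR)$. Your explicit observation that $v_b(RR)=\Omega_{RR}$ and $v_b(MB)=\Omega_{MB}$ empty out the corresponding $v_a$ and $v_c$ vertices is a slightly more careful bookkeeping of a step the paper leaves implicit, but the argument is the same.
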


\begin{proof}
We start by showing that the metadata regions are the same for the new and old subtrees.

The success region of the old BT is given by
\begin{equation}
    S_{\text{old}} = S_{TD} \cup \left(F_{TD} \cap S_{MB} \right).
\end{equation}
The assumption that $S_{MB} \subset S_{TD}$ and metadata regions are pairwise disjoint implies $S_{\text{MB}} \cap F_{\text{TD}} = \emptyset$, and thus $S_{\text{old}} = S_{\text{TD}}$.
The success region of the new BT is given by
    \begin{equation}
    \begin{aligned}
    S_{\text{new}} = \, & S_{TD} \\
    & \cup \left( F_{TD} \cap S_{DD} \cap S_{ROK} \cap S_{TOK} \right) \\
    & \cup \left( F_{TD} \cap S_{MB} \cap \left( F_{TOK} \cup \left( F_{RR} \cap S_{TOK} \right) \right) \right. \\
    & \quad \left. \cap \left( F_{TOK} \cup \left( F_{ROK} \cap S_{TOK} \right) \cup \left( F_{DD} \cap S_{ROK} \cap S_{TOK} \right) \right) \right) \\
    & \cup \left( F_{TD} \cap S_{RR} \cap S_{TOK} \cap \left( F_{TOK} \cup \left( F_{ROK} \cap S_{TOK} \right) \right. \right. \\
    & \quad \left. \left. \cup \left( F_{DD} \cap S_{ROK} \cap S_{TOK} \right) \right) \right).
    \end{aligned}
\end{equation}
The assumption $R_{DD} = \mathbb{R}^n$ and the definition that metadata regions are pairwise disjoint implies $S_{DD} = \emptyset$ and therefore the second unioned term can be eliminated.
The assumption $S_{MB} \subset S_{TD}$ implies that the third unioned term can be eliminated.
Now consider the distributed terms of the intersection $F_{TD} \cap S_{RR} \cap S_{TOK}$ in the fourth unioned term.
The definition that metadata regions are pairwise disjoint, so that $S_{TOK} \cap F_{TOK} = \emptyset$, implies that the first distributed term is empty.
The assumption $S_{RR} \subset S_{ROK}$ implies that $S_{ROK}$ can be eliminated in the third distributed term.
This assumption in addition to the definition that metadata regions are pairwise disjoint implies $S_{RR} \cap F_{ROK} = \emptyset$, so that the second distributed term is empty.
Lastly, $R_{DD} = \mathbb{R}^n$ implies $F_{DD} = \emptyset$, so we end up with $S_{\text{new}} =  S_{TD}$.
Thus $S_{\text{old}} = S_{\text{new}}$.

The failure region of the old BT is given by
\begin{equation}
    F_{\text{old}} = F_{TD} \cap F_{MB}.
\end{equation}
The assumption $F_{TD} \cap F_{MB}$ immediately implies $F_{\text{old}} = \emptyset$.
The failure region of the BT is given by
\begin{equation}
    \begin{aligned}
        F_{\text{new}} =& F_{MB} \cap F_{TD} \cap \left(F_{TOK} \cup \left(F_{RR} \cap S_{TOK}\right)\right) \\
        &\cap \left(F_{TOK} \cup \left(F_{ROK} \cap S_{TOK}\right) \cup \left(F_{DD} \cap S_{ROK} \cap S_{TOK}\right)\right).
    \end{aligned}
\end{equation}
The assumption $F_{TD} \cap F_{MB}$ immediately implies $F_{\text{new}} = \emptyset$, as before.
Thus $F_{\text{old}} = F_{\text{new}}$.

Lastly, the definition that metadata regions are pairwise disjoint immediately implies $R_{\text{new}} = R_{\text{old}} = \mathbb{R}^n \setminus (S_{\text{old}} \cup F_{\text{old}})$.
Thus, the metadata regions of the old and new BT are equal.

Now we show that the state is guaranteed not to be in either $v_a(DD)$ or $v_b(RR)$ after time $t'$. This is clear as Time OK returns failure of $t>t'$, and is a precondition of those corresponding actions in the BT of Fig. \ref{fig:NN_MB}.

Finally, we show that the prepares graph $\Gamma_{\text{new}}$ has the transitions shown in Fig. \ref{fig:NN_MB_convergence}.

Transitions can only happen between neighboring operating regions, see Definition \ref{def:neighboring_sets},
so even though we have no control of the vector field of the \emph{Data Driven controller} we know that the only neighbors to $v_a(DD)$ are 
$v_b(RR), v_b(MB), v_b(i,\text{next})$, so the transitions out of $v_a(DD)$ are as in the figure.

Furthermore, we know that 
$S_{RR} \subset S_{ROK}$ so the transition out of $v_b(RR)$ is to either $v_a(DD)$ if Time OK or to $v_b(MB)$ if Time is not ok.
Similarly, we know that 
$S_{MB} \subset S_{TD}$ so the transition out of $v_b(MB)$ is to $v_b(i, \text{next})$.
\end{proof}

\begin{remark}
    Let $i$ be the root index of the old and new BT, equivalently. 
    The new BT has the following regions of influence:
    \begin{align}
        I_{\text{DD}} =& I_i \cap F_{\text{TD}} \cap S_{\text{TOK}} \cap S_{\text{ROK}} \\
        I_{\text{RR}} =& I_i \cap F_{\text{TD}} \cap S_\text{TOK} \cap \nonumber \\ 
        &\left[F_\text{TOK} \cup \left(F_\text{ROK} \cap S_\text{TOK} \right) \cup  \left(F_\text{DD} \cap S_\text{TOK} \cap S_\text{ROK}\right)\right] \nonumber \\
        =& I_i \cap F_\text{TD} \cap S_\text{TOK} \cap \left[F_\text{TOK} \cup \left(F_\text{ROK} \cup F_\text{DD}\right)\right] \nonumber \\
        =& I_i \cap F_\text{TD} \cap S_\text{TOK} \cap \left(F_\text{ROK} \cup F_\text{DD}\right) \\
        I_{\text{MB}} =& I_i \cap F_\text{TD} \cap \left[F_\text{TOK} \cup \left(F_\text{RR} \cap S_\text{TOK}\right)\right] \cap \nonumber \\
        &\left[F_\text{TOK} \cup \left(F_\text{ROK} \cap S_\text{TOK} \right) \cup  \left(F_\text{DD} \cap S_\text{TOK} \cap S_\text{ROK}\right)\right] \nonumber \\
        =& I_i \cap F_\text{TD} \cap \left(F_\text{TOK} \cup F_\text{RR}\right) \cap \left[F_\text{TOK} \cup \left(F_\text{ROK} \cup F_\text{DD}\right)\right] 
    \end{align}
    where $I_i$ is given, and we have used $S_\text{TOK} \cap F_\text{TOK} = \emptyset$ and $S_\text{ROK} \cup F_\text{ROK} = \mathbb{R}^n$, i.e. metadata regions are pairwise disjoint and cover $\mathbb{R}^n$, and conditions have empty running regions.
    
    Given that $S_i = S_{TD}$ and $F_i=\emptyset$,
    this makes the corresponding operating regions
    \begin{align}
        \Omega_{\text{DD}} &=I_{\text{DD}} \cap R_{\text{DD}}\\
        \Omega_{\text{RR}} &=I_{\text{RR}} \cap R_{\text{RR}}\\
        \Omega_{\text{MB}} &=I_{\text{MB}} \cap R_{\text{MB}}.
    \end{align}

\end{remark}

If we want to introduce hysteresis in the switching between the \emph{Data Driven controller} and \emph{Reduce Risk}, to avoid chattering and provide a better starting state for the \emph{Data Driven controller}, we can make use of the following result
\begin{lemma}
    If we replace the \emph{Risk OK} condition with the conjunction (\emph{Risk OK} and \emph{Risk OK for T seconds}). The result above still holds.
\end{lemma}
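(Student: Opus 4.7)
The plan is to argue that replacing the condition \emph{Risk OK} by the conjunction (\emph{Risk OK} $\land$ \emph{Risk OK for T seconds}) preserves all the structural hypotheses required by the preceding lemma, so the same convergence proof carries through essentially unchanged. First, I would observe that the conjunction of two conditions is itself a condition (its running region is empty), so if we let $ROK'$ denote the new combined condition, the definitional scaffolding from Section~\ref{sec:preliminaries} remains valid. The relevant metadata regions combine as $S_{ROK'} = S_{ROK} \cap S_{ROK_T}$, $F_{ROK'} = F_{ROK} \cup F_{ROK_T}$, and $R_{ROK'} = \emptyset$, where $ROK_T$ denotes the temporal \emph{Risk OK for T seconds} condition. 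As in the treatment of \emph{Time OK} in the preceding lemma, $ROK_T$ can be made a state-based condition by augmenting the state with an auxiliary timer that resets whenever ROK fails and that tracks how long ROK has continuously held.

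Next I would re-verify the four structural hypotheses of the previous lemma. The three hypotheses $S_{MB}\subseteq S_{TD}$, $R_{DD}=\mathbb{R}^n$, and $F_{TD}\cap F_{MB}=\emptyset$ do not mention ROK and are therefore unchanged. The remaining hypothesis $S_{RR}\subseteq S_{ROK}$ is strengthened to $S_{RR}\subseteq S_{ROK}\cap S_{ROK_T}$, which is precisely the intended semantics of introducing hysteresis: the \emph{Reduce Risk} controller only declares success once the timer has elapsed. Any additional time spent in $v_b(RR)$ because of this strengthening is bounded above by $T$, and in any case by the $t'$ bound associated with \emph{Time OK}, so no new way of remaining in $v_b(RR)$ indefinitely is created.

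Finally, I would check that Assumption~\ref{ass:NN_MB_2} about the prepares graph still holds. The only structural change to $\Gamma_{\text{new}}$ is that the transition out of $v_b(RR)$ into $v_a(DD)$ is delayed by at most an additional $T$ seconds; no new vertices are introduced and no previously absent edges appear, because strengthening a condition can only shrink its success region and enlarge its failure region, which translates to possibly fewer, never more, transitions in the prepares graph. The essential finite-time escape property from $v_a(DD)\cup v_b(RR)$ is preserved because \emph{Time OK} still forces the system into $v_b(MB)$ in bounded time, regardless of the hysteresis. The main obstacle will be making precise how the temporal predicate \emph{Risk OK for T seconds} enters the state-space formalism of Definition~\ref{def:behavior}; this is handled exactly as for \emph{Time OK}, by state-augmentation with a timer, and once this is in place the rest of the argument reduces to routine bookkeeping that mirrors the preceding proof and permits an appeal to Lemma~\ref{lemma:NN_MB}.
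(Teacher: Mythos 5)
Your proposal is correct and follows essentially the same route as the paper: augment the state with a timer so that \emph{Risk OK for T seconds} becomes a state-based condition (the paper's Remark~\ref{rem:Tsec}), and then observe that the replacement only shrinks $\Omega_{DD}$ and enlarges $\Omega_{RR}$ while leaving their union and every other operating region unchanged, so the hypotheses of Lemma~\ref{lemma:NN_MB} are preserved. The paper's own proof is terser but rests on exactly these two points; your additional bookkeeping of the metadata regions and the prepares-graph edges is consistent with it.
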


\begin{proof}
    Without loss of generality, we assume that the state $x\in \mathbb{R}^n$ includes a component that enables us to evaluate the condition \emph{Risk OK for T seconds}, see  Remark~\ref{rem:Tsec} below.
    This change will only affect
    $\Omega_{DD}$ and $\Omega_{RR}$, making the former smaller and the latter bigger, but leaving all other operating regions, as well as $\Omega_{DD} \cup \Omega_{RR}$ unchanged.

Given this observation, it is clear that the requirements in Lemma~\ref{lemma:NN_MB} still hold.
\end{proof}

\begin{remark}
\label{rem:Tsec}
    In order to create an approximation of the condition: 
    \emph{Risk OK for T seconds} that is accessible by a simple state check, we can extend the state with an extra component as follows.
    Then $x_{n+1}\geq 1$ approximately corresponds to \emph{Risk OK for T seconds}.  
$$
  x_{n+1}(t+1)= 
\begin{cases}
    x_{n+1}(t) + \Delta t/T,& \text{if } x \in S_{\text{ROK}} \land x_{n+1}<1\\
    0,          & \text{else}
\end{cases}
$$
For the continuous-time case, we get an approximation through
$$
  \dot x_{n+1}= 
\begin{cases}
    (1-x_{n+1})e^T& \text{if } x_{n+1} \in S_{\text{ROK}} \\
    -Kx_{n+1}          & \text{else}
\end{cases}
$$
for some large constant $K$.
\end{remark}

    \section{Conclusions} 
    \label{sec:conclusions}
    In this paper, we presented a general convergence theorem (Theorem \ref{theorem:general_convergence}) that goes beyond earlier results in that it allows cycles in the execution of sub-BTs and the execution of sub-BTs outside their DOA. 
    We also showed how this theorem generalizes earlier proofs in \cite{colledanchise2016behavior,ogren2020convergence,sprague2021continuous}. In detail, 
        \cite[Lemma 2, p.7]{colledanchise2016behavior} is covered by Corollary \ref{cor:sequence},
        \cite[Lemma 3, p.8]{colledanchise2016behavior} is covered by Corollary \ref{cor:implicit_sequence},
        \cite[Theorem 4, p.6]{sprague2021continuous} is covered by Corollary \ref{cor:sprague2021continuous}, and
        \cite[Theorem 1, p.7]{ogren2020convergence} is covered by Corollary \ref{cor:ogren2020convergence}.
        Finally, the main theorem also captures the result of combining data-driven and model-based controllers without sacrificing performance guarantees, from \cite{sprague2022adding}.
    Thus, this paper provides a unified and extended analysis of behavior tree convergence.
    
    \section*{Acknowledgments}
    This  work  was  supported  by SSF  through  the  Swedish  Maritime Robotics Centre (SMaRC) (IRC15-0046), and FOI project 7135.
    
    \bibliographystyle{IEEEtran}
    \bibliography{root.bib}

\end{document}